\documentclass[pdflatex,sn-mathphys-num]{sn-jnl}

\usepackage{graphicx}%
\usepackage{multirow}%
\usepackage{amsmath,amssymb,amsfonts}%
\usepackage{amsthm}%
\usepackage{mathrsfs}%
\usepackage[title]{appendix}%
\usepackage{xcolor}%
\usepackage{textcomp}%
\usepackage{manyfoot}%
\usepackage{booktabs}%
\usepackage{algorithm}%
\usepackage{algorithmicx}%
\usepackage{algpseudocode}%
\usepackage{listings}%
\usepackage{booktabs}
\usepackage{array}
\usepackage{graphicx}
\usepackage{verbatim}
\usepackage{diagbox}

\usepackage{chngcntr}

\makeatletter
\let\c@figure\c@table
\let\thefigure\thetable
\makeatother

\theoremstyle{thmstyleone}
\newtheorem{theorem}{Theorem}

\theoremstyle{thmstyletwo}

\theoremstyle{thmstylethree}

\begin{document}

\title[Penalized likelihood estimation of probability density functions using compositional splines]{Penalized likelihood estimation of probability density functions using compositional splines}

\author*[1]{\fnm{Stanislav} \sur{\v{S}kor\v{n}a}}\email{stanislav.skorna01@upol.cz}

\author[1]{\fnm{Jitka} \sur{Machalov\'a}}\email{jitka.machalova@upol.cz}

\affil*[1]{\orgdiv{Department of Mathematical Analysis and Applications of Mathematics, Faculty of Science}, \orgname{Palack\'y University Olomouc}, \orgaddress{\street{17. listopadu 12}, \city{Olomouc}, \postcode{77900}, \state{Czech Republic}}}

\abstract{Probability density functions are commonly estimated through preliminary smoothing or aggregation procedures, e.g., histograms or kernel density estimation, before subsequent functional representation and functional data analyses. Such a two-stage approach can lead to additional approximation bias and weaken the direct connection between the observed data and the underlying distributional structure. In this paper, we propose a penalized maximum likelihood framework for direct estimation of probability density functions from raw observations within the framework of Bayes Hilbert spaces while preserving the compositional geometry of densities. The methodology is based on the centered log-ratio (clr) transformation, an isometric isomorphism between Bayes Hilbert spaces and the standard Lebesgue space of square integrable functions with zero integral, enabling efficient spline representations. The clr transformed densities are represented using $Z\!B$-spline basis functions, while their smoothness is controlled through quadratic penalties imposed on the spline coefficients. The proposed framework is developed for univariate and bivariate densities. In the latter case, it naturally incorporates the orthogonal decomposition into independent and interactive parts together with the corresponding geometric marginals. 
The performance is evaluated in a simulation study involving multiple complex scenarios and compared with kernel smoothing. Finally, the applicability of the framework is illustrated using empirical geochemical data.}

\keywords{Probability density functions, Bayes Hilbert space, Functional data, Penalized maximum likelihood, Spline approximation}

\maketitle

\section{Introduction}\label{Introduction}
Samples of probability density functions arise naturally in modern applications as a result of large-scale data collection and their aggregation at the level of distributions rather than individual observations. Typical examples arise in studies of age \cite{hron16, Eckardt2024}, income \cite{Maier2021, Wang2026}, particle size \cite{Fayad2026}, and anthropometric characteristics \cite{hron23}, where each observational unit is represented by a sample of measurements over a common compact domain. In such a setting, the primary object of interest is not the individual observations themselves but the underlying distributional structure they form.

To enable further statistical analysis, typically within the framework of functional data analysis \cite{ramsay2005, kokoszka17}, these samples must be represented as functional objects. In practical applications, however, probability densities are typically not observed directly but are first estimated from raw data using preliminary procedures such as histograms \cite{machalova2021, hron23, Skorna2026} or kernel density estimators \cite{Grygar2024, Grygar2026}. Subsequently, basis expansions are commonly employed to obtain smooth functional representations, with $B$-splines \cite{deboor1978, dierckx1993} being a particularly popular choice due to their flexibility, local support, and computational efficiency.

Accordingly, a variety of spline-based methods have been developed for the estimation of probability density functions, differing in their treatment of domain structure, dependence patterns, and smoothness constraints. For instance, \cite{Mohaoui2025} employs variation-diminishing spline approximation to obtain shape-preserving density estimates, while \cite{Das2025} proposes a likelihood-based approach for estimating bivariate densities on irregular spatial domains using penalized splines. In a broader context, densities have also been treated as functional data objects \cite{Petersen2022}, enabling the application of standard functional data analysis tools through suitable transformations and basis representations. These approximations, however, remain based on preliminary density estimates obtained prior to the spline representation.

Nevertheless, most existing approaches rely on classical representations of densities as elements of standard function spaces and thus neglect their intrinsic compositional structure, where only relative information is relevant and densities are defined up to a multiplicative constant. Consequently, standard operations may not adequately reflect the geometric structure of probability densities, particularly with respect to positivity and relative scale. A suitable framework that naturally respects these properties is provided by Bayes Hilbert spaces \cite{egozcue06, boogaart14}, where probability densities are treated as compositional objects endowed with a Hilbert space structure. This framework has already been applied in a variety of recent developments, e.g., in regression modeling of densities in Bayes Hilbert space \cite{Maier2021}, where densities are treated as functional responses.

Within this framework, the centered log-ratio (clr) transformation establishes an isometric isomorphism between the Bayes space and a subspace consisting of square-integrable functions having a zero integral, which allows the use of standard methods from functional analysis. Early spline representations of clr-transformed densities were based on classical $B$-splines, together with additional constraints on the spline coefficients to enforce the zero-integral property \cite{machalova16}. This constrained formulation was subsequently extended to the bivariate setting \cite{hron23}, where coefficient constraints were likewise required to enforce the zero-integral property. To eliminate the need for such constraints, $Z\!B$-splines were introduced in \cite{machalova2021}. By construction, each $Z\!B$-spline has a zero integral; therefore, no additional restrictions on the spline coefficients are needed. The construction was later generalized to the bivariate setting \cite{Skorna2026}, providing a flexible spline representation for bivariate clr-transformed densities. Maximum penalized likelihood combined with $Z\!B$-spline representation was subsequently employed in \cite{Mondon2026} for univariate density estimation as a preprocessing step for outlier detection. Nevertheless, a unified framework that controls smoothness through penalties on the spline coefficients and extends the likelihood-based estimation to bivariate densities is still lacking.

To address this gap, we propose a penalized maximum likelihood framework for estimating probability density functions directly from raw observations within the Bayes Hilbert space setting. Rather than first estimating a density and subsequently approximating it by splines, the proposed methodology estimates the spline coefficients themselves through a penalized likelihood functional. The resulting framework combines likelihood-based estimation, spline representation, and compositional geometry within a unified model while naturally preserving the structural properties of probability densities.

The approach is developed for both univariate and bivariate densities, where in the latter case, the model naturally reflects the orthogonal decomposition into independent and interactive parts \cite{genest22, hron23, Skorna2026}. The resulting optimization problem is solved using efficient gradient-based methods.

The paper is organized as follows. In the next section, the main concepts of Bayes Hilbert spaces are introduced. Section \ref{Splines} summarizes the univariate and bivariate $Z\!B$-spline representation in zero-integral spaces. Section \ref{PML} derives the formulation of penalized maximum likelihood. Section \ref{rhoselection} describes the cross-validation procedure used for the selection of penalization parameters. Section \ref{Simulation} investigates the performance of the proposed methodology in a simulation study, and Section \ref{Application} demonstrates the applicability of the method to real geochemical data. Finally, Section \ref{Conclusion} concludes with a discussion and outlook. The proofs of the theoretical results are provided in Appendix~\ref{app:proofs}, while Appendices~\ref{app:simulation} and \ref{app:application} contain supplementary materials for the simulation study and the empirical application, respectively.

\section{Bayes Hilbert spaces} \label{Bayes}
 
Bayes Hilbert spaces provide a functional framework for the analysis of probability density functions by equipping them with a Hilbert space structure while preserving their compositional structure \cite{egozcue06, boogaart14}. The fundamental idea is that probability densities are considered up to multiplication by a positive constant, i.e., as equivalence classes of positive functions differing only by a scaling factor. This reflects the fact that the relevant information is contained in the shape of the density rather than in its absolute magnitude.

\subsection{Univariate case}
In the univariate setting, let $I = [a,b]$ be a compact interval and consider the corresponding Bayes space ${\cal B}^{2}(I)$, consisting of equivalence classes of strictly positive densities whose logarithms are square-integrable \cite{egozcue06, boogaart14}. For densities \(f,g\in{\cal B}^{2}(I)\) and \(\alpha\in\mathbb{R}\), the operations of perturbation (addition) and powering (scalar multiplication), which endow \({\cal B}^{2}(I)\) with a vector space structure, are defined as
$$
(f\oplus g)(x)=_{{\cal B}^{2}(I)} f(x)\cdot g(x),
\qquad
(\alpha\odot f)(x)=_{{\cal B}^{2}(I)} f^{\alpha}(x),
$$
where \(=_{{\cal B}^{2}(I)}\) denotes equality up to multiplication by a positive constant. Equipped with this vector space structure, $\mathcal{B}^2(I)$ becomes a Hilbert space with an inner product
\begin{equation*}
\langle f, g \rangle_{\mathcal{B}^2(I)} =
\frac{1}{2\,(b-a)}
\int_I \int_I
\ln \frac{f(x)}{f(t)} \,
\ln \frac{g(x)}{g(t)}
\, \mbox{d}x\, \mbox{d}t.
\end{equation*}
The corresponding norm and distance are induced in the standard way as
\begin{equation}\label{B_norm}
\|f\|_{\mathcal{B}^2(I)}=\sqrt{\langle f, f\rangle_{\mathcal{B}^2(I)}},\quad d_{\mathcal{B}^2(I)}(f,g)=\|f\ominus g\|_{\mathcal{B}^2(I)},
\end{equation}
where $f\ominus g=f\oplus[(-1)\odot g]$.

For practical purposes, it is convenient to work with the centered log-ratio (clr) transformation, an isometric isomorphism between ${\cal B}^{2}(I)$ and the space $L_0^2(I)$, the space of square-integrable functions with a zero integral. Consequently, every clr transformed probability density function satisfies this constraint. The clr transformation is defined as
$$
\mbox{clr}(f)(x) \equiv f^c(x) = \ln f(x) - \frac{1}{(b-a)}\int_I \ln f(t)\,\mathrm{d}t
$$
and preserves both linear structure and inner products. Moreover, it holds that
$$
\mbox{clr}(f\,\oplus\,g) = \mbox{clr}(f) + \mbox{clr}(g)\qquad \mbox{clr}(\alpha\,\odot\,f) = \alpha\cdot \mbox{clr}(f).
$$
The clr transformation thus provides an equivalent representation of densities in the linear space $L_0^2(I)$. Since the transformation is bijective, every clr-transformed density can be mapped back to the original Bayes space $\mathcal{B}^2(I)$ using the inverse clr transformation
\begin{equation}\label{B_density}
f(x) \equiv \mbox{clr}^{-1}(f^c)(x) =_{\mathcal{B}^2(I)} \exp\big(f^c(x)\big).
\end{equation}
See \cite{boogaart14, egozcue06} for further details.

\subsection{Bivariate case}
The framework naturally extends to the bivariate setting by considering a product domain $\Omega = [a,b] \times [c,d] \subset \mathbb{R}^2$ and the corresponding Bayes space ${\cal B}^{2}(\Omega)$ of equivalence classes of strictly positive bivariate densities that are square-integrable in the logarithmic sense \cite{hron23, genest22}. For densities $f,g\in\mathcal{B}^2(\Omega)$ and $\alpha\in\mathbb{R}$, the operations of perturbation  and powering are defined analogously as in the univariate case
$$
(f\oplus g)(x,y) =_{\mathcal{B}^2(\Omega)} f(x,y) \cdot g(x,y),\qquad (\alpha \odot f)(x,y) =_{\mathcal{B}^2(\Omega)} f^{\alpha}(x,y).
$$
Equipped with these operations, ${\mathcal B}^{2}(\Omega)$ forms a Hilbert space with an inner product
\begin{equation*}
\left\langle f, g\right\rangle_{\mathcal{B}^2(\Omega)} = \frac{1}{2\,(d-c)\,(b-a)} \iint_{\Omega} \iint_{\Omega} \ln \frac{f(x,y)}{f(t,s)}\, \ln \frac{g(x,y)}{g(t,s)} \, \mbox{d}x\,\mbox{d}y\,\mbox{d}t\,\mbox{d}s.
\end{equation*}
The corresponding norm and distance are defined as in \eqref{B_norm}. The clr transformation extends to the bivariate setting as
$$
\mbox{clr}(f)(x,y) \equiv f^c(x,y) = \ln f(x,y) - \frac{1}{(b-a)(d-c)} \iint_{\Omega} \ln f(t,s)\,\mathrm{d}t\,\mathrm{d}s,
$$
and constitutes an isometric isomorphism between ${\cal B}^{2}(\Omega)$ and $L_0^2(\Omega)$, the space of square-integrable functions with zero integral over $\Omega$.

An important advantage of the bivariate Bayes space framework is that it enables a separation of the marginal structure from the dependence structure of a density. This is achieved through an orthogonal decomposition based on geometric marginals \cite{hron23, genest22}. Specifically, for $f\in{\mathcal B}^{2}(\Omega)$, the geometric marginals are defined as
\begin{align*}
f_{1}(x) &=_{{\cal B}^{2}(\Omega)} \exp\left\{\frac{1}{d-c}\int_c^d\ln f(x,t)\,\mathrm{d}t\right\}, \\
f_{2}(y) &=_{{\cal B}^{2}(\Omega)}
\exp\left\{\frac{1}{b-a}\int_a^b\ln f(t,y)\,\mathrm{d}t \right\}.
\end{align*}

These functions correspond to orthogonal projections of $f$ onto subspaces of functions depending only on $x$ and $y$, respectively. Based on these projections, the density $f$ admits a unique orthogonal decomposition into an independent part and an interactive part,
$$
f = f_{ind} \oplus f_{int}, \qquad
f_{ind} = f_1 \oplus f_2, \qquad
f_{int} = f \ominus f_{ind},
$$
where the independent component captures the marginal structure, and the interactive component describes the remaining dependence. In the clr representation, this decomposition takes a particularly simple additive form
$$
f^c = f_{int}^c + f_1^c + f_2^c,
$$
with
$$
f_1^c(x) = \frac{1}{d-c}\int_c^d f^c(x,t)\,\mathrm{d}t, \qquad
f_2^c(y) = \frac{1}{b-a}\int_a^b f^c(t,y)\,\mathrm{d}t.
$$
This decomposition provides an interpretable separation between marginal structure and dependence, which is particularly useful in statistical modeling of multivariate densities.

Finally, the clr transformed densities $f^c(x,y)$ can be mapped back to the Bayes space using the inverse clr transformation and are represented as
\begin{equation}\label{B_density_2D}
f(x,y) \equiv \mathrm{clr}^{-1}(f^c)(x,y) =_{\mathcal{B}^2(\Omega)} \exp(f^c(x,y)),
\end{equation}
as well as their corresponding decomposition parts.

\section{Spline representation in zero-integral spaces}\label{Splines}

In this section, we introduce univariate and bivariate $Z\!B$-spline representations in the zero-integral spaces $L_0^2(I)$ and $L_0^2(\Omega)$, respectively. In Section~\ref{PML}, these spline representations will be used to approximate the clr transformations of unknown probability densities within a penalized likelihood framework. We first present the univariate construction and then extend it to the bivariate setting.

\subsection{Univariate $Z\!B$-spline representation}
\label{univariateZB}

In this subsection, we briefly review the construction of the univariate $Z\!B$-spline representation following \cite{machalova2021}.
Let $I = [a,b]\subset\mathbb{R}$ be a compact interval and let 
\begin{equation*}    
\Delta\lambda\, = \, \{\lambda_i\}_{i=0}^{g+1}, \quad  a=\lambda_{0}<\lambda_{1}<\ldots<\lambda_{g}<\lambda_{g+1}=b
\end{equation*}
be a sequence of knots on $I$. In the following, we will consider the corresponding extended sequence of knots $\Delta\Lambda=\{\lambda_i\}_{i=-k}^{g+k+1}$ with the coincident boundary knots
\begin{equation}\label{knotsx_ext}
\lambda_{-k}=\cdots =\lambda_{0}=a <\lambda_{1}<\ldots<\lambda_{g}< b =\lambda_{g+1} = \cdots =\lambda_{g+k+1}.
\end{equation}
Subsequently, we denote by $\mathcal{Z}_k^{\Delta\lambda}(I)$ the vector space of splines $s_k(x)$ of degree $k\in\mathbb{N}$ with knots $\Delta\lambda$ having a zero integral
$$
\int_{I} s_k(x)\,\mbox{d}x = 0.
$$
Therefore, $\mathcal{Z}_{k}^{\Delta\lambda}(I)\subset L^2_0(I)$. It is shown in 
\cite{machalova2021} that $\dim\left(\mathcal{Z}_k^{\Delta\lambda}(I)\right) = g+k$ and any spline $s_k(x)\in\mathcal{Z}_k^{\Delta\lambda}(I)$ can be represented as a linear combination of $Z\!B$-spline  basis functions, i.e.
\begin{equation}\label{sk}
s_k(x) = \sum_{i=-k}^{g-1}\,z_iZ_i^{k+1}(x),
\end{equation}
where $z_i$ are its coefficients and $Z_i^{k+1}(x)$, $i=-k,\dots,g-1$, are $Z\!B$-splines defined as
\begin{equation}\label{ZB-spline}
Z_i^{k+1}(x) = \frac{\mbox{d}}{\mbox{d}x} B_i^{k+2}(x).
\end{equation}
Here $B_i^{k+2}(x)$ stands for a classical $B$-spline of degree $k+1$ (order $k+2$), see \cite{dierckx1993}. The coincident boundary knots in~\eqref{knotsx_ext} are essential to ensure that all $Z\!B$-splines satisfy 
$$
\int_{I} Z_i^{k+1}(x)\,\mbox{d}x = 0
$$
for $i=-k,\ldots,g-1$. See \cite{machalova2021} for further details. Therefore, every spline $s_k(x)\in\mathcal{Z}_k^{\Delta\lambda}(I)$ in the form \eqref{sk} automatically satisfies the zero-integral constraint. For notational convenience, let $\textbf{z} = \left(z_{-k},\ldots,z_{g-1}\right)^{\top}$ and 
\begin{equation}\label{Zk_matrix}
\textbf{Z}_{k+1}(x) = \left(Z_{-k}^{k+1}(x),\ldots, Z_{g-1}^{k+1}(x) \right).
\end{equation} 
With this notation, representation \eqref{sk} can be written in matrix form as
\begin{equation}\label{sk_matrix}
s_k(x) = \textbf{Z}_{k+1}(x)\,\textbf{z}.
\end{equation}
Since $s_k\in \mathcal{Z}_k^{\Delta\lambda}(I)\subset L_0^2(I)$, it can be mapped to the Bayes space $\mathcal{B}^2(I)$ by the inverse clr transformation \eqref{B_density}. The corresponding compositional spline $\xi_k\in \mathcal{B}^2(I)$ is therefore defined as
\begin{equation*}
\xi_k(x)=\operatorname{clr}^{-1}\bigl(s_k\bigr)(x)
=\operatorname{clr}^{-1}\bigl(\mathbf{Z}_{k+1}(x)\mathbf{z}\bigr).
\end{equation*}
Thus, the coefficient vector $\mathbf{z}$ provides a finite-dimensional parameterization of the compositional spline $\xi_k$ in $\mathcal{B}^2(I)$.

\subsection{Bivariate $Z\!B$-spline representation}\label{bivariateZB}
In this subsection, we summarize the extension of the $Z\!B$-spline representation to the bivariate setting,  following \cite{Skorna2026}.
Let $\Omega = [a,b]\times[c,d]$ be a compact subset of $\mathbb{R}^2$, and let $\mathcal{Z}_{kl}^{\Delta\lambda,\Delta\mu}(\Omega)$ denote the vector space of splines $s_{kl}(x,y)$ of degree $k\in\mathbb{N}$ with an extended sequence of knots \eqref{knotsx_ext} in $x$ and of degree $l\in\mathbb{N}$ with a corresponding extended sequence of knots $\Delta M=\{\mu_j\}_{j=-l}^{h+l+1}$
$$
\mu_{-l}=\cdots=\mu_0=c <\mu_1<\cdots<\mu_h <d= \mu_{h+1}=\cdots=\mu_{h+l+1}
$$
in $y$. It was shown in \cite{Skorna2026}, that
$\dim({\cal Z}_{kl}^{\Delta\lambda,\Delta\mu}(\Omega)) = (g+k+1)(h+l+1)-1.$
Consequently, every spline $s_{kl}(x,y)\in{\cal Z}^{\Delta\lambda,\,\Delta\mu}_{kl}(\Omega)$ has a unique representation
\begin{equation}\label{ZBrepr}
s_{kl}(x,y) \, = \, \sum_{i=-k}^{g-1}\sum_{j=-l}^{h-1}z_{ij}\,Z_{ij}^{k+1,l+1}(x,y) + \sum_{i=-k}^{g-1}v_{i}\,Z_{i}^{k+1}(x,y) + \sum_{j=-l}^{h-1}\,u_{j}\,Z_{j}^{l+1}(x,y),
\end{equation}
where $z_{ij}$, $v_{i}$, $u_{j}$, $i=-k,\dots,g-1,\ j=-l,\dots,h-1$ are the corresponding spline coefficients and
\begin{equation*}
\begin{aligned}
Z_{ij}^{k+1,l+1}(x,y)
&= Z_i^{k+1}(x)\,Z_j^{l+1}(y),\\
Z_i^{k+1}(x,y)
&= Z_i^{k+1}(x),\\
Z_j^{l+1}(x,y)
&= Z_j^{l+1}(y).
\end{aligned}
\end{equation*}
which form the corresponding basis functions. Here, $Z_i^{k+1}(x)$ denotes the univariate $Z\!B$-splines defined in \eqref{ZB-spline}, while $Z_j^{l+1}(y)$ is constructed analogously in the $y$-direction. To simplify the notation, we can express the spline $s_{kl}(x,y)$ from \eqref{ZBrepr} in matrix form as
\begin{align}\label{ZBrepr_matrix}
s_{kl}(x,y) &= \left(\textbf{Z}_{k+1}(x),\ 1\right)
\begin{pmatrix}
\textbf{Z} & \textbf{v}\\
\textbf{u}^{\top} & 0
\end{pmatrix}
\begin{pmatrix}
\textbf{Z}_{l+1}^{\top}(y)\\
1
\end{pmatrix}\\ \nonumber
&= \textbf{Z}_{k+1}(x)\textbf{Z}\textbf{Z}^{\top}_{l+1}(y) + \textbf{Z}_{k+1}(x)\textbf{v} + \textbf{u}^{\top}\textbf{Z}^{\top}_{l+1}(y)
\end{align}
with $Z\!B$-spline coefficients $\mathbf{Z}=\left(z_{ij}\right)_{i=-k, j=-l}^{g-1, h-1}$,  $\mathbf{v}=(v_{-k},\dots,v_{g-1})^{\top}$ and $\mathbf{u}=(u_{-l},\dots,u_{h-1})^{\top}$ and vectors of univariate $Z\!B$-spline basis $\textbf{Z}_{k+1}(x)$ defined in \eqref{Zk_matrix} and $\textbf{Z}_{l+1}(y) = \left(Z_{-l}^{l+1}(y),\ldots, Z_{h-1}^{l+1}(y) \right)$. Moreover, using the column-wise vectorization operator $\mbox{cs}(\cdot)$ together with identities $\mbox{cs}(\textbf{AXB})= (\textbf{B}^{\top}\otimes \textbf{A})\,\mbox{cs}(\textbf{X})$ and $\textbf{u}^{\top}\textbf{Z}^{\top}_{l+1}(y) = \textbf{Z}_{l+1}(y)\textbf{u}$, the spline $s_{kl}(x,y)$ can be expressed as
\begin{equation}\label{ZB_repr_short}
s_{kl}(x,y) = \boldsymbol{\Phi}(x,y)\,\boldsymbol{\theta},
\end{equation}
where $\boldsymbol{\theta}=(\mbox{cs}(\textbf{Z})^{\top},\textbf{v}^{\top},\textbf{u}^{\top})^{\top}\in\mathbb{R}^{(g+k+1)(h+l+1)-1}$ is the coefficient vector and $\boldsymbol{\Phi}(x,y)=\left(\textbf{Z}_{l+1}(y)\otimes\textbf{Z}_{k+1}(x),\ \textbf{Z}_{k+1}(x),\ \textbf{Z}_{l+1}(y)\right)$ is the vector of basis $Z\!B$-splines.

The representation \eqref{ZBrepr} naturally reflects the orthogonal decomposition introduced in Section~\ref{Bayes}. Specifically, the individual terms in \eqref{ZBrepr} determine an orthogonal decomposition of the spline $s_{kl}$ into its independent and interactive parts. The terms 
$$
s_k^1(x,y) = 
\sum_{i=-k}^{g-1}v_{i}\,Z_{i}^{k+1}(x,y),\quad
s_l^2(x,y) = 
\sum_{j=-l}^{h-1}u_{j}\,Z_{j}^{l+1}(x,y),
$$
represent the corresponding geometric marginal parts and together form the independent part
$$
s_{kl}^{\mathrm{ind}}(x,y)=s_k^1(x,y)+s_l^2(x,y).
$$
The remaining term
$$
s_{kl}^{\mathrm{int}}(x,y) \, = \, \sum_{i=-k}^{g-1}\sum_{j=-l}^{h-1}z_{ij}\,Z_{ij}^{k+1,l+1}(x,y)
$$
represents the interactive component. Consequently,
$$
s_{kl}(x,y) = s_{kl}^{\mathrm{ind}}(x,y)
+ s_{kl}^{\mathrm{int}}(x,y).
$$
See \cite{Skorna2026} for further details. 
Since $s_{kl}\in Z_{kl}^{\Delta\lambda,\Delta\mu}(\Omega) \subset L_0^2(\Omega)$, it can be mapped to the Bayes space $\mathcal{B}^2(\Omega)$ by the inverse clr transformation \eqref{B_density_2D}. The corresponding bivariate compositional spline $\xi_{kl}\in \mathcal{B}^2(\Omega)$ is therefore defined as 
\begin{equation*}
\xi_{kl}(x,y) = \operatorname{clr}^{-1}\bigl(s_{kl}\bigr)(x,y).
\end{equation*}

In view of representations~\eqref{ZBrepr}, \eqref{ZBrepr_matrix} and \eqref{ZB_repr_short}, the coefficient matrix $\mathbf{Z}$ (or coefficient vector $\mbox{cs}(\textbf{Z})$, respectively) and the coefficient vectors $\mathbf{u}$ and $\mathbf{v}$ uniquely determine both the spline $s_{kl}$ and the corresponding compositional spline $\xi_{kl}$.

\section{Penalized maximum likelihood}\label{PML}

This section introduces the main methodological contribution of the paper, namely a penalized maximum likelihood approach for estimating probability densities directly from raw observations in Bayes Hilbert spaces. In contrast to the commonly used two-step approach \cite{machalova2021, Skorna2026}, in which a histogram or kernel density estimate is first constructed and subsequently approximated by splines, the proposed method estimates the $Z\!B$-spline coefficients directly from the raw observations. Smoothness is controlled through quadratic penalties imposed on these coefficients.

The univariate formulation is presented first, followed by its extension to the bivariate setting, where smoothness is controlled separately in the two coordinate directions.

\subsection{Penalized maximum likelihood for univariate densities}\label{PML1D}

Let $x_n\in I\subset \mathbb{R}$, $n=1,\dots N$ be independent observations from an unknown density $p\in\mathcal{B}^2(I)$. Following Section~\ref{univariateZB}, its clr transformation is approximated by a spline $s_k\in\mathcal{Z}_k^{\Delta\lambda}(I)$ of the form \eqref{sk_matrix}, where $\mathbf{z}$ is the vector of unknown $Z\!B$-spline coefficients. Choosing the unit-integral representative of $\operatorname{clr}^{-1}(s_k)$, the density associated with the vector of coefficients $\mathbf{z}$ is given by
\begin{equation}\label{density_z}
p_{\mathbf{z}}(x) =
\frac{\exp\bigl(s_k(x)\bigr)}
{\displaystyle\int_I\exp\bigl(s_k(t)\bigr)\,\mathrm{d}t}
=
\frac{\exp\bigl(\mathbf{Z}_{k+1}(x)\mathbf{z}\bigr)}
{\displaystyle\int_I
\exp\bigl(\mathbf{Z}_{k+1}(t)\mathbf{z}\bigr)\,\mathrm{d}t}.
\end{equation}
The coefficient vector $\mathbf{z}$ is estimated by maximizing the likelihood
$L(\mathbf{z}) = \prod_{n=1}^{N} p_{\mathbf{z}}(x_n)$.
Equivalently, since the logarithm is strictly increasing, we maximize the log-likelihood
\begin{equation*}
l(\mathbf{z})
=
\ln L(\mathbf{z})
=
\sum_{n=1}^{N}\ln p_{\mathbf{z}}(x_n).
\end{equation*}
Substituting the expression \eqref{density_z} for $p_{\mathbf{z}}$ yields
\begin{align*}
l(\mathbf{z}) 
& = \sum_{n=1}^N\ln\left(\frac{\exp(s_k(x_n))}{\int_{I}\exp(s_k(t))\,\mbox{d}t}\right) = \sum_{n=1}^N\ln\left(\exp(s_k(x_n))\right) - \sum_{n=1}^N\ln\int_{I}\exp(s_k(t))\,\mbox{d}t \\
& = \sum_{n=1}^Ns_k(x_n) - N\ln\int_{I}\exp(s_k(t))\,\mbox{d}t = \sum_{n=1}^{N}
\mathbf{Z}_{k+1}(x_n)\mathbf{z} - N\ln \int_I
\exp\bigl(\mathbf{Z}_{k+1}(t)\mathbf{z}\bigr)\,\mathrm{d}t
\nonumber\\
& = \mathbf{c}^{\top}\mathbf{z} - N\ln \int_I \exp\bigl(\mathbf{Z}_{k+1}(t)\mathbf{z}\bigr)\,\mathrm{d}t,
\end{align*}
where $c_i = \sum_{n=1}^N Z_i^{k+1}(x_n)$, $i=-k,\dots,g-1$, and $\textbf{c} = (c_{-k},\dots,c_{g-1})^{\top}$. 
Since maximizing the log-likelihood $l(\mathbf{z})$ is equivalent to minimizing its negative, the estimation problem can be reformulated as the standard unconstrained optimization problem 
\begin{equation*}
\min_{\mathbf{z}} \left\{ -l(\mathbf{z}) \right\} =
\min_{\mathbf{z}} \left\{-\mathbf{c}^{\top}\mathbf{z} + N\ln \int_I \exp\bigl(\mathbf{Z}_{k+1} (t)\mathbf{z}\bigr)\,\mathrm{d}t \right\}.
\end{equation*}
The negative log-likelihood controls the fit of the estimated density to the observed data, but it does not impose any restrictions on its smoothness. In particular, for a sufficiently flexible spline representation, neighboring coefficients may vary considerably, which can result in undesirable local oscillations of the estimated density. To control this behavior, we augment the objective function with a penalty on discrete differences of the spline coefficients, following the $P$-spline approach of \cite{eilers1996,eilers2021}.
For a given difference order 
$\mathrm{d}\in\{1,2,\dots,g+k-1\}$, the $\mathrm{d}$th-order discrete differences of $\mathbf z$ are given by
$$
\Delta^\mathrm{d}\textbf{z} = \textbf{D}_\mathrm{d}\,\textbf{z},
$$
where
$\textbf{D}_\mathrm{d}\in\mathbb{R}^{g+k-d,\,g+k}$ is the corresponding difference matrix with entries
\begin{equation}\label{D_matrix}
\mathbf{D}_\mathrm{d}(i,j) =
\begin{cases}
(-1)^{\mathrm{d}+i-j} \dbinom{\mathrm{d}}{j-i}, & \text{if } i \leq j \leq i + \mathrm{d}, \\
0, & \text{otherwise},
\end{cases}
\end{equation}
for $i = 1, \dots, g+k-d$ and $j = 1, \dots, g+k$. The corresponding quadratic penalty is therefore given by 
$$
\|\mathbf{D}_\mathrm{d}\,\mathbf{z}\|_2^2 = \mathbf{z}^\top\mathbf{P}_\mathrm{d}\,\mathbf{z},
$$
where $\mathbf{P}_\mathrm{d} = \mathbf{D}_\mathrm{d}^\top \mathbf{D}_\mathrm{d}.$
Combining the negative log-likelihood with the quadratic penalty, we define the penalized negative log-likelihood function as
\begin{equation}\label{penalized_likelihood}
l_{\rho}(\mathbf{z}) = -\mathbf{c}^{\top}\mathbf{z} + N \ln \int_I \exp\bigl(\mathbf{Z}_{k+1} (t)\mathbf{z}\bigr)\,\mathrm{d}t + \rho\,\mathbf{z}^{\top}\mathbf{P}_d\, \mathbf{z},
\end{equation}
where $\rho>0$ is the penalization parameter controlling the balance between the fit to the observed data and the smoothness of the estimated density. The penalized maximum likelihood estimate of the coefficient vector is then given by
$$
\textbf{z}_{\rho}^* = \operatorname*{argmin}_{\mathbf{z}}\ l_{\rho}(\textbf{z}).
$$

In the following, we assume that the observed sample is nondegenerate in the sense that
$$
\frac{\mathbf{c}^\top}{N} \in \operatorname{int}\operatorname{conv} \left\{ \mathbf{Z}_{k+1}(t): t\in I \right\},
$$
where $\operatorname{conv}(\cdot)$ denotes the convex hull of a set and $\operatorname{int}(\cdot)$ denotes its interior. Intuitively, this condition requires that the empirical representation of the sample in the $Z\!B$-spline basis lies in the interior, rather than on the boundary, of the corresponding convex support.

For fixed values of the penalization parameter $\rho$ and the difference order $\mathrm{d}$, the following theorem establishes the existence and uniqueness of the minimizer and, consequently, of the corresponding spline.

\begin{theorem}\label{thm:existence_uniqueness_univariate}
Let $\rho>0$ and $\mathrm{d}\in\{1,\ldots,g+k-1\}$ be fixed and assume
$$
\frac{\mathbf{c}^\top}{N} \in \operatorname{int}\operatorname{conv} \left\{ \mathbf{Z}_{k+1}(t): t\in I \right\}.
$$
Then there exists a unique coefficient vector $\mathbf{z}_{\rho}^*\in\mathbb{R}^{g+k}$ such that
$$
l_{\rho}\bigl(\mathbf{z}^*_{\rho}\bigr) \, = \, \min_{\mathbf{z}} l_{\rho}(\mathbf{z}).
$$
Since the $Z\!B$-splines form a basis of $\mathcal{Z}_{k}^{\Delta\lambda}(I)$, the vector $\mathbf{z}^*_{\rho}$ uniquely determines the spline $s^*_k(x) = \mathbf{Z}_{k+1}(x) \mathbf{z}^*_{\rho} \in\mathcal{Z}_{k}^{\Delta\lambda}(I).$
\end{theorem}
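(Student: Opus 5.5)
The plan is to show that $l_{\rho}$ is continuous, strictly convex and coercive on $\mathbb{R}^{g+k}$. Existence of a minimizer then follows from coercivity together with continuity, and uniqueness follows from strict convexity. Write $A(\mathbf{z}) = \ln\int_I \exp(\mathbf{Z}_{k+1}(t)\mathbf{z})\,\mathrm{d}t$, so that $l_{\rho}(\mathbf{z}) = -\mathbf{c}^\top\mathbf{z} + N A(\mathbf{z}) + \rho\,\mathbf{z}^\top\mathbf{P}_\mathrm{d}\mathbf{z}$.

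\textbf{Step 1 (smoothness).} The $Z\!B$-splines are bounded, piecewise polynomial functions on the compact interval $I$. Differentiation under the integral is therefore justified, and $A$ is $C^\infty$. Its gradient is $\nabla A(\mathbf{z}) = \mathbb{E}_{\mathbf{z}}[\mathbf{Z}_{k+1}(T)^\top]$ and its Hessian is $\nabla^2 A(\mathbf{z}) = \operatorname{Cov}_{\mathbf{z}}[\mathbf{Z}_{k+1}(T)^\top]$, where $T$ has density $p_{\mathbf{z}}$ from \eqref{density_z}.

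\textbf{Step 2 (strict convexity).} For $\mathbf{w}\neq 0$ we have $\mathbf{w}^\top\nabla^2A(\mathbf{z})\mathbf{w} = \operatorname{Var}_{\mathbf{z}}(s(T))$ with $s=\mathbf{Z}_{k+1}\mathbf{w}$. This variance vanishes only if $s$ is constant almost everywhere on $I$, because $p_{\mathbf{z}}>0$ everywhere. A constant spline in $\mathcal{Z}_k^{\Delta\lambda}(I)$ must be zero, since it has zero integral. Linear independence of the $Z\!B$-spline basis then forces $\mathbf{w}=0$. Hence $A$ is strictly convex. The penalty $\rho\,\mathbf{z}^\top\mathbf{P}_\mathrm{d}\mathbf{z}$ is convex because $\mathbf{P}_\mathrm{d}$ is positive semidefinite, and the linear term does not affect convexity. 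So $l_\rho$ is strictly convex, which already gives uniqueness of any minimizer.

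\textbf{Step 3 (coercivity), the main obstacle.} The penalty has a nontrivial null space, namely the polynomial sequences of degree $<\mathrm{d}$, so it cannot by itself prevent escape to infinity. Coercivity must come from the likelihood part, and this is where the nondegeneracy assumption enters. Set $\bar{\mathbf{c}} = \mathbf{c}/N$. Along a ray $\mathbf{z} = r\mathbf{w}$ with $\|\mathbf{w}\|=1$, Laplace-type reasoning gives
\begin{equation*}
\lim_{r\to\infty} \frac{A(r\mathbf{w})}{r} = \max_{t\in I} \mathbf{Z}_{k+1}(t)\mathbf{w} =: h(\mathbf{w}),
\end{equation*}
which is the support function of $K=\operatorname{conv}\{\mathbf{Z}_{k+1}(t)^\top\}$. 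The maximum is attained because the basis functions are continuous for $k\ge 1$. For $k=0$ the same limit holds with the essential supremum, since a piecewise constant function exceeds its supremum minus $\varepsilon$ on a set of positive measure.

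Because $\bar{\mathbf{c}}\in\operatorname{int}K$, there is $\varepsilon>0$ with $h(\mathbf{w}) - \bar{\mathbf{c}}^\top\mathbf{w} \ge \varepsilon$ for all unit vectors $\mathbf{w}$. For a quantitative bound that holds uniformly in $\mathbf{w}$, I would avoid limits. By continuity, $\mathbf{Z}_{k+1}(t)\mathbf{w} \ge h(\mathbf{w})-\varepsilon/2$ on a set of measure at least some $\delta>0$ independent of $\mathbf{w}$; this uses equicontinuity of $\mathbf{w}\mapsto \mathbf{Z}_{k+1}\mathbf{w}$ on the unit sphere, with a piecewise argument when $k=0$. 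This yields
\begin{equation*}
A(r\mathbf{w}) \ge r\bigl(h(\mathbf{w})-\varepsilon/2\bigr) + \ln\delta .
\end{equation*}
Consequently
\begin{equation*}
l_\rho(r\mathbf{w}) \ge N\Bigl(r\bigl(h(\mathbf{w}) - \bar{\mathbf{c}}^\top\mathbf{w}\bigr) - r\varepsilon/2 + \ln\delta\Bigr) \ge N r\varepsilon/2 + N\ln\delta \to\infty
\end{equation*}
uniformly in $\mathbf{w}$, using that the penalty is nonnegative. Thus $l_\rho$ is coercive.

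\textbf{Step 4 (conclusion).} Since $l_\rho$ is continuous and coercive, its sublevel sets are compact, so a minimizer $\mathbf{z}^*_\rho$ exists, and by Step 2 it is unique. Finally, the $Z\!B$-splines form a basis of $\mathcal{Z}_k^{\Delta\lambda}(I)$, so the map $\mathbf{z}\mapsto\mathbf{Z}_{k+1}\mathbf{z}$ is a bijection onto that space. Therefore $\mathbf{z}^*_\rho$ uniquely determines the spline $s_k^*$.
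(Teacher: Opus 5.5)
Your proof is correct. Steps 1, 2 and 4 match the paper's continuity argument, its Hessian-as-variance computation and its strict-convexity and uniqueness argument almost verbatim. The genuine difference lies in how coercivity is obtained.

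The paper never uses the interior condition directly. It invokes the exponential-family existence theorem (Theorem~9.13 of Barndorff-Nielsen) to obtain a minimizer $\widehat{\mathbf z}_0$ of the unpenalized $l_0$, and uses strict convexity of $l_0$ to make it unique. It then deduces coercivity of $l_0$ from a recession-direction argument: a convex function with a unique minimizer cannot stay bounded along a sequence escaping to infinity. Finally it uses $l_\rho\ge l_0$.

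You instead prove coercivity from first principles. The interior assumption gives a uniform gap $h(\mathbf w)-\bar{\mathbf c}^\top\mathbf w\ge\varepsilon$ in the support function. Uniform continuity of $\mathbf Z_{k+1}$ on $I$, via $|\mathbf Z_{k+1}(t)\mathbf w-\mathbf Z_{k+1}(t')\mathbf w|\le\|\mathbf Z_{k+1}(t)-\mathbf Z_{k+1}(t')\|_2$ for unit $\mathbf w$, gives the uniform Laplace-type lower bound. Together these yield $l_\rho(\mathbf z)\ge\tfrac{N\varepsilon}{2}\|\mathbf z\|_2+N\ln\delta$ for all $\mathbf z$. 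In effect you reprove the relevant direction of the exponential-family existence theorem.

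The paper's route is shorter and modular. However, it outsources exactly the step where the nondegeneracy assumption matters, and it needs the unpenalized MLE and the strict convexity of $l_0$ as intermediate objects.

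Your route is self-contained and shows transparently where the interior assumption enters. It also gives a quantitative linear growth rate that does not depend on $\rho$ or $\mathrm d$. Combined with $l_\rho(\mathbf z^*_\rho)\le l_\rho(\mathbf 0)=N\ln(b-a)$, this yields the a priori bound $\|\mathbf z^*_\rho\|_2\le \tfrac{2}{\varepsilon}\ln\bigl((b-a)/\delta\bigr)$, which holds uniformly in the penalty. The paper's argument does not provide such a bound.
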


The proof is provided in Appendix~\ref{app:proofs}. It is based on the strict convexity of the penalized negative log-likelihood and on an existence argument for the corresponding unpenalized likelihood problem under the nondegeneracy assumption.

For a fixed spline space $\mathcal{Z}_k^{\Delta\lambda}(I)$, specified by the spline degree $k$ and the knot sequence $\Delta\Lambda$, and for given values of $\mathrm{d}$ and $\rho$, the minimizer $\mathbf{z}_\rho^*$ is computed numerically using a quasi-Newton method. The gradient of the penalized negative log-likelihood is derived analytically; its explicit expression is given by \eqref{eq:grad} in Appendix~\ref{app:proofs}. The integrals appearing in the objective function and in the analytical expression of the gradient are evaluated numerically by quadrature.

\begin{algorithm}[H]
\caption{Penalized likelihood estimation of a univariate density}
\label{alg:univariate}
\begin{algorithmic}[1]
\Require Observations $x_1,\ldots,x_N$, penalization parameter $\rho$, difference order $\mathrm{d}$, and tolerance $\varepsilon > 0$.
\State Compute $\mathbf{c}$ and construct
$\mathbf{P}_\mathrm{d}=\mathbf{D}_\mathrm{d}^{\top}\mathbf{D}_\mathrm{d}$.
\State Set $\mathbf{z}^{(0)}=\mathbf{0}$ and $r=0$.
\Repeat
\State Evaluate $l_\rho(\mathbf{z}^{(r)})$ and    $\nabla_{\mathbf z}l_\rho(\mathbf{z}^{(r)})$.
\State Compute $\mathbf{z}^{(r+1)}$ by a quasi-Newton step.
\State Set $r\leftarrow r+1$.
\Until{$\|\nabla_{\mathbf z} l_\rho(\mathbf{z}^{(r)})\|_2 <\varepsilon$}
\State Set $\mathbf{z}_\rho^*=\mathbf{z}^{(r)}$ and
\[
s_k^*(x)=\mathbf{Z}_{k+1}(x)\mathbf{z}_\rho^*.
\]
\State Return
\[
p_{\mathbf z_\rho^*}(x) = \frac{\exp\bigl(s_k^*(x)\bigr)} {\displaystyle\int_I\exp\bigl(s_k^*(t)\bigr)\,\mathrm{d}t}.
\]
\end{algorithmic}
\end{algorithm}

The quasi-Newton procedure is terminated when the Euclidean norm of the gradient falls below a prescribed tolerance. Upon convergence, the resulting coefficient vector provides a numerical approximation of the unique minimizer $\mathbf{z}_{\rho}^{*}$ established in Theorem~\ref{thm:existence_uniqueness_univariate}.

\subsection{Penalized maximum likelihood for bivariate densities}
\label{PML2D}

We now extend the penalized likelihood framework introduced in Section~\ref{PML1D}  to the bivariate setting. Let $(x_n,y_n)\in\Omega\subset\mathbb{R}^2$, $n=1,\dots,N$ be independent observations from an unknown density $p\in\mathcal{B}^2(\Omega)$. Following Section~\ref{bivariateZB}, its clr transformation is approximated by a spline $s_{kl}\in\mathcal{Z}_{kl}^{\Delta\lambda,\Delta\mu}(\Omega)$ represented by \eqref{ZBrepr}, or equivalently by \eqref{ZBrepr_matrix} and \eqref{ZB_repr_short} in matrix form. 

The derivation follows the same steps as in the univariate setting. We first formulate the corresponding normalized density and log-likelihood and then introduce penalties controlling smoothness in the two coordinate directions.

Choosing the unit-integral representative of $\operatorname{clr}^{-1}(s_{kl})$, the density associated with the coefficient matrix $\mathbf{Z}$ and the vectors of coefficients $\mathbf{v}$, and $\mathbf{u}$, or equivalently with the coefficient vector $\boldsymbol{\theta} = (\mbox{cs}(\textbf{Z})^{\top},\,\textbf{v}^{\top},\,\textbf{u}^{\top})^{\top}$, is given by 
\begin{equation}\label{density_bivariate}
p_{\boldsymbol{\theta}}(x,y) = \frac{\exp\bigl(s_{kl}(x,y)\bigr)}
{\displaystyle
\iint_{\Omega}\exp\bigl(s_{kl}(t,s)\bigr)\,\mathrm{d}t\,\mathrm{d}s}.
\end{equation}
The coefficients are estimated by maximizing the likelihood $L(\boldsymbol{\theta}) = \prod_{n=1}^{N} p_{\boldsymbol{\theta}}(x_n,y_n)$. Equivalently, we maximize the log-likelihood
\[
l(\boldsymbol{\theta})
=
\sum_{n=1}^{N}
\ln p_{\boldsymbol{\theta}}(x_n,y_n).
\]
Substituting \eqref{density_bivariate} yields
\begin{equation}\label{loglikelihood_bivariate}
l(\boldsymbol{\theta}) =
\sum_{n=1}^{N}s_{kl}(x_n,y_n) - N\ln \iint_{\Omega} 
\exp\bigl(s_{kl}(t,s)\bigr)\,\mathrm{d}t\,\mathrm{d}s.
\end{equation}
Using representation \eqref{ZBrepr}, the first term in \eqref{loglikelihood_bivariate} can be written as 
\begin{align}
l_1 &=
\sum_{n=1}^N \left(
\sum_{i=-k}^{g-1}\sum_{j=-l}^{h-1} z_{ij}Z_i^{k+1}(x_n)Z_j^{l+1}(y_n) +
\sum_{i=-k}^{g-1}v_iZ_i^{k+1}(x_n,y_n) +
\sum_{j=-l}^{h-1}u_jZ_j^{l+1}(x_n,y_n)
\right)
\nonumber\\
&=
\sum_{n=1}^N \sum_{i=-k}^{g-1}\sum_{j=-l}^{h-1} z_{ij}Z_i^{k+1}(x_n)Z_j^{l+1}(y_n) + \sum_{n=1}^N
\sum_{i=-k}^{g-1}v_iZ_i^{k+1}(x_n) +
\sum_{n=1}^N \sum_{j=-l}^{h-1}u_jZ_j^{l+1}(y_n)\nonumber\\
& = 
\sum_{i=-k}^{g-1}\sum_{j=-l}^{h-1} b_{ij}z_{ij}
+ \mathbf{c}^{\top}\mathbf{v}
+ \mathbf{d}^{\top}\mathbf{u} \nonumber,
\end{align}
where
$$
b_{ij} =  \sum_{n=1}^N Z_i^{k+1}(x_n)Z_j^{l+1}(y_n), \qquad c_i = \sum_{n=1}^N Z_i^{k+1}(x_n),\qquad  d_j = \sum_{n=1}^N Z_j^{l+1}(y_n),
$$
for $i=-k,\dots,g-1$ and $j=-l,\dots,h-1$ with 
$$\textbf{c} = (c_{-k},\dots,c_{g-1})^{\top},\qquad \textbf{d} = (d_{-l},\dots,d_{h-1})^{\top}.$$ 
We further define
$\mathbf B=(b_{ij})_{i=-k,j=-l}^{g-1,h-1}$ and $\mathbf b=\operatorname{cs}(\mathbf B)$. Then the first term satisfies $\sum_{i=-k}^{g-1}\sum_{j=-l}^{h-1} z_{ij}b_{ij} = \mathbf b^\top\operatorname{cs}(\mathbf Z)$,
and hence
$$
l_1 =
\mathbf b^\top\operatorname{cs}(\mathbf Z) + \mathbf c^\top\mathbf v +
\mathbf d^\top\mathbf u.
$$
Recalling the coefficient vector defined in \eqref{ZB_repr_short} as
$$
\boldsymbol{\theta} = 
\begin{pmatrix}
\operatorname{cs}(\mathbf Z)\\
\mathbf v\\
\mathbf u
\end{pmatrix},
$$
we further define
\begin{equation}\label{def:h}
\mathbf h =
\begin{pmatrix}
\mathbf b\\
\mathbf c\\
\mathbf d
\end{pmatrix}.
\end{equation}
Hence, the first term of the log-likelihood can be written compactly as
$$
l_1=\mathbf h^\top\boldsymbol{\theta}.
$$
Using the matrix representation \eqref{ZBrepr_matrix}, the second term in \eqref{loglikelihood_bivariate} takes the form
\begin{equation*}
l_2 = N\ln\left(\iint_\Omega\exp{(\textbf{Z}_{k+1}(t)\,\textbf{Z}\,\textbf{Z}^{\top}_{l+1}(s) + \textbf{Z}_{k+1}(t)\,\textbf{v}} + \textbf{u}^{\top}\textbf{Z}^{\top}_{l+1}(s))\,\mbox{d}t\,\mbox{d}s\right).
\end{equation*}
or equivalently
\begin{equation*}
l_2 = N\ln\iint_{\Omega}\exp(\boldsymbol{\Phi}(t,s)\boldsymbol{\theta})\,\mbox{d}t\,\mbox{d}s,
\end{equation*}
using the representation \eqref{ZB_repr_short}.

Having expressed both terms of the log-likelihood in matrix form, the estimation problem can be formulated directly in terms of the unknown spline coefficients. Since maximizing the log-likelihood is equivalent to minimizing its negative, we consider the corresponding negative log-likelihood. As in the univariate case, this criterion controls the fit to the observed data but does not impose smoothness on the resulting bivariate density estimate. We therefore introduce separate penalties in the two coordinate directions.

Let $\textbf{D}_{\mathrm{d}_x}\in\mathbb{R}^{g+k-\mathrm{d}_x,g+k}$ and $\textbf{D}_{\mathrm{d}_y}\in\mathbb{R}^{h+l-\mathrm{d}_y,h+l}$ denote the matrices of difference orders $\mathrm{d}_x\in\{1,2,\dots,g+k-1\}$ and $\mathrm{d}_y\in\{1,2,\dots,h+l-1\}$, respectively, having entries defined by \eqref{D_matrix}. The resulting penalties for each coordinate direction are given by
\begin{align*}
\Vert\textbf{D}_{\mathrm{d}_x}\textbf{Z}\Vert^2_F &+ \Vert\textbf{D}_{\mathrm{d}_x}\,\textbf{v}\Vert^2_F, \\
\Vert\textbf{Z}\,\textbf{D}^{\top}_{\mathrm{d}_y}\Vert^2_F & + \Vert\textbf{D}_{\mathrm{d}_y}\textbf{u}\Vert^2_F,
\end{align*}
where $\Vert\cdot\Vert_F$ denotes the Frobenius norm. This choice is a direct extension of the univariate penalty  $\Vert\textbf{D}_\mathrm{d}\textbf{z}\Vert^2_2$ to the bivariate setting by penalizing discrete differences of the coefficient matrix $\textbf{Z}$ and the vectors of coefficients $\textbf{u}$ and $\textbf{v}$ in a unified manner. Since the Frobenius norm coincides with the Euclidean norm for vectors, i.e.
\begin{equation*}
\Vert\textbf{a}\Vert^2_F = \mathrm{tr}(\textbf{a}^{\top}\textbf{a}) = \textbf{a}^{\top}\textbf{a} = \Vert\textbf{a}\Vert^2_2,
\end{equation*}
the respective second penalty terms reduce to the standard quadratic forms
\begin{align*}
\Vert\textbf{D}_{\mathrm{d}_x}\,\textbf{v}\Vert^2_F &= \Vert\textbf{D}_{\mathrm{d}_x}\,\textbf{v}\Vert^2_2 = \textbf{v}^{\top}\textbf{P}_{\mathrm{d}_x}\,\textbf{v},\\
\Vert\textbf{D}_{\mathrm{d}_y}\,\textbf{u}\Vert^2_F &= \Vert\textbf{D}_{\mathrm{d}_y}\,\textbf{u}\Vert^2_2 =  \textbf{u}^{\top}\textbf{P}_{\mathrm{d}_y}\,\textbf{u},
\end{align*}
with $\mathbf P_{\mathrm{d}_x}=\mathbf D_{\mathrm{d}_x}^\top\mathbf D_{\mathrm{d}_x}$ and $\mathbf P_{\mathrm{d}_y}=\mathbf D_{\mathrm{d}_y}^\top\mathbf D_{\mathrm{d}_y}$.

Using the vectorization $\operatorname{cs}(\cdot)$, together with the identity
$\|\mathbf A\|_F^2 = \|\operatorname{cs}(\mathbf{A} )\|_2^2$ and the properties of Kronecker tensor product $\otimes$, the penalty term involving the coefficient matrix $\mathbf Z$ in the $x$-direction can be rewritten as follows
\begin{align*}
\Vert\mathbf D_{\mathrm{d}_x}\mathbf Z\Vert_F^2 & = \Vert\operatorname{cs}(\mathbf D_{\mathrm{d}_x}\mathbf Z)\Vert_2^2  = \Vert(\mathbf I_{h+l}\otimes\mathbf D_{\mathrm{d}_x}) \operatorname{cs}(\mathbf Z)\Vert_2^2 \\
& = \mbox{cs}(\textbf{Z})^{\top}(\textbf{I}_{h+l}\otimes\textbf{D}_{\mathrm{d}_x})^{\top} (\textbf{I}_{h+l}\otimes \textbf{D}_{\mathrm{d}_x})\,\mbox{cs}(\textbf{Z})\\
&= \mbox{cs}(\textbf{Z})^{\top}(\textbf{I}_{h+l}\otimes\textbf{P}_{\mathrm{d}_x})\,\mbox{cs}(\textbf{Z}) = \mbox{cs}(\textbf{Z})^{\top}\widetilde{\mathbf{P}}_{\mathrm{d}_x}\,\mbox{cs}(\textbf{Z}),
\end{align*}
where
$\widetilde{\mathbf{P}}_{\mathrm{d}_x} = \textbf{I}_{h+l}\otimes\textbf{P}_{\mathrm{d}_x}$. Similarly, the penalty term in the $y$-direction satisfies
$$
\Vert\textbf{Z}\,\textbf{D}^{\top}_{\mathrm{d}_y}\Vert^2_F = \mathrm{cs}(\textbf{Z})^{\top} \widetilde{\mathbf{P}}_{\mathrm{d}_y}\mathrm{cs}(\textbf{Z})
$$
with $\widetilde{\mathbf{P}}_{\mathrm{d}_y} = \textbf{P}_{\mathrm{d}_y}\otimes\textbf{I}_{g+k}$. Combining the negative log-likelihood with the quadratic penalties, we define the penalized negative log-likelihood
\begin{align*}
l_{\rho_x,\rho_y}(\mbox{cs}(\textbf{Z}),\textbf{v},\textbf{u}) &= 
- \mathbf{b}^{\top}\mbox{cs}(\mathbf{Z})
- \mathbf{c}^{\top}\mathbf{v}
- \mathbf{d}^{\top}\mathbf{u} \\ &\quad+N\ln\left(\iint_\Omega\exp{(\textbf{Z}_{k+1}(t)\,\textbf{Z}\,\textbf{Z}^{\top}_{l+1}(s) + \textbf{Z}_{k+1}(t)\,\textbf{v}} + \textbf{u}^{\top}\textbf{Z}^{\top}_{l+1}(s))\,\mbox{d}t\,\mbox{d}s\right) \\
&\quad + \rho_x\left(\mbox{cs}(\textbf{Z})^{\top} \widetilde{\mathbf{P}}_{\mathrm{d}_x}\,\mbox{cs}(\textbf{Z}) + \textbf{v}^{\top}\textbf{P}_{\mathrm{d}_x}\,\textbf{v}\right) +\rho_y\left(\mathrm{cs}(\textbf{Z})^{\top} \widetilde{\mathbf{P}}_{\mathrm{d}_y}\,\mbox{cs}(\textbf{Z})+\textbf{u}^{\top}\textbf{P}_{\mathrm{d}_y}\,\textbf{u}\right),
\end{align*}
where $\rho_x>0$ and $\rho_y>0$ denote the smoothing parameters associated with the two coordinate directions, controlling the trade-off between the fit  to the observed data and the smoothness of the estimated density. The penalized negative log-likelihood function can be expressed analogously with \eqref{penalized_likelihood} in terms of the coefficient vector $\boldsymbol{\theta}$ using the spline representation \eqref{ZB_repr_short} and \eqref{def:h} yielding
$$
l_{\rho_x,\rho_y}(\boldsymbol{\theta}) = -\textbf{h}^{\top}\boldsymbol{\theta} + N\ln\left(\iint_\Omega \exp(\boldsymbol{\Phi}(t,s)\boldsymbol{\theta})\,\mbox{d}t\,\mbox{d}s\right) + \boldsymbol{\theta}^{\top}\mathbb{P}_{\rho_x,\rho_y}\boldsymbol{\theta},
$$
where
$$
\mathbb{P}_{\rho_x,\rho_y} = \begin{pmatrix}
    \rho_x\widetilde{\mathbf{P}}_{\textrm{d}_x} + \rho_y\widetilde{\mathbf{P}}_{\textrm{d}_y} & \textbf{0} & \textbf{0}\\
    \textbf{0} & \rho_x\textbf{P}_{\textrm{d}_x} & \textbf{0}\\
    \textbf{0} & \textbf{0} & \rho_y\textbf{P}_{\textrm{d}_y}.
\end{pmatrix}.
$$
The penalized maximum likelihood estimate of the coefficient vector $\boldsymbol{\theta}$ is then obtained by minimizing the penalized objective function,
$$
\boldsymbol{\theta}^*_{\rho_x,\rho_y} =
\operatorname*{argmin}_{\substack{\boldsymbol{\theta}}}
l_{\rho_x,\rho_y}(\boldsymbol{\theta}).
$$
Analogously with the univariate case, we assume that the observed sample is nondegenerate, i.e.,
$$
\frac{\mathbf{\textbf{h}}^\top}{N} \in \operatorname{int}\operatorname{conv} \left\{\boldsymbol{\Phi}(t,s): (t,s)\in \Omega \right\},
$$
meaning that the observations lie in the interior of the corresponding convex support. The following Theorem states that for fixed values of penalization parameters $\rho_x$, $\rho_y$ and the difference orders $\mathrm{d}_x$, $\mathrm{d}_y$ the optimization problem admits a unique solution.

\begin{theorem}\label{thm:existence_uniqueness_bivariate}
Let $\rho_x>0$, $\rho_y>0$ and $\mathrm{d}_x\in\{1,\ldots,g+k-1\}$, $\mathrm{d}_y\in\{1,\ldots,h+l-1\}$ be fixed and assume
$$
\frac{\mathbf{h}^\top}{N} \in \operatorname{int}\operatorname{conv} \left\{\boldsymbol{\Phi}(t,s): (t,s)\in \Omega \right\}.
$$
Then there exists a unique vector of coefficients $\boldsymbol{\theta}^*\in\mathbb{R}^{(g+k+1)(h+l+1)-1}$ such that
$$
l_{\rho_x,\rho_y}(\boldsymbol{\theta}^*_{\rho_x,\rho_y}) \, = \, \min_{\substack{\boldsymbol{\theta}}} l_{\rho_x,\rho_y}(\boldsymbol{\theta}).
$$
Since the $Z\!B$-splines form a basis of $\mathcal{Z}_{kl}^{\Delta\lambda,\Delta\mu}(\Omega)$, the vector $\boldsymbol{\theta}^*_{\rho_x,\rho_y}$ uniquely determines the spline $s^*_{kl}(x,y) = \boldsymbol{\Phi}(x,y)\boldsymbol{\theta}^*_{\rho_x,\rho_y}\in\mathcal{Z}^{\Delta\lambda,\Delta\mu}_{kl}(\Omega)$.
\end{theorem}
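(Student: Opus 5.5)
The plan mirrors the univariate argument behind Theorem~\ref{thm:existence_uniqueness_univariate}. Write
$$
l_{\rho_x,\rho_y}(\boldsymbol{\theta}) = -\mathbf h^\top\boldsymbol{\theta} + N\,A(\boldsymbol{\theta}) + \boldsymbol{\theta}^\top\mathbb{P}_{\rho_x,\rho_y}\boldsymbol{\theta},
\qquad
A(\boldsymbol{\theta}) = \ln\iint_\Omega \exp(\boldsymbol{\Phi}(t,s)\boldsymbol{\theta})\,\mathrm{d}t\,\mathrm{d}s .
$$
The proof has two parts: uniqueness from strict convexity, and existence from coercivity. The penalty is positive semidefinite but singular (each $\mathbf P_{\mathrm d}$ has a kernel of polynomial sequences of degree below $\mathrm d$), so it cannot supply either property alone. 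Both must come from the log-partition term $A$ together with the nondegeneracy assumption.

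\emph{Strict convexity.} Differentiating under the integral gives $\nabla A(\boldsymbol\theta)=\mathbb E_{\boldsymbol\theta}[\boldsymbol\Phi^\top]$ and $\nabla^2A(\boldsymbol\theta)=\operatorname{Cov}_{\boldsymbol\theta}(\boldsymbol\Phi^\top)$, where the moments are taken under $p_{\boldsymbol\theta}$ from \eqref{density_bivariate}. For $\mathbf w\neq\mathbf 0$ we have $\mathbf w^\top\nabla^2A\,\mathbf w=\operatorname{Var}_{\boldsymbol\theta}(\boldsymbol\Phi\mathbf w)$. This variance vanishes only if $\boldsymbol\Phi(t,s)\mathbf w$ is constant on $\Omega$, since $p_{\boldsymbol\theta}>0$ everywhere and the functions are continuous. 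But $\boldsymbol\Phi\mathbf w\in\mathcal Z_{kl}^{\Delta\lambda,\Delta\mu}(\Omega)$ has zero integral, so this constant must be $0$. Because the $Z\!B$-splines form a basis (the representation \eqref{ZBrepr} is unique), $\boldsymbol\Phi\mathbf w\equiv0$ forces $\mathbf w=\mathbf 0$. Hence $A$ is strictly convex. The linear term does not affect convexity and the penalty is convex, so $l_{\rho_x,\rho_y}$ is strictly convex and has at most one minimizer.

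\emph{Existence via coercivity.} Set $F(\boldsymbol\theta)=-\mathbf h^\top\boldsymbol\theta+N A(\boldsymbol\theta)$. It suffices to show $F(\boldsymbol\theta)\to\infty$ as $\|\boldsymbol\theta\|\to\infty$. Since the penalty is nonnegative, $l_{\rho_x,\rho_y}$ is then continuous and coercive, so its sublevel sets are compact and a minimizer exists. Coercivity is the step I expect to carry the substance, and it is here that the interior condition is used.

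Fix a unit direction $\mathbf w$. By Jensen, or directly,
$$
A(r\mathbf w)\ge r\,\max_{\Omega}\boldsymbol\Phi\mathbf w+\ln\bigl|\{\boldsymbol\Phi\mathbf w>\max\boldsymbol\Phi\mathbf w-\varepsilon\}\bigr|-r\varepsilon .
$$
The set appearing here has positive measure by continuity. It follows that the recession function of $F$ is
$$
F_\infty(\mathbf w)=N\bigl(\max_{(t,s)\in\Omega}\boldsymbol\Phi(t,s)\mathbf w-\mathbf h^\top\mathbf w/N\bigr).
$$
Because $\mathbf h^\top/N$ lies in the interior of $\operatorname{conv}\{\boldsymbol\Phi(t,s)\}$, a small ball around it is contained in that convex hull. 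Hence $\max_\Omega\boldsymbol\Phi\mathbf w\ge \mathbf h^\top\mathbf w/N+\delta$ for some $\delta>0$. By compactness of the unit sphere and continuity of the support function, $\delta$ can be taken uniform in $\mathbf w$.

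Convexity then gives
$$
F(r\mathbf w)\ge F(\mathbf 0)+r\bigl(F_\infty(\mathbf w)-o(1)\bigr).
$$
More cleanly, $F$ is a closed proper convex function whose recession function is positive in every nonzero direction, so its sublevel sets are bounded. This is the standard result in Rockafellar, Theorem~27.1. It yields $F(\boldsymbol\theta)\ge N\delta\|\boldsymbol\theta\|/2-C$.

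The main care needed is to justify the uniform positivity $\delta>0$, using the interior assumption and continuity of $\boldsymbol\Phi$ on the compact set $\Omega$. Once that holds, existence follows. Combining it with strict convexity gives a unique $\boldsymbol\theta^*_{\rho_x,\rho_y}$. The final claim, that $\boldsymbol\theta^*_{\rho_x,\rho_y}$ uniquely determines $s^*_{kl}=\boldsymbol\Phi\boldsymbol\theta^*_{\rho_x,\rho_y}\in\mathcal Z_{kl}^{\Delta\lambda,\Delta\mu}(\Omega)$, is immediate from the basis property stated in Section~\ref{bivariateZB}.
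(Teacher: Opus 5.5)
Your proof is correct. The strict-convexity step is the same as the paper's: the Hessian of the log-partition term is $\operatorname{Var}_{p_{\boldsymbol\theta}}(\boldsymbol\Phi\mathbf w)$, and the zero-integral property plus linear independence of the basis force it to be positive. Existence, however, goes by a genuinely different route.

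The paper never analyses how the log-partition function grows. It proceeds in three steps:
\begin{itemize}
\item it cites the exponential-family theorem (Barndorff-Nielsen, Theorem~9.13) to obtain a minimizer of the unpenalized $l_{0,0}$ under the interior condition;
\item it shows, by a compactness argument on the unit sphere, that a continuous convex function with a unique minimizer is coercive;
\item it then adds the nonnegative penalty.
\end{itemize}

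You instead compute the recession function $N\max_\Omega\boldsymbol\Phi\mathbf w-\mathbf h^\top\mathbf w$ directly. Since $\max_\Omega\boldsymbol\Phi\mathbf w$ is the support function of $\operatorname{conv}\{\boldsymbol\Phi(t,s)\}$, a $\delta$-ball around $\mathbf h/N$ inside the hull gives $F_\infty(\mathbf w)\ge N\delta\|\mathbf w\|$. Uniformity in $\mathbf w$ is immediate from the ball itself, so no compactness argument is needed there.

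Comparing the two routes:
\begin{itemize}
\item Yours is in effect a self-contained proof of the cited exponential-family result for this model, and it shows exactly where the interior hypothesis enters.
\item The paper's is shorter and reuses the univariate proof verbatim, at the cost of outsourcing the key step to a citation.
\end{itemize}

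One small point concerns your closing bound $F(\boldsymbol\theta)\ge N\delta\|\boldsymbol\theta\|/2-C$. It does not follow from bounded sublevel sets alone, because the $o(1)$ in your pointwise estimate $F(r\mathbf w)\ge F(\mathbf 0)+r(F_\infty(\mathbf w)-o(1))$ is not uniform in $\mathbf w$. The bound is nevertheless true. Uniform continuity of $\boldsymbol\Phi$ on $\Omega$ bounds the measure of $\{\boldsymbol\Phi\mathbf w>\max_\Omega\boldsymbol\Phi\mathbf w-\varepsilon\}$ from below uniformly over unit $\mathbf w$, which yields $F(\boldsymbol\theta)\ge N(\delta-\varepsilon)\|\boldsymbol\theta\|-C_\varepsilon$ directly. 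In any case only coercivity is needed for the theorem, so this does not affect your argument.
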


The proof proceeds analogously to the univariate case and is provided in Appendix~\ref{app:proofs}.

For a fixed spline space $\mathcal{Z}_{kl}^{\Delta\lambda,\Delta\mu}(\Omega)$, specified by the spline degrees $k,\ l$ and the knot sequences $\Delta\Lambda,\ \Delta M$, and for given values of $\mathrm{d}_x,\ \mathrm{d}_y$ and $\rho_x,\ \rho_y$, the minimizer $\boldsymbol{\theta}^*_{\rho_x,\rho_y}$ is computed numerically using a quasi-Newton method. The gradient of the penalized negative log-likelihood is derived analytically; its explicit expression is given by \eqref{eq:grad2D} in Appendix~\ref{app:proofs}. The integrals appearing in the objective function and its gradient are evaluated by numerical quadrature.

\begin{algorithm}[H]
\caption{Penalized likelihood estimation of a bivariate density}
\label{alg:bivariate}
\begin{algorithmic}[1]
\Require Observations $(x_1,y_1),\ldots,(x_N,y_N)$, penalization parameters $\rho_x$, $\rho_y$,
difference orders $\mathrm{d}_x$, $\mathrm{d}_y$, and tolerance $\varepsilon > 0$.
\State Compute $\mathbf{b}$, $\mathbf{c}$, $\mathbf{d}$ and form $\textbf{h}$. 
\State Construct
$\mathbf{P}_{\mathrm{d}_x}=\mathbf{D}_{\mathrm{d}_x}^{\top}\mathbf{D}_{\mathrm{d}_x}$, $\mathbf{P}_{\mathrm{d}_y}=\mathbf{D}_{\mathrm{d}_y}^{\top}\mathbf{D}_{\mathrm{d}_y}$ together with $\widetilde{\mathbf{P}}_{\mathrm{d}_x} = \textbf{I}_{h+l}\otimes\textbf{P}_{\mathrm{d}_x}$, $\widetilde{\mathbf{P}}_{\mathrm{d}_y} = \textbf{P}_{\mathrm{d}_y}\otimes\textbf{I}_{g+k}$ and form $\mathbb{P}_{\rho_x,\rho_y}$.
\State Set $\boldsymbol{\theta}^{(0)}=\mathbf{0}$ and $r=0$.
\Repeat
\State Evaluate $l_{\rho_x,\rho_y}(\boldsymbol{\theta}^{(r)})$ and $\nabla_{\boldsymbol{\theta}}\,l_{\rho_x,\rho_y}(\boldsymbol{\theta}^{(r)})$.
\State Compute $\boldsymbol{\theta}^{(r+1)}$ by a quasi-Newton step.
\State Set $r\leftarrow r+1$.
\Until{$\|\nabla_{\boldsymbol{\theta}}\,l_{\rho_x,\rho_y}(\boldsymbol{\theta}^{(r)})\|_2 <\varepsilon$}
\State Set $\boldsymbol{\theta}^*_{\rho_x,\rho_y} = \boldsymbol{\theta}^{(r)}$ and
$$
s_{kl}^*(x,y) = \boldsymbol{\Phi}(x,y)\boldsymbol{\theta}^*
$$
\State Return
\[
p_{\boldsymbol{\theta}^*_{\rho_x,\rho_y}}(x,y) = \frac{\exp\bigl(s_{kl}^*(x,y)\bigr)} {\displaystyle\iint_\Omega\exp\bigl(s_{kl}^*(t,s)\bigr)\,\mathrm{d}t\mathrm{d}s}.
\]
\end{algorithmic}
\end{algorithm}

The quasi-Newton method is terminated when the Euclidean norm of the gradient falls below a prescribed tolerance. Upon convergence, the resulting  coefficient vector $\boldsymbol{\theta}$ provides numerical approximations of the unique solution characterized in Theorem~\ref{thm:existence_uniqueness_bivariate}.

\section{Selection of penalization parameters}\label{rhoselection}
The proposed penalized maximum likelihood framework depends on several parameters controlling the flexibility and smoothness of the resulting density estimate. The flexibility of the spline representation is primarily determined by the number and position of the inner knots, whereas smoothness is controlled by the penalization parameters. Since the placement of knots may substantially affect the quality of the spline approximation, several adaptive knot selection strategies have been proposed, e.g., in \cite{Michel2021, Goepp2025}. In this work, however, we restrict our attention to equidistantly spaced inner knots \cite{eilers2021} and focus on the effects of the number of inner knots and the penalization parameters. 

A large number of inner knots increases the flexibility of the spline representation and allows more complex density structures to be captured. However, insufficient penalization may lead to local oscillations and overfitting, whereas excessive penalization may oversmooth important structural features of the density.

To determine the optimal penalization parameters in a data-driven manner, we employ $K$-fold cross-validation. The observed sample is randomly partitioned into $K$ approximately equal folds. For each fold, the density function is estimated as described in Section \ref{PML} using the remaining $K-1$ subsets. The resulting estimate is subsequently evaluated on the omitted fold.

In the univariate setting, the corresponding cross-validation criterion is defined as
\begin{equation}\label{CV_1D}
CV(\rho) = \frac{1}{K} \sum_{k=1}^{K}
\left[ - \sum_{x_i\in V_k} \ln \hat f_{\rho}^{(-k)}(x_i) \right],
\end{equation}
where $V_k$ denotes the validation set in the $k$-th fold and $\hat f_{\rho}^{(-k)}$ stands for the estimate of the density function obtained from the corresponding training sample. 
For each fixed candidate value of the penalization parameter $\rho$, the corresponding cross-validation criterion is obtained by averaging the negative log-likelihood values over all $K$ validation folds according to \eqref{CV_1D}. This procedure is repeated over a predefined grid of candidate values $\mathcal{G} = \{\rho_1,\ldots,\rho_M\}$. The optimal penalization parameter is subsequently selected as the candidate yielding the smallest cross-validation criterion
$$
\rho^{*} = \operatorname*{argmin}_{\rho\in\mathcal{G}} CV(\rho).
$$

For bivariate densities, the corresponding cross-validation criterion is given analogously by
\begin{equation}\label{CV_2D}
CV(\rho_x,\rho_y) = \frac{1}{K}\sum_{k=1}^{K}\left[-\sum_{(x_i,y_i)\in V_k}\ln\hat f_{\rho_x,\rho_y}^{(-k)}(x_i,y_i)\right].
\end{equation}
where $V_k$ again stands for the validation set in the $k$-th fold, as in the univariate case, and $\hat f_{\rho_x,\rho_y}^{(-k)}$ denotes the corresponding bivariate density estimate from the training sample. For each fixed candidate pair of penalization parameters $(\rho_x,\rho_y)$, the corresponding cross-validation criterion is obtained by averaging the negative log-likelihood values over all $K$ validation folds according to \eqref{CV_2D}. This procedure is repeated over a predefined dense grid of candidate pairs $\mathcal{G}=\{(\rho_x^{1},\rho_y^{1}),\ldots,(\rho_x^{M},\rho_y^{M})\}$. The optimal penalization parameters are subsequently selected as the candidate pair yielding the smallest cross-validation criterion,
$$
(\rho_x^{*},\rho_y^{*})=
\operatorname*{argmin}_{(\rho_x,\rho_y)\in\mathcal{G}}
CV(\rho_x,\rho_y).
$$
The role of penalization becomes even more important in the bivariate setting due to the substantially increased flexibility of the spline representation and the potentially different structural complexity in the two coordinate directions.

Moreover, since the flexibility of the resulting density estimate is jointly determined by the number of inner knots and the penalization parameter in the univariate case, or parameters in the bivariate case, the final model is selected as the combination  
yielding the smallest cross-validated criterion \eqref{CV_1D} or \eqref{CV_2D}.

\section{Simulation study}\label{Simulation}
The aim of the simulation study is to investigate the properties of the proposed methodology for densities of increasing complexity and to examine the influence of spline flexibility and the penalization parameters on the resulting estimates.

\subsection{Univariate scenario}\label{Simulation1D}
First, we considered the density of the standard normal distribution
$$
f_1(x) = \frac{1}{\sqrt{2\pi}} 
\exp\left(-\frac{x^2}{2}\right)
$$
with the domain $I_1 = [-4,4]$. In the second setting, we considered a mixture of two univariate normal distributions
$$
f_2(x) = w_1\,\mathcal{N}(x \mid \mu_1, \sigma_1^2) + w_2\,\mathcal{N}(x \mid \mu_2, \sigma_2^2),
$$
where $\mu_1 = -2,\, \sigma_1 = 0.4$ and $\mu_2 = 2,\, \sigma_2 = 0.7$. We assumed weights $w_1 = w_2 = 1/2$ and 
the domain was set to $I_2=[-4,4]$. In the third scenario, we considered a mixture of three univariate normal distributions
$$
f_3(x) = \sum_{k=1}^3 w_k \,\mathcal{N}(x \mid \mu_k, \sigma_k^2),
$$
with $\mu_1 = -3$, $\sigma_1 = 0.3$, $\mu_2 = 0$, $\sigma_2 = 0.2$, $\mu_3 = 2$, $\sigma_3 = 0.2.$
The weights were set to $w_1=w_2=w_3=1/3$ and we considered the domain $I_3 = [-4,3]$. The asymmetric domain was selected according to the locations of the outer mixture components, centered at $\mu_1=-3$ and $\mu_3=2$, so as to cover the relevant range of the density on both sides. The three theoretical densities are shown in Figure \ref{theoretical_dens}. The density $f_1(x)$ has a simple smooth unimodal structure, whereas $f_2(x)$ and $f_3(x)$ represent more complex bimodal and trimodal structures, respectively.

\begin{figure}[h]
    \centering
    \includegraphics[width=0.3\textwidth]{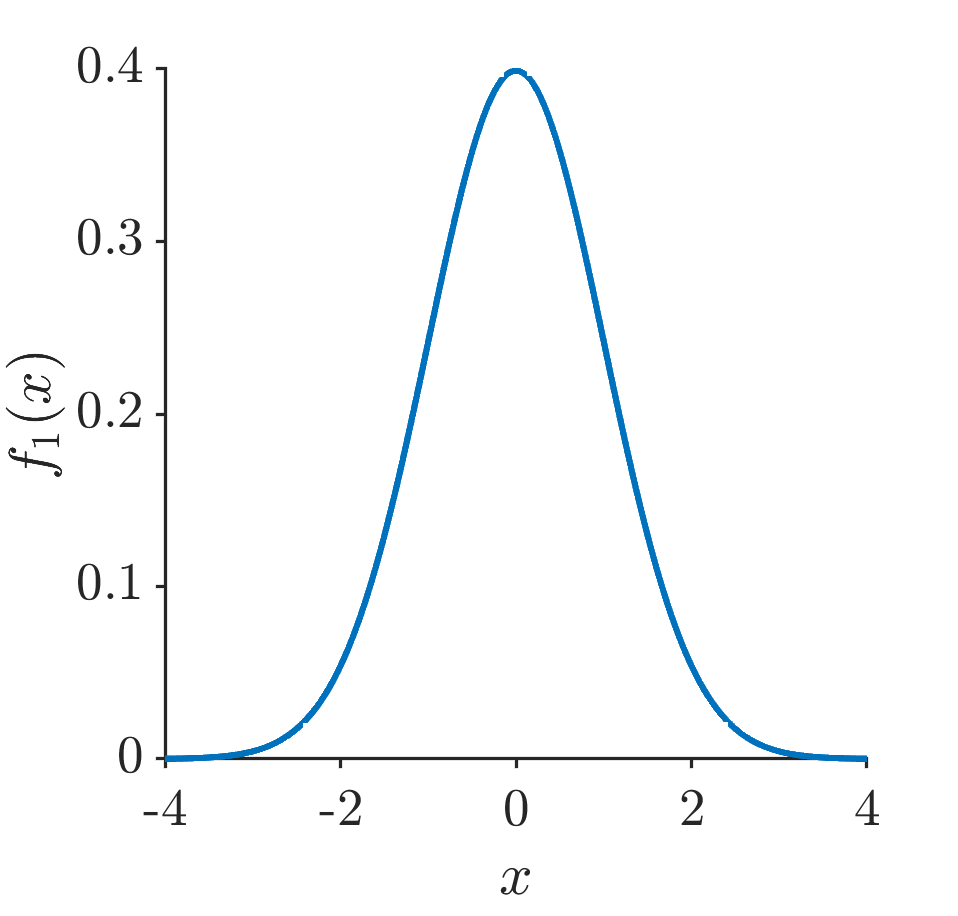}
    \includegraphics[width=0.3\textwidth]{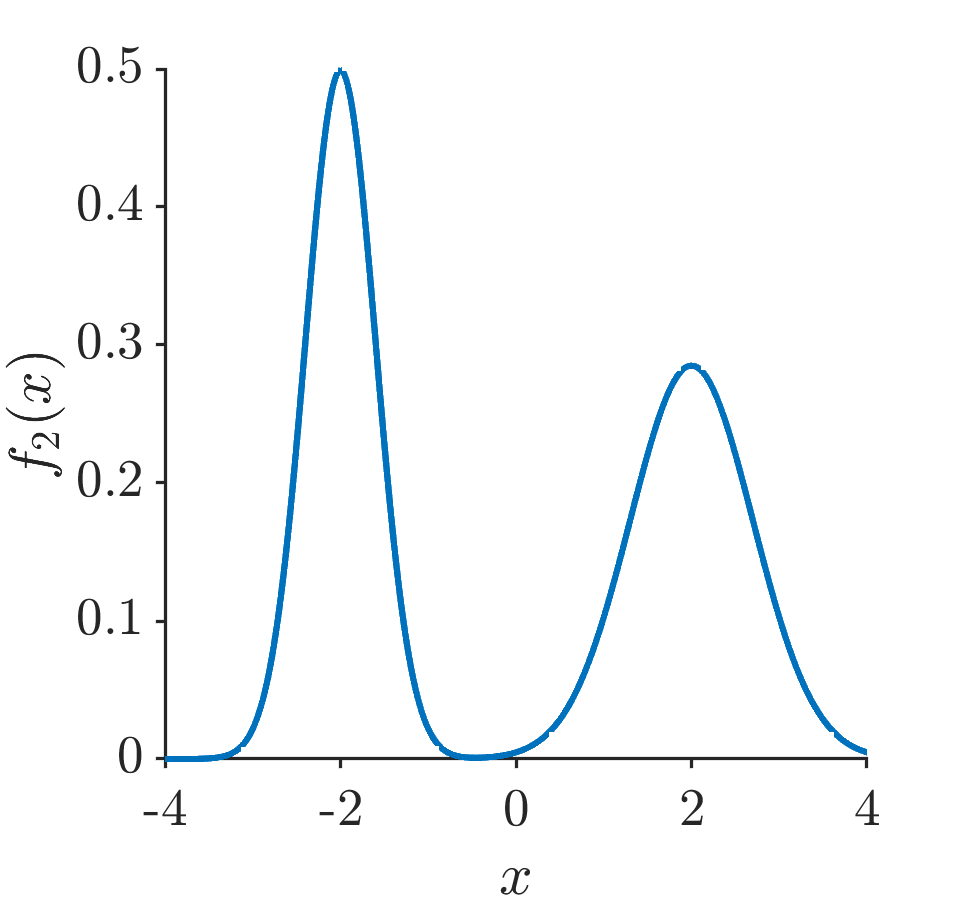}
    \includegraphics[width=0.3\textwidth]{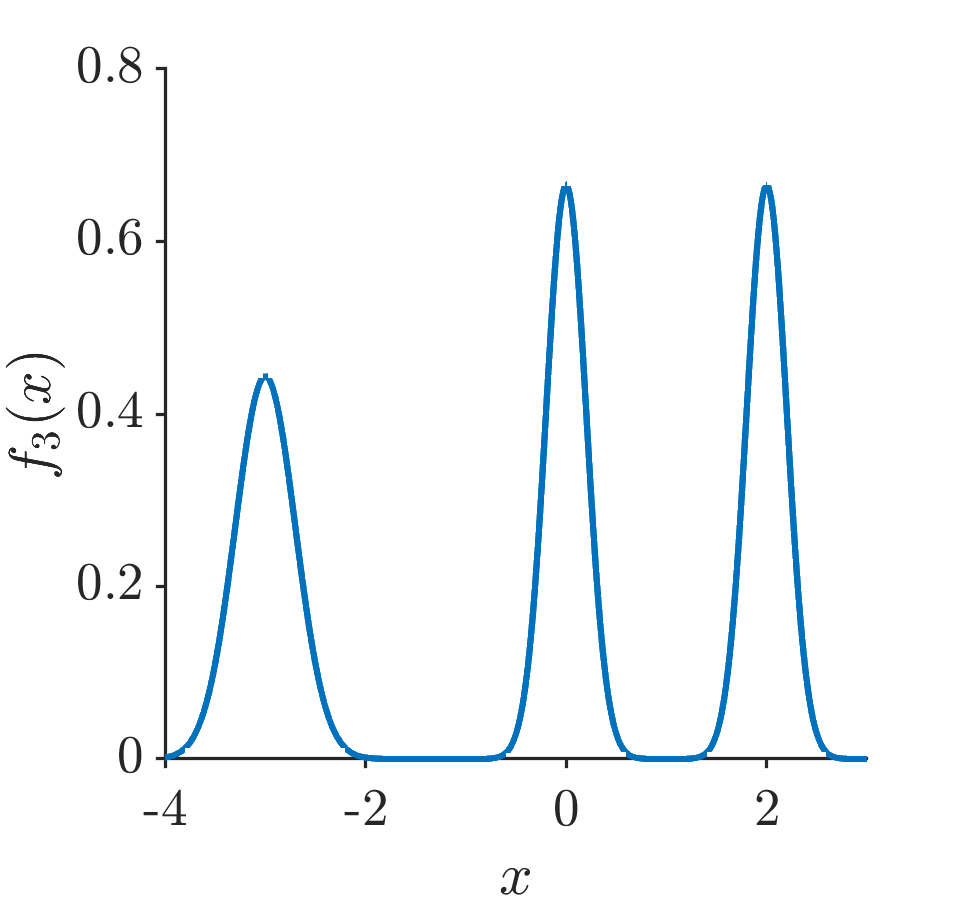}
    \caption{Theoretical densities $f_1(x)$ (left), $f_2(x)$ (middle) and $f_3(x)$ (right).}
    \label{theoretical_dens}
\end{figure}

For each scenario, samples of sizes $N_1=3000$ and $N_2=5000$ were generated. We repeated each simulation 100 times and estimated the density functions via penalized maximum likelihood approach as described in Section \ref{PML1D} with the following setting. Given the shapes of the densities, we employed quadratic splines, i.e., $k=2$, and first-order differences, i.e., $\mathrm{d}=1$. Equidistant knot sequences, i.e., sequences with equally spaced inner knots, were used for 
$g = 1,2,\dots,10$. For each simulation run and each value of $g$, the optimal value of the penalization parameter $\rho$ was determined by $K$-fold cross validation, as described in Section \ref{rhoselection}, using $K_1=5$ and $K_2=10$ folds. Tables \ref{opt_rho_f1}, \ref{opt_rho_f2} and \ref{opt_rho_f3} report the mean optimal values of $\rho$ over the 100 simulation runs for the densities $f_1(x)$, $f_2(x)$, and $f_3(x)$, respectively.

\begin{table}[h]
\centering
\small
\setlength{\tabcolsep}{5pt}
\begin{tabular}{cc|cccccccccc}
\toprule
 &  & \multicolumn{10}{c}{$g$} \\
\cmidrule(lr){3-12}
$N$ & $K$ & 1 & 2 & 3 & 4 & 5 & 6 & 7 & 8 & 9 & 10 \\
\midrule
3000 & 5  & 0.036 & 0.072 & 0.149 & 0.192 & 0.225 & 0.250 & 0.272 & 0.294 & 0.326 & 0.358 \\
3000 & 10 & 0.037 & 0.074 & 0.149 & 0.191 & 0.235 & 0.258 & 0.286 & 0.303 & 0.326 & 0.366 \\
5000 & 5  & 0.041 & 0.079 & 0.162 & 0.203 & 0.257 & 0.285 & 0.304 & 0.320 & 0.335 & 0.369 \\
5000 & 10 & 0.039 & 0.079 & 0.164 & 0.209 & 0.265 & 0.298 & 0.324 & 0.330 & 0.350 & 0.387 \\
\bottomrule
\end{tabular}
\caption{Mean values of optimal penalization parameters $\rho$ for estimating the density $f_1(x)$ using different numbers of sampled values ($N_1 = 3000$, $N_2 = 5000$) and different number of folds $K$ used in \eqref{CV_1D} ($K_1 = 5$, $K_2 = 10$) (rows) for different number of inner knots $g$ (columns).}
\label{opt_rho_f1}
\end{table}

The mean values of the optimal penalization parameter exhibit different patterns across the three simulation scenarios. For the unimodal density $f_1(x)$, the mean optimal value of $\rho$ increases steadily with the number of inner knots, as shown in Table~\ref{opt_rho_f1}. For $f_2(x)$, the substantially larger mean value obtained for $g=1$ is followed by a sharp decrease for $g=2$, after which the mean values generally increase, although not monotonically, see Table~\ref{opt_rho_f2}. In contrast, for $f_3(x)$, no clear monotonic relationship between $g$ and the mean optimal value of $\rho$ is observed, see Table~\ref{opt_rho_f3}. Since a separate equidistant knot sequence is constructed for each value of $g$, changing $g$ affects not only the dimension of the spline space but also the locations of the knots. Therefore, a monotonic dependence of the mean optimal penalization parameter on $g$ should not be expected in general. Nevertheless, for each fixed value of $g$, the mean optimal values are very similar across the considered sample sizes and numbers of cross-validation folds.

Since the true reference densities $f_i(x)$, $i=1,2,3$ are known in the simulation study, we evaluated the estimation accuracy using the normalized integrated squared error (NISE). For each simulation run $r$ and each fixed number of inner knots $g$, the optimal penalization parameter $\rho^*_{r,g}$ was selected by cross-validation. The resulting coefficient vector $\mathbf{z}^*_{r,g}$ determines the estimated spline
\[
s^*_{k,r,g}(x) = 
\mathbf{Z}_{k+1}(x)\mathbf{z}^*_{r,g},
\]
obtained by the penalized maximum likelihood procedure. Its inverse clr transformation defines the corresponding estimated compositional spline
\[
\widehat{f}_{r,g}(x) = \operatorname{clr}^{-1}
\bigl(s^*_{k,r,g}\bigr)(x) =_{\mathcal{B}^2(I)}
\exp\bigl(s^*_{k,r,g}(x)\bigr).
\]
The normalized integrated squared error is defined in $\mathcal{B}^2(I)$ as
\[
\operatorname{NISE}_{r,g} = \frac{1}{b-a} 
\left\| f(x)\ominus\widehat{f}_{r,g}(x) \right\|_{\mathcal{B}^2(I)}^2, \qquad r=1,\ldots,100.
\]
Using the isometry of the clr transformation, it can equivalently be evaluated in $L_0^2(I)$ as
\[
\operatorname{NISE}_{r,g} = \frac{1}{b-a}
\left\| \operatorname{clr}(f)(x)-s^{*}_{k,r,g}(x)
\right\|_{L_0^2(I)}^2, \qquad r=1,\ldots,100.
\]
For each density and each combination of the sample size $N$ and the number of cross-validation folds $K$, the values of $g$ were compared using the median NISE over the 100 simulation runs. The distributions of the resulting NISE values for the penalized likelihood estimates of $f_1(x)$, $f_2(x)$, and $f_3(x)$ are displayed in Figures~\ref{NISE_f1}, \ref{NISE_f2}, and \ref{NISE_f3}, respectively.

\begin{figure}[h]
    \centering
    \includegraphics[width=0.4\textwidth]{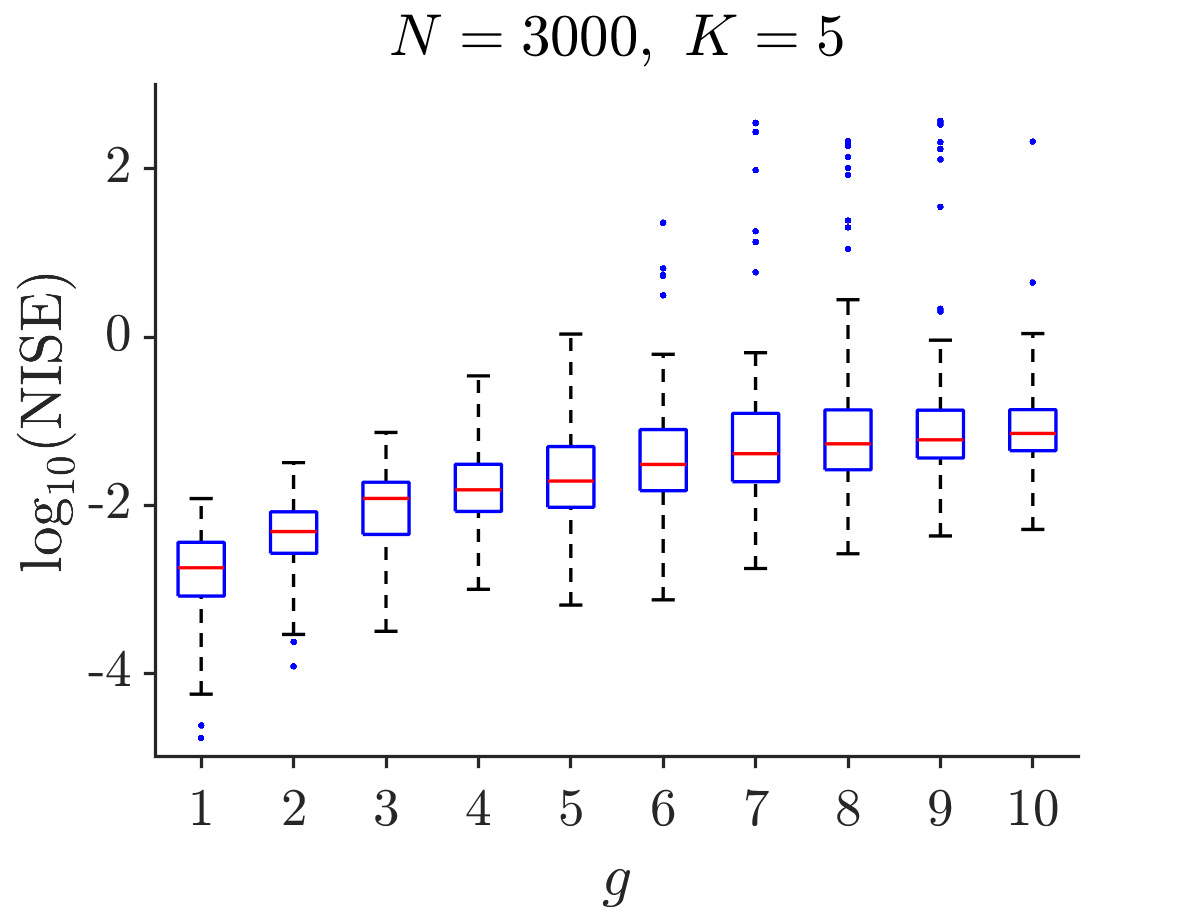}\qquad
    \includegraphics[width=0.4\textwidth]{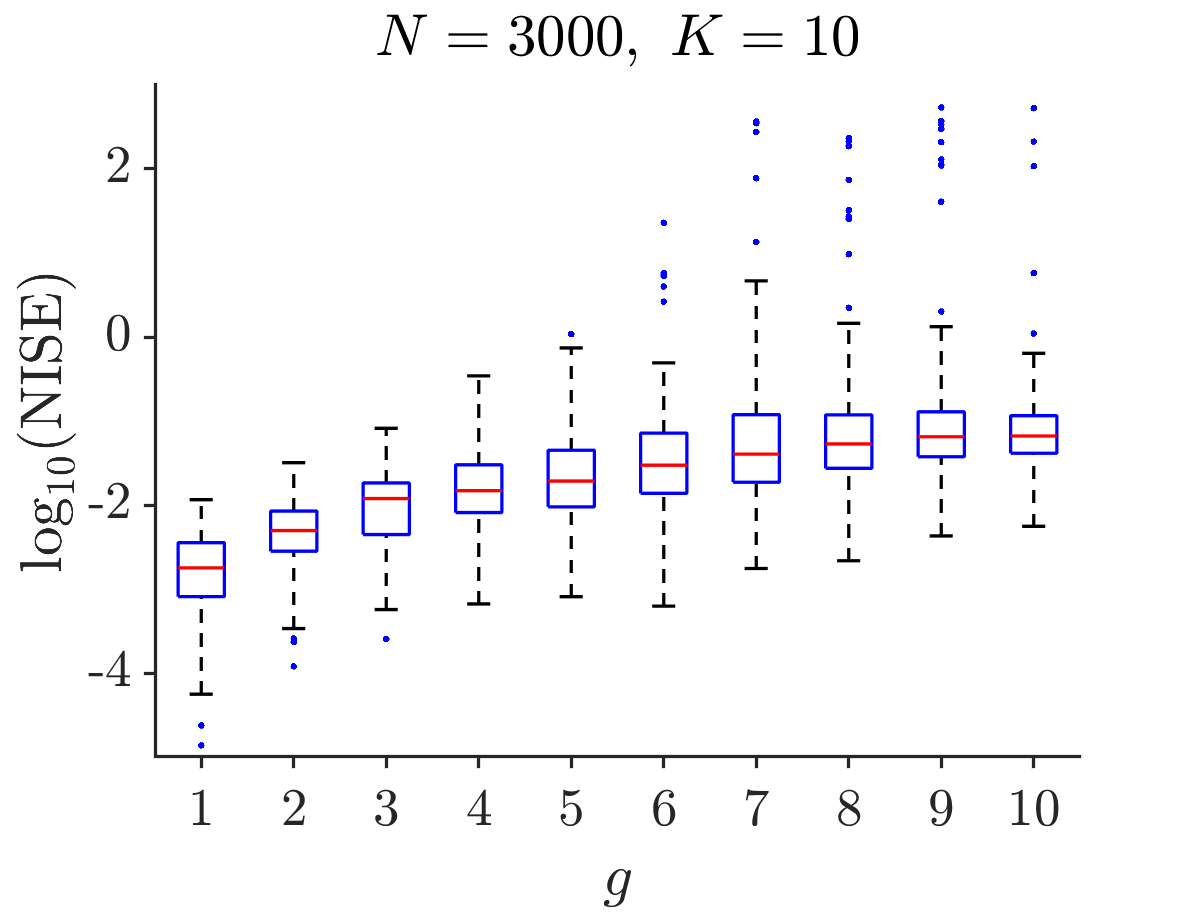}\\[0.5cm]
    \includegraphics[width=0.4\textwidth]{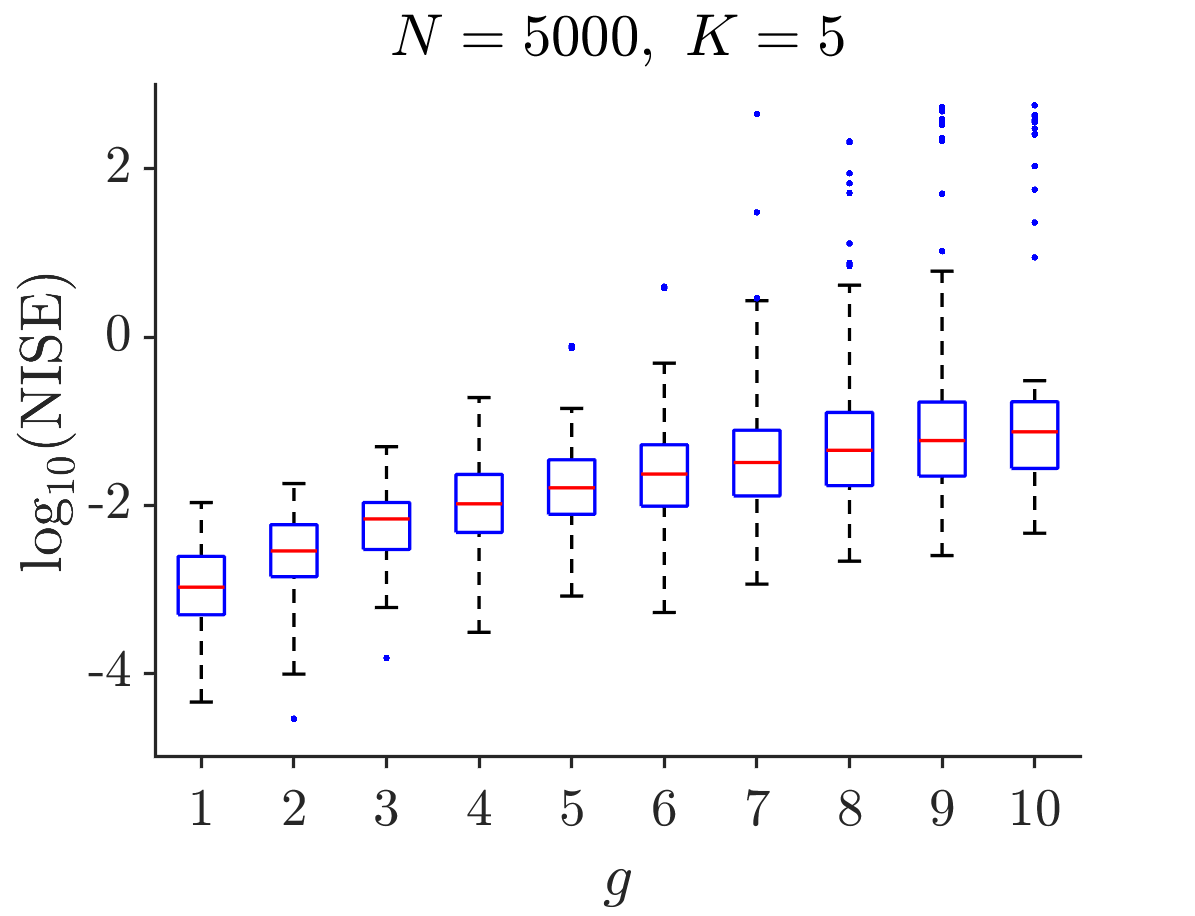}\qquad
    \includegraphics[width=0.4\textwidth]{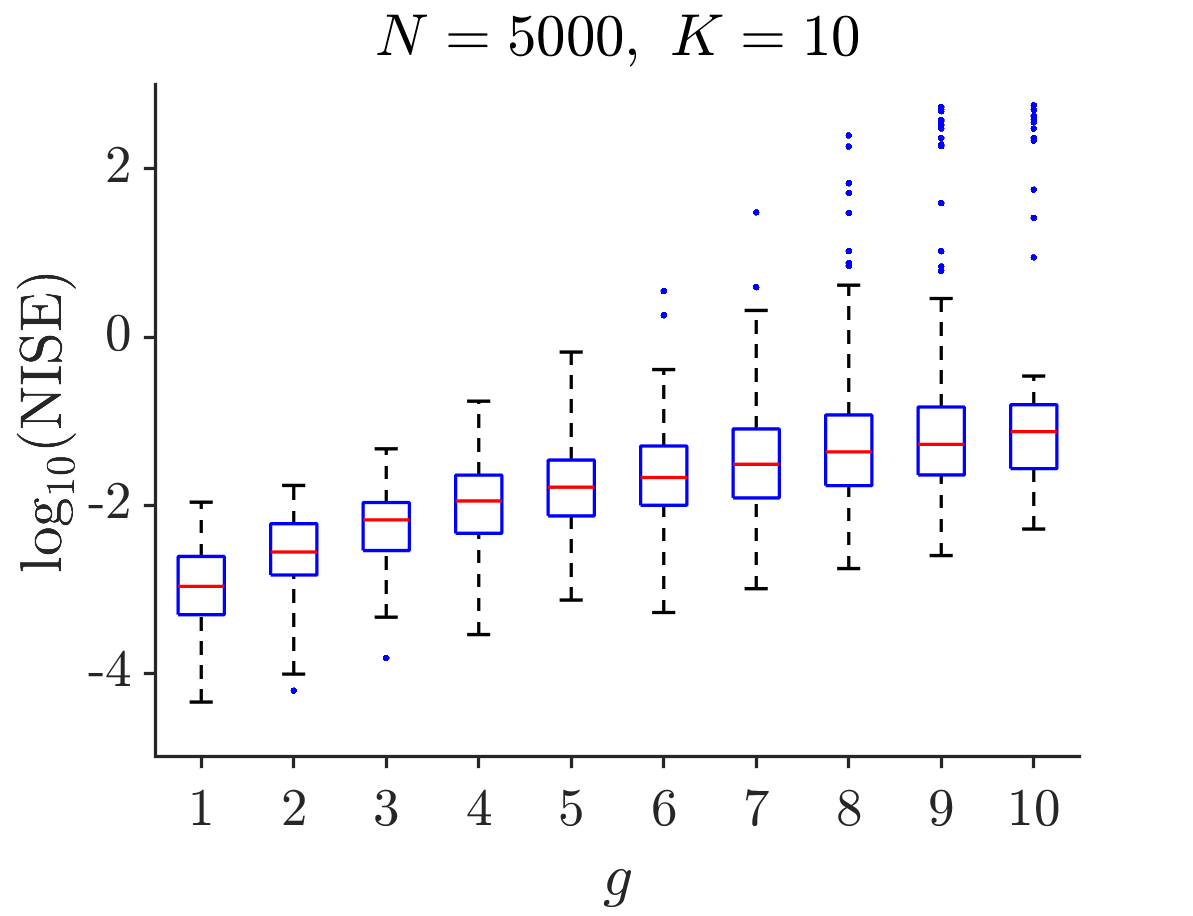}
    \caption{Boxplots of NISE values for the penalized likelihood estimates of $f_1(x)$ for different numbers of inner knots $g$ and the considered combinations of the sample size $N$ and the number of cross-validation folds $K$.}
    \label{NISE_f1}
\end{figure}

The NISE values exhibit different patterns across the three considered densities. For $f_1(x)$, the median NISE increases  monotonically with the number of inner knots, and the smallest values are obtained for $g=1$, see Figure~\ref{NISE_f1}. For $f_2(x)$, the dependence of the median NISE on $g$ is non-monotonic, with the lowest values attained for $g=7$, see Figure~\ref{NISE_f2}. For $f_3(x)$, the median NISE remains relatively high for small values of $g$, decreases substantially as $g$ increases up to $g=8$, where the minimum is attained, and remains higher for $g=9$ and $g=10$, see Figure~\ref{NISE_f3}. In all three cases, the observed patterns are highly similar across the considered sample sizes and numbers of cross-validation folds.

Occasional extreme NISE values are observed mainly for $f_1(x)$ and $f_2(x)$, particularly for larger values of $g$. These values may be associated with estimation inaccuracies in sparsely sampled tail regions. Since NISE is evaluated after the clr transformation, even small differences in areas of low density values may be amplified on the logarithmic scale and lead to large errors.

For comparison, the densities were also estimated using Gaussian kernel density estimators with bandwidths selected by Silverman's rule of thumb, Scott's rule, and the Sheather-Jones method. The resulting NISE distributions are shown in Figure \ref{NISE_kde}.

\begin{figure}[h]
    \centering
    \includegraphics[width=0.3\textwidth]{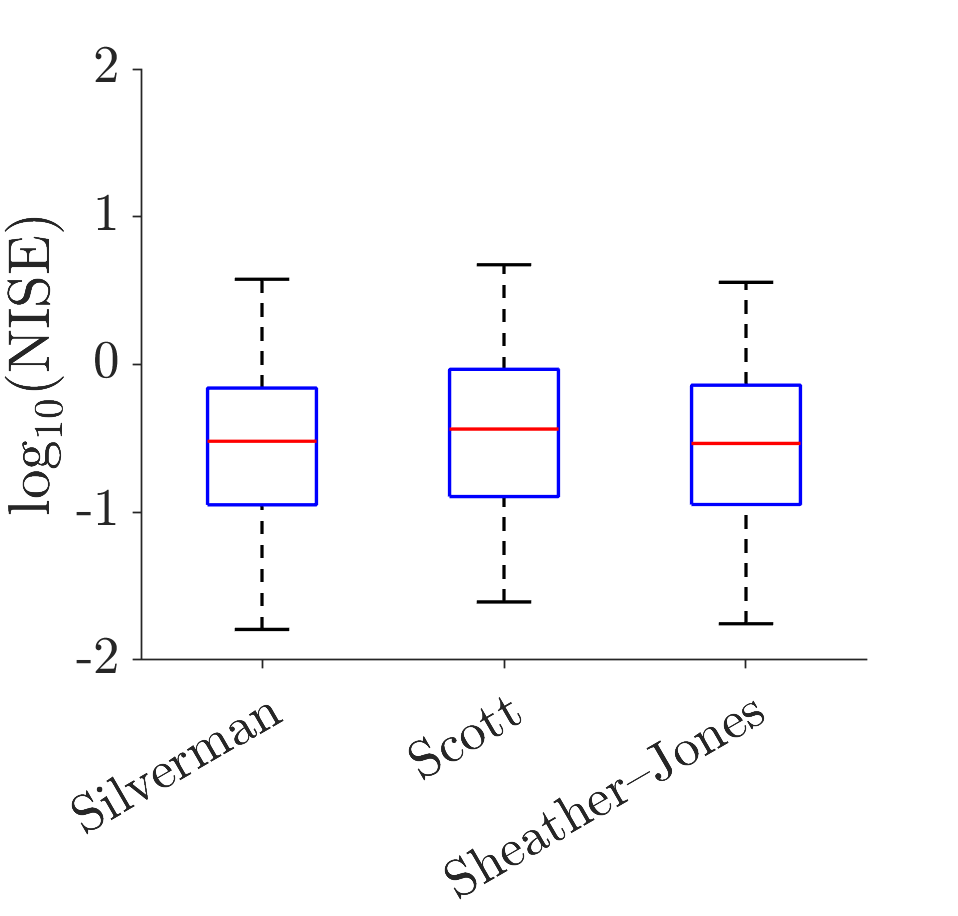}
    \includegraphics[width=0.3\textwidth]{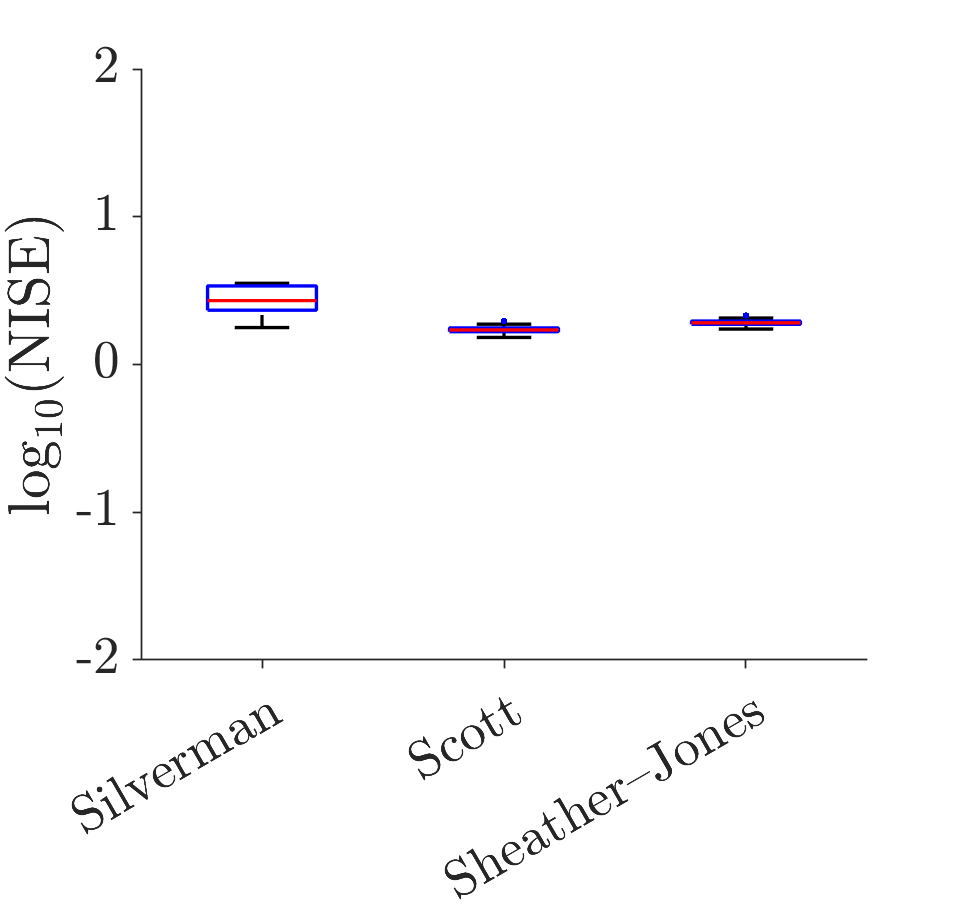}
    \includegraphics[width=0.3\textwidth]{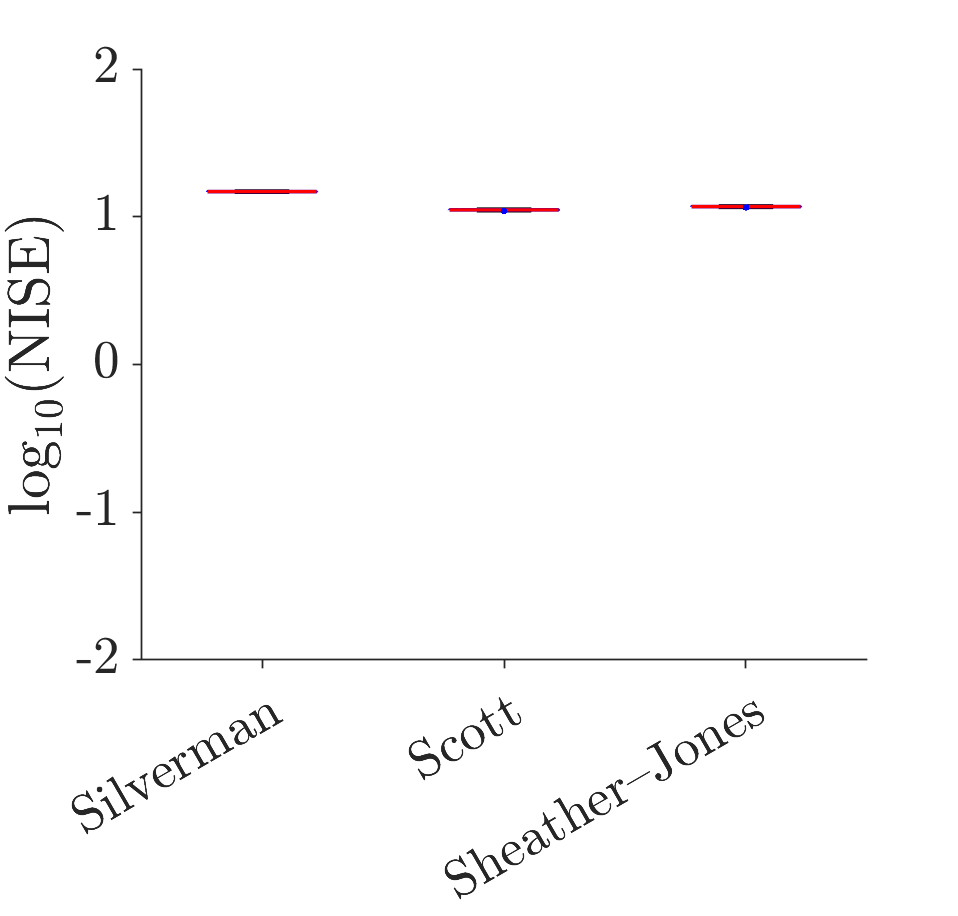}
    \caption{Boxplots of NISE values for kernel density estimates of $f_1(x)$, $f_2(x)$, and $f_3(x)$ (from left to right) using Silverman's rule of thumb, Scott's rule, and the Sheather--Jones method for bandwidth selection, with $N=3000$.}
    \label{NISE_kde}
\end{figure}

The boxplots in Figure~\ref{NISE_kde} indicate that the kernel density estimates exhibit lower variability and no comparably extreme NISE values. However, their median NISE values are higher than the minimum median NISE values obtained by the penalized likelihood  estimator for all three densities, with particularly pronounced differences for the multimodal densities $f_2(x)$ and $f_3(x)$. For a direct numerical comparison, the median NISE values for the penalized likelihood and kernel density estimates are reported in Tables~\ref{median_NISE_f1}, \ref{median_NISE_f2}, \ref{median_NISE_f3} and \ref{median_NISE_kde}.

\begin{table}[h]
\centering
\small
\setlength{\tabcolsep}{5pt}
\begin{tabular}{cc|cccccccccc}
\toprule
 &  & \multicolumn{10}{c}{$g$} \\
\cmidrule(lr){3-12}
$N$ & $K$ & 1 & 2 & 3 & 4 & 5 & 6 & 7 & 8 & 9 & 10 \\
\midrule
3000 & 5  & \textbf{0.011}& 0.029& 0.072& 0.091& 0.116& 0.183& 0.245& 0.321& 0.358& 0.427 \\
3000 & 10 & \textbf{0.011}& 0.03& 0.072& 0.089& 0.116& 0.179& 0.242& 0.32& 0.39& 0.398\\
5000 & 5 &\textbf{0.006}& 0.017& 0.041& 0.062& 0.096& 0.14& 0.192& 0.269& 0.351& 0.446\\
5000 & 10 & \textbf{0.006}& 0.017& 0.04& 0.067& 0.098& 0.128& 0.183& 0.257& 0.317& 0.449 \\
\bottomrule
\end{tabular}
\caption{Median NISE values of $f_1(x)$ using different numbers of sampled values ($N_1 = 3000$, $N_2 = 5000$) and different number of folds used in \eqref{CV_1D} ($K_1 = 5$, $K_2 = 10$) (rows) for different numbers of inner knots (columns) with the highlighted minimal NISE for each setting.}
\label{median_NISE_f1}
\end{table}

\begin{table}[h]
\centering
\small
\setlength{\tabcolsep}{6pt}
\begin{tabular}{c|ccc|ccc}
\toprule
 & \multicolumn{3}{c}{$N=3000$} & \multicolumn{3}{c}{$N=5000$} \\
\cmidrule(lr){2-4} \cmidrule(lr){5-7}
Density & Silverman & Scott & Sheather-Jones & Silverman & Scott & Sheather-Jones \\
\midrule
$f_1(x)$   & 0.302 & 0.366 & \textbf{0.292} & \textbf{0.206} & 0.259 & 0.213 \\
$f_2(x)$  & 2.705 & \textbf{1.714} & 1.914 & 2.368 & \textbf{1.391} & 1.571\\
$f_3(x)$ & 14.81 & \textbf{11.133} & 11.722 & 14.023 & \textbf{10.052} & 10.668 \\
\bottomrule
\end{tabular}
\caption{Median NISE values for kernel density estimators with different bandwidth selection rules with highlighted minimum for each setting.}
\label{median_NISE_kde}
\end{table}

Based on the median NISE values, the best-performing numbers of inner knots are $g=1$ for $f_1(x)$, $g=7$ for $f_2(x)$, and $g=8$ for $f_3(x)$. These choices are consistent across all considered combinations of the sample size $N$ and the number of cross-validation folds $K$. Thus, within the considered simulation settings, the multimodal densities require richer spline representations than the unimodal density $f_1(x)$. Examples of the resulting estimates of $f_1(x)$, $f_2(x)$, and $f_3(x)$ are shown in Figures~\ref{final_comp_f1}, \ref{final_comp_f2} and \ref{final_comp_f3}, respectively. Figures~\ref{final_comp_f2} and~\ref{final_comp_f3} also illustrate the substantially larger NISE values obtained by kernel density estimation for the multimodal densities, particularly for $f_3(x)$. On the original density scale, the kernel estimates smooth the local modal structure and the low-density regions between the modes. These discrepancies become considerably more pronounced after the clr transformation, as shown in the right panels of the figures. Since NISE is computed in $L_0^2(I)$ as the squared norm of the difference between the clr-transformed true and estimated densities, the resulting discrepancies on the logarithmic scale lead to large NISE values. This effect is especially pronounced for $f_3(x)$, which consists of three narrow and well-separated components.

\begin{figure}[h]
    \centering
    \includegraphics[width=0.3\textwidth]{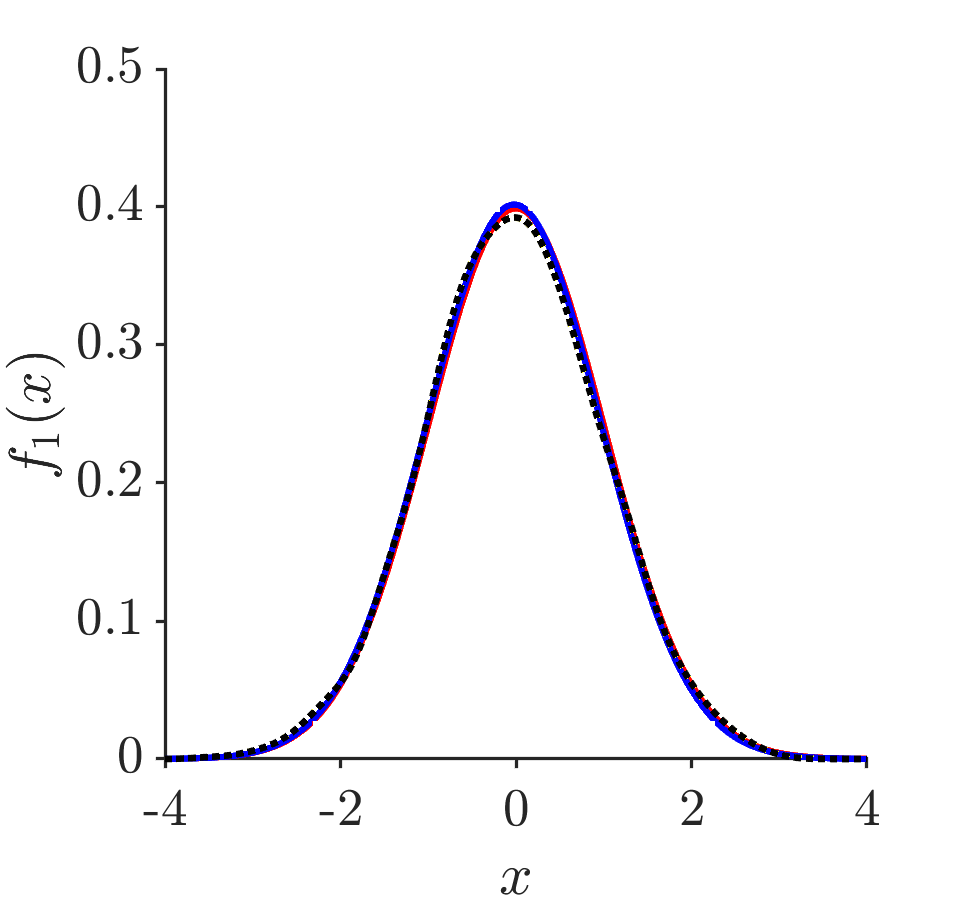}
    \includegraphics[width=0.3\textwidth]{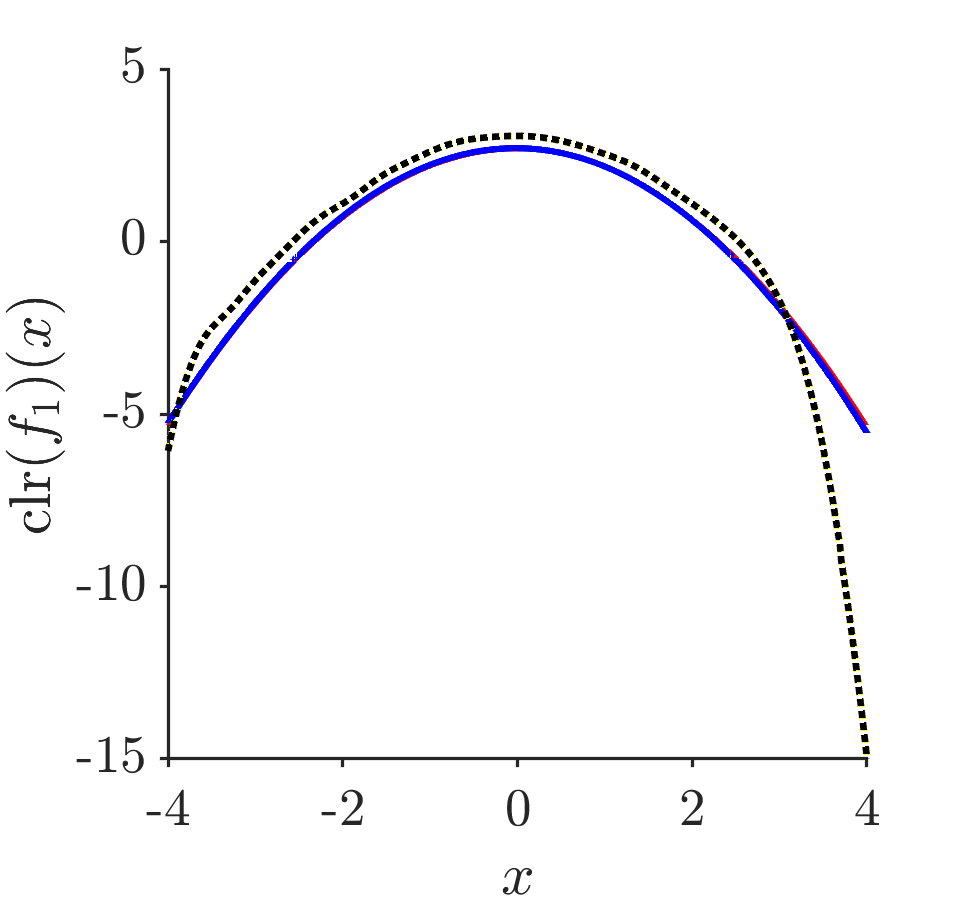}
    \caption{Comparison of the true density $f_1(x)$ (red), the penalized likelihood estimate constructed with $N=3000$ using $g=1$ and $\rho = 0.036$ (blue), and the kernel density estimates with $N=3000$ obtained using Silverman's rule of thumb (dotted, green), Scott's rule (dotted, yellow), and the Sheather--Jones method (dotted, black). The three kernel density estimates coincide. The unit-integral representatives in $\mathcal{B}^2(I)$ are shown in the left panel and their clr representations in $L_0^2(I)$ are shown in the right panel.}
    \label{final_comp_f1}
\end{figure}

Overall, the univariate simulation study shows that the proposed penalized likelihood estimator can accurately represent densities with varying structural complexity, provided that the spline complexity and the penalization parameter are selected appropriately. For the best-performing number of inner knots, the proposed estimator achieves lower median NISE values than all considered kernel density estimators in each of the three simulation scenarios. Moreover, the estimated densities are represented directly by their spline coefficients, which can subsequently be used in functional data analysis methods.

\subsection{Bivariate scenario}\label{subsec:bivariate_sim}
In the bivariate setting, we first considered the standard normal density 
$$
f_1(x,y) = \frac{1}{2\pi} \exp\left(-\frac{x^2+y^2}{2}\right)
$$
with the domain $\Omega = [-3,3]\times[-3,3]$. For the second scenario, we considered a bimodal bivariate Gaussian mixture having density
$$
f_2(x,y) = \frac{1}{2}\mathcal{N}_2(\boldsymbol{\mu}_1,\boldsymbol{\Sigma}_1) + \frac{1}{2}\mathcal{N}_2(\boldsymbol{\mu}_2,\boldsymbol{\Sigma}_2),
$$
where $\boldsymbol{\mu}_1=(-1,\ -1)^{\top}$, $\boldsymbol{\mu}_2=(1,\ 1)^{\top}$ and 
$$
\boldsymbol{\Sigma}_1 =
\begin{pmatrix}
1.6 & 0.2\\
0.2 & 0.5
\end{pmatrix},
\quad
\boldsymbol{\Sigma}_2 =
\begin{pmatrix}
1.6 & -0.2\\
-0.2 & 0.5
\end{pmatrix}.
$$
The domain was set to $\Omega = [-4,4]\times[-3,3]$. The two theoretical densities are pictured in Figure \ref{theoretical_dens_2D}.

\begin{figure}[h]
    \centering
    \includegraphics[width=0.32\textwidth]{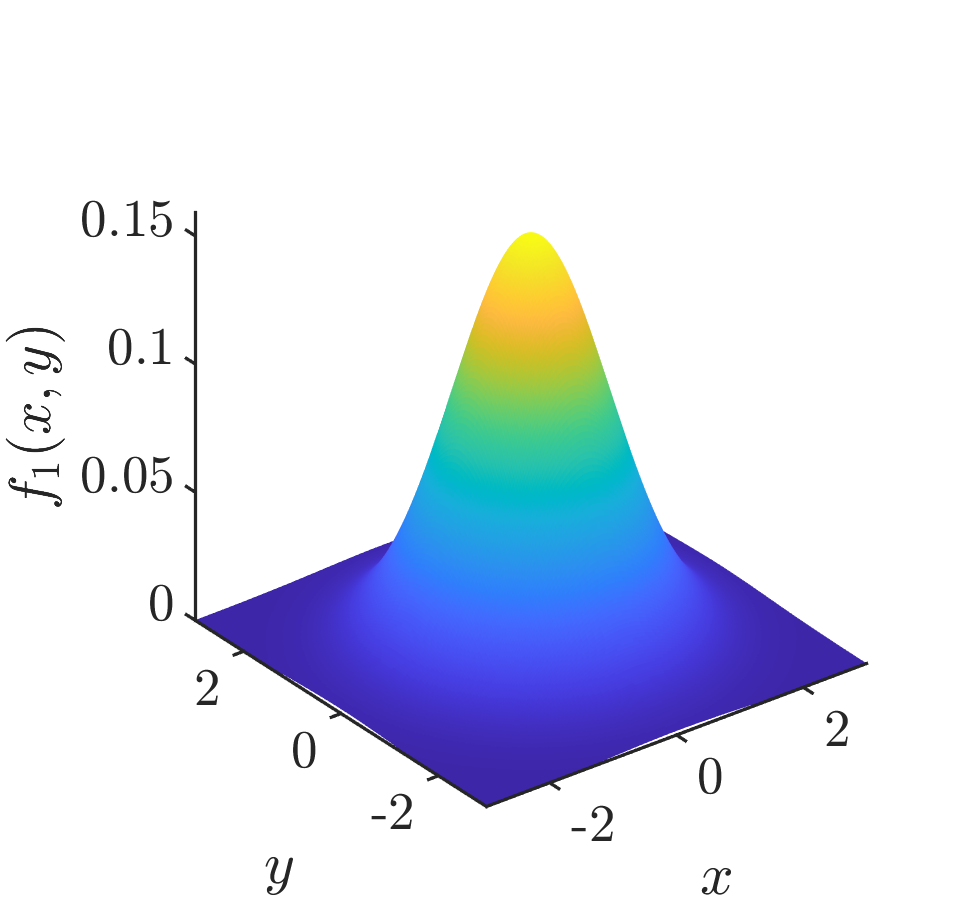}
    \includegraphics[width=0.32\textwidth]{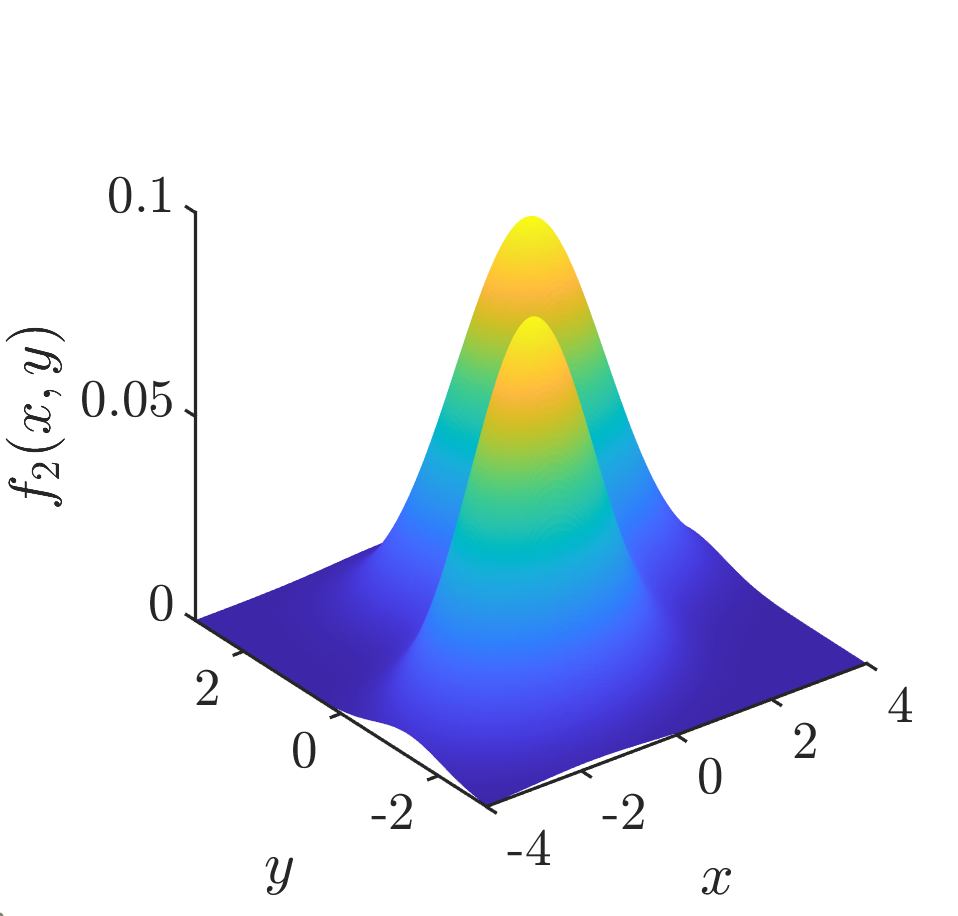}
    \caption{Theoretical densities $f_1(x,y)$ (left) and $f_2(x,y)$ (right).}
    \label{theoretical_dens_2D}
\end{figure}

Analogously to the univariate simulation study, for each bivariate density, we generated samples of sizes $N_1=3000$ and $N_2=5000$ and repeated each simulation 100 times. The densities were estimated using the penalized maximum likelihood method described in Section~\ref{PML2D}. We employed biquadratic splines, i.e., $k=l=2$, and first-order differences, i.e., $\mathrm{d}_x=\mathrm{d}_y=1$. Equidistant knot sequences with the same number of inner knots in both coordinate directions were considered, with $g=h=1,2,\ldots,10$. For each simulation run and each value of $g=h$, the optimal pair of penalization parameters $(\rho_x,\rho_y)$ was selected by $K$-fold cross-validation using $K_1=5$ and $K_2=10$ folds. 
Tables~\ref{opt_rho_f1_2D} and \ref{opt_rho_f2_2D} report the mean optimal values over the 100 simulation runs for the densities $f_1(x,y)$ and $f_2(x,y)$, respectively.

\begin{table}[h]
\centering
\small
\setlength{\tabcolsep}{5pt}
\begin{tabular}{cc|c|cccccccccc}
\toprule
 &  &  & \multicolumn{10}{c}{$g,\ h$} \\
\cmidrule(lr){4-13}
$N$ & $K$ & $\rho$ & 1 & 2 & 3 & 4 & 5 & 6 & 7 & 8 & 9 & 10 \\
\midrule
3000 & 5 & $\rho_x$ & 0.512 & 0.753 & 1.331 & 1.926 & 2.733 & 3.707 & 4.966 & 6.371 & 7.654 & 9.047 \\
3000 & 5 & $\rho_y$ & 0.503 & 0.665 & 1.274 & 1.856 & 2.477 & 3.410 & 4.466 & 5.748 & 7.235 & 8.410 \\
\bottomrule
3000 & 10 & $\rho_x$ & 0.576 & 0.816 & 1.446 & 1.983 & 2.883 & 4.033 & 5.208 & 6.694 & 8.000 & 9.329 \\
3000 & 10 & $\rho_y$ & 0.540 & 0.729 & 1.381 & 2.015 & 2.775 & 3.671 & 4.813 & 6.209 & 7.625 & 8.909 \\
\bottomrule
5000 & 5 & $\rho_x$ & 0.635 & 0.884 & 1.639 & 2.244 & 3.284 & 4.272 & 5.740 & 7.383 & 8.649 & 9.577 \\
5000 & 5 & $\rho_y$ & 0.612 & 0.950 & 1.842 & 2.476 & 3.246 & 4.447 & 5.758 & 7.077 & 8.857 & 9.771 \\
\bottomrule
5000 & 10 & $\rho_x$ & 0.669 & 0.928 & 1.774 & 2.338 & 3.386 & 4.525 & 5.806 & 7.853 & 8.939 & 9.730 \\
5000 & 10 & $\rho_y$ & 0.653 & 1.079 & 1.943 & 2.779 & 3.480 & 4.603 & 6.032 & 7.467 & 8.971 & 9.924 \\
\bottomrule
\end{tabular}
\caption{Mean values of optimal penalization parameters $\rho_x$ and $\rho_y$ for estimating the bivariate density $f_1(x,y)$ using different numbers of sampled values ($N_1=3000$, $N_2=5000$) and different number of folds $K$ used in \eqref{CV_2D} ($K_1=5$, $K_2=10$) (rows) for different number of inner knots $g=h$ (columns).}
\label{opt_rho_f1_2D}
\end{table}

For both bivariate densities, the mean optimal values of $\rho_x$ and $\rho_y$ generally increase with the number of inner knots $g=h$, see Tables~\ref{opt_rho_f1_2D} and \ref{opt_rho_f2_2D}. The same qualitative pattern is observed for both sample sizes and both numbers of cross-validation folds.

A clear difference is observed between the two simulation scenarios. For the unimodal density $f_1(x,y)$, the selected penalization parameters are larger than those obtained for the bimodal density $f_2(x,y)$. This is consistent with the greater local flexibility required to represent the two modes of $f_2(x,y)$, for which cross-validation selects weaker penalization. Thus, the selected penalization parameters depend not only on the dimension of the spline representation but also on the structure of the estimated density.

Since the true bivariate densities are known, the estimation accuracy was assessed using the normalized integrated squared error, analogously to the univariate case. For each simulation run $r$ and each fixed number of inner knots $g=h$, let $s^*_{kl,r,g}(x,y)$ denote the
estimated spline obtained by the penalized maximum likelihood procedure. The NISE was computed in $L_0^2(\Omega)$ as 
\[
\operatorname{NISE}_{r,g} = \frac{1}{(b-a)(d-c)}
\left\| \operatorname{clr}(f)(x,y)-s^*_{kl,r,g}(x,y)
\right\|_{L_0^2(\Omega)}^2, \qquad r=1,\ldots,100.
\]
The resulting NISE values for $f_1(x,y)$ and $f_2(x,y)$ are shown as boxplots in Figures~\ref{NISE_f1_2D} and \ref{NISE_f2_2D}, respectively.

\begin{figure}[h]
    \centering
    \includegraphics[width=0.4\textwidth]{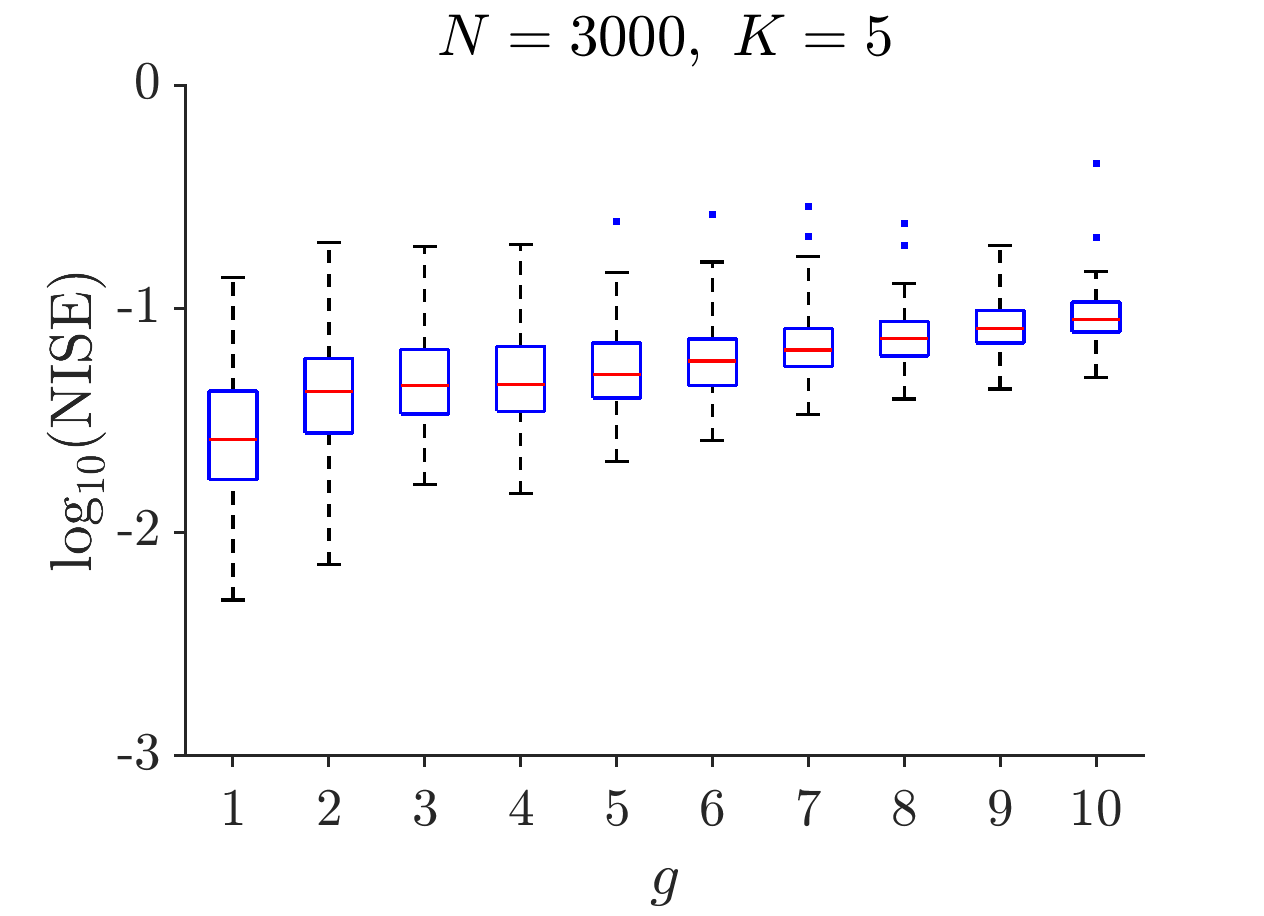}\qquad
    \includegraphics[width=0.4\textwidth]{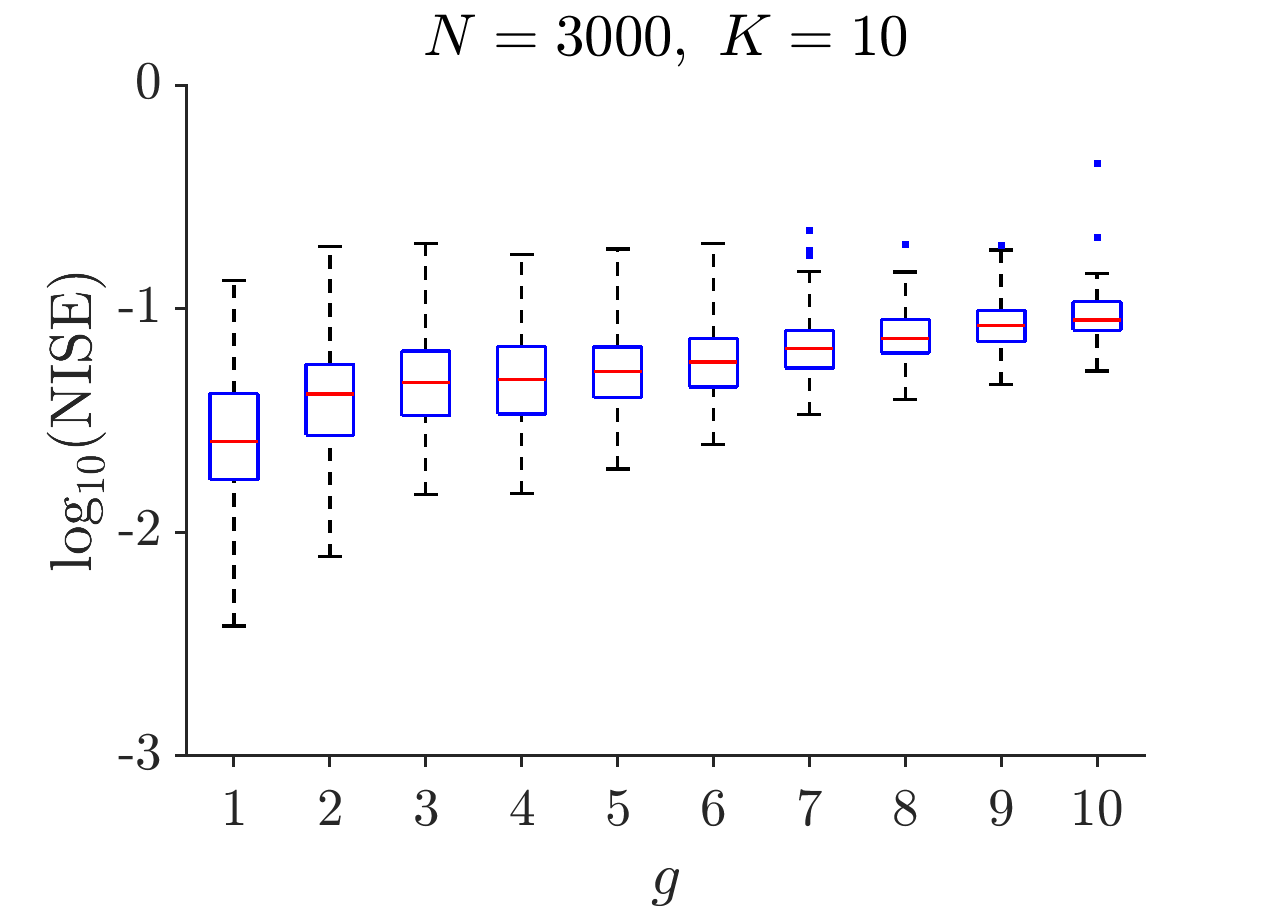}\\[0.5cm]
    \includegraphics[width=0.4\textwidth]{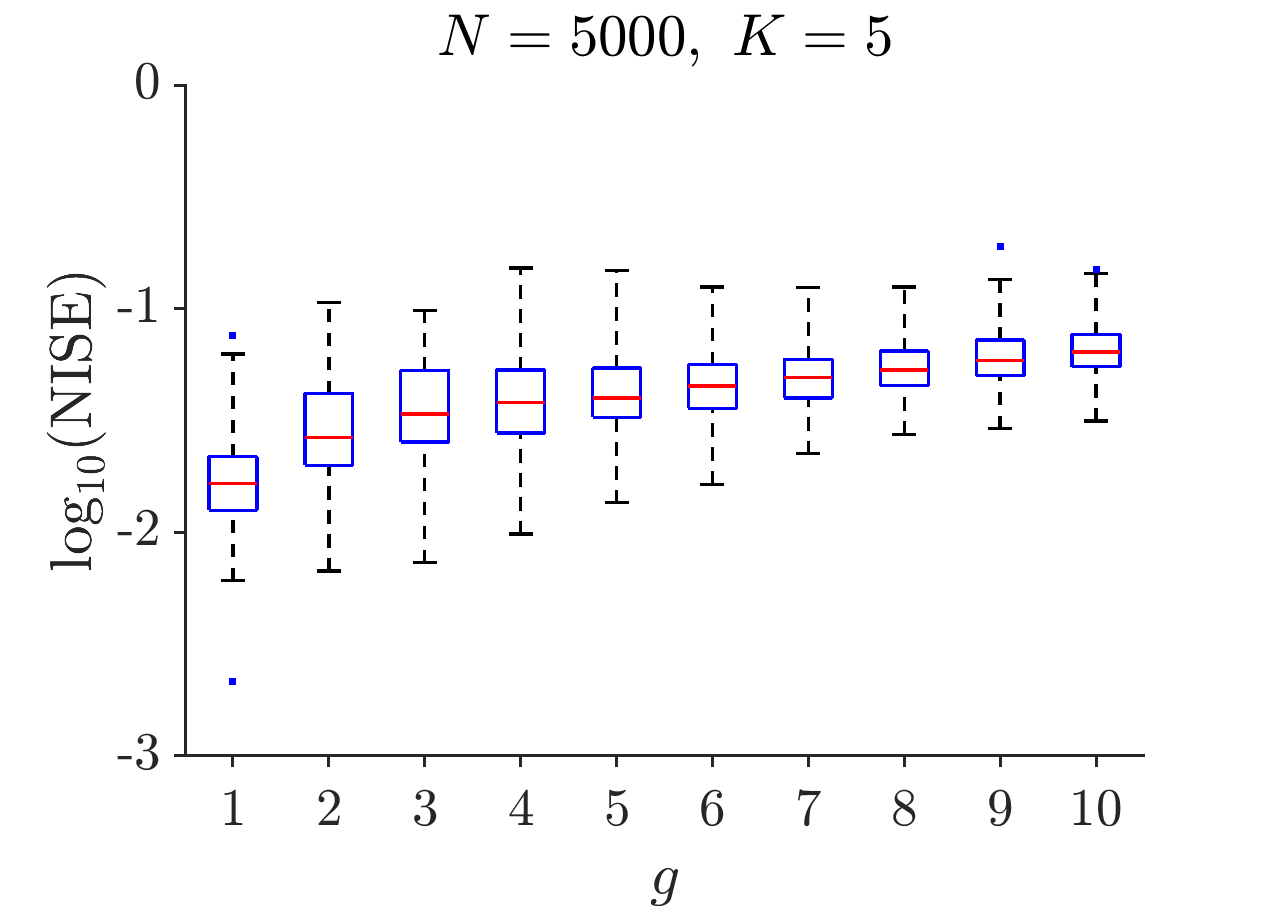}\qquad
    \includegraphics[width=0.4\textwidth]{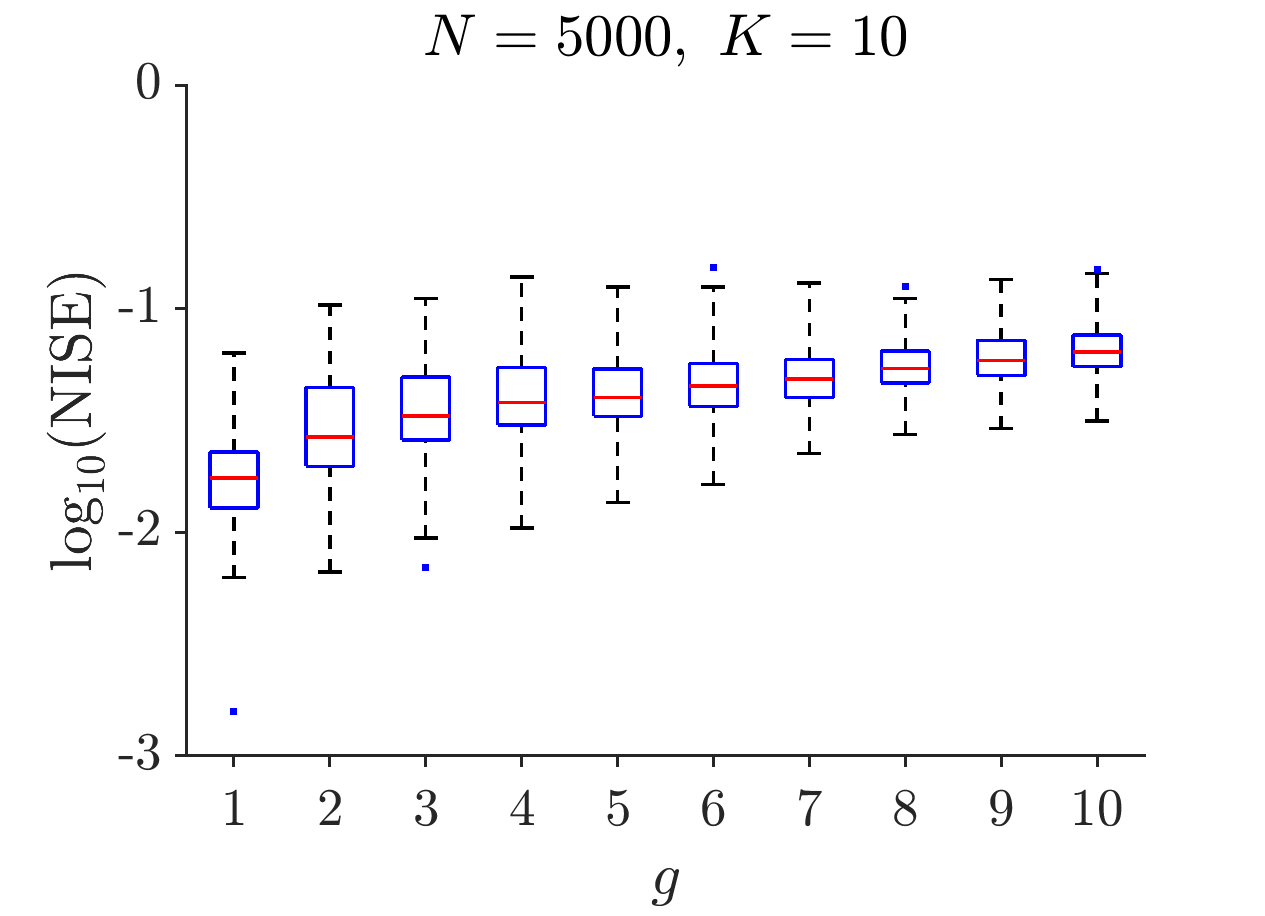}
    \caption{Boxplots of NISE values for the penalized likelihood estimates of $f_1(x,y)$ for different numbers of inner knots $g$ and the considered combinations of the sample size $N$ and the number of cross-validation folds $K$.}
    \label{NISE_f1_2D}
\end{figure}

In the unimodal case, the smallest median NISE for $f_1(x,y)$ is consistently obtained for $g=h=1$ across all considered combinations of $N$ and $K$, see Figure~\ref{NISE_f1_2D}. Increasing the number of inner knots does not improve estimation accuracy and generally leads to larger median NISE values, with occasional extreme values for some of the more flexible spline representations.

For the bimodal density $f_2(x,y)$, the smallest median NISE is obtained for $g=h=3$ in all considered settings, see Figure~\ref{NISE_f2_2D}. Thus, compared with the unimodal density, a moderately richer spline space is needed to represent the bimodal structure. Further increases in the number of inner knots do not improve the accuracy and are associated with increasing median NISE values. The qualitative patterns are similar for both sample sizes and both numbers of cross-validation folds.

The two bivariate densities were also estimated using Gaussian kernel density estimators with bandwidths selected by Silverman's rule of thumb and Scott's rule. Unlike the univariate simulation study, we omitted Sheather-Jones method for bandwidth selection, since it does not provide a unified bivariate counterpart. The resulting NISE values for $N=3000$ are shown as boxplots in Figure~\ref{NISE_kde_2D}.

\begin{figure}[h]
    \centering
    \includegraphics[width=0.3\textwidth]{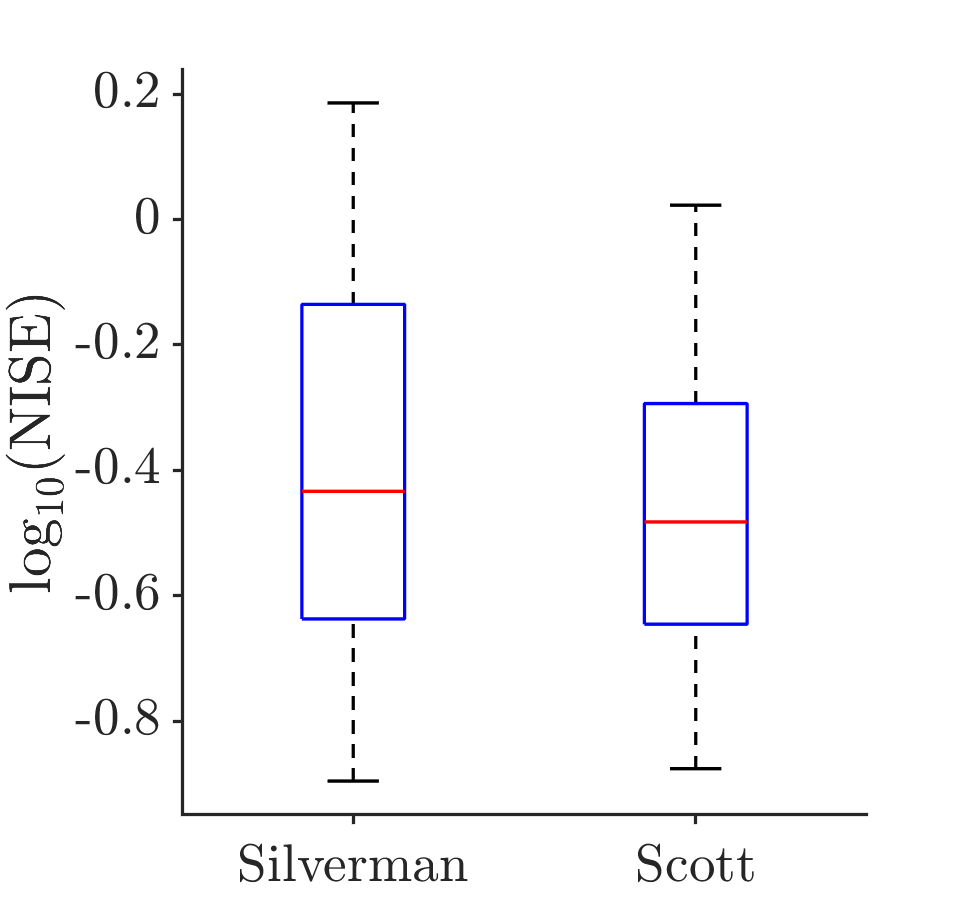}
    \includegraphics[width=0.3\textwidth]{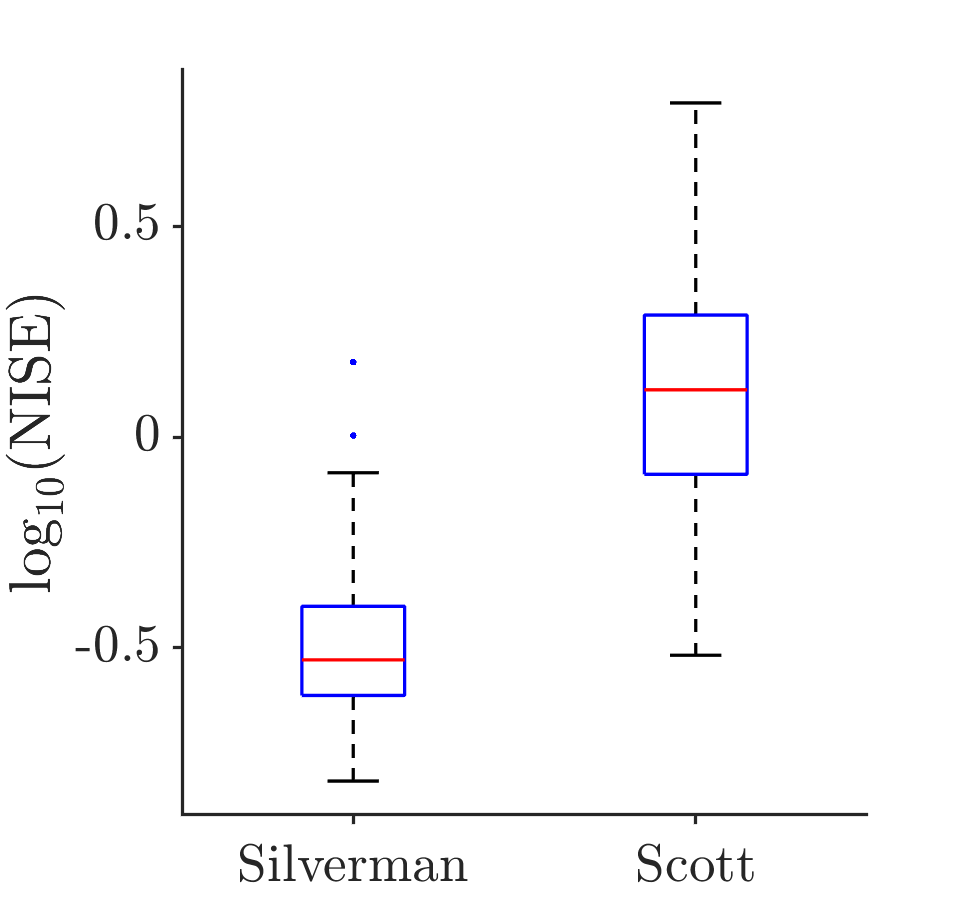}
    \caption{Boxplots of NISE values for Gaussian kernel density estimates of $f_1(x,y)$ and $f_2(x,y)$ (left and right, respectively) using Silverman's rule of thumb and Scott's rule, with $N=3000$.}
    \label{NISE_kde_2D}
\end{figure}

For the unimodal density $f_1(x,y)$, Scott's rule achieves slightly lower NISE values than Silverman's rule of thumb, whereas for the bimodal density $f_2(x,y)$ the lowest values are obtained using Silverman's rule. This illustrates the well-known sensitivity of kernel density estimation to the choice of bandwidth selection rule, particularly when the underlying density has a more complex structure.

The comparison with the proposed penalized likelihood estimator reveals different results for the two bivariate densities. For the unimodal density $f_1(x,y)$, the proposed estimator achieves substantially lower median NISE values than both kernel density estimators. For the bimodal density $f_2(x,y)$, however, Silverman's rule of thumb yields lower median NISE values than the best-performing spline estimates, whereas Scott's rule produces slightly higher values. Thus, the relative performance of the two approaches depends on both the structure of the underlying density and the bandwidth selection rule. The corresponding median NISE values are reported in Tables~\ref{NISE_2D_f1}, \ref{NISE_2D_f2}, and \ref{tab:KDE_NISE_2D}.

\begin{table}[h]
\centering
\small
\setlength{\tabcolsep}{5pt}
\renewcommand{\arraystretch}{1.15}

\begin{tabular}{cc|cccccccccc}
\toprule
 &  & \multicolumn{10}{c}{$g=h$} \\
\cmidrule(lr){3-12}
$N$ & $K$
&1&2&3&4&5&6&7&8&9&10\\
\midrule
3000&5 & 
\textbf{0.035}& 0.057& 0.061& 0.061& 0.068& 0.078& 0.087& 0.098& 0.109& 0.119 \\
3000&10 &
\textbf{0.034}& 0.055& 0.062& 0.064& 0.070& 0.077& 0.089& 0.098& 0.112& 0.119 \\
5000&5 &
\textbf{0.022}& 0.035& 0.045& 0.051& 0.053& 0.060& 0.066& 0.071& 0.078& 0.085 \\
5000&10 &
\textbf{0.023}& 0.036& 0.044& 0.051& 0.053& 0.060& 0.065& 0.072& 0.078& 0.085 \\
\bottomrule
\end{tabular}
\caption{
Median NISE values of $f_1(x,y)$ using different number of sampled values ($N_1=3000$, $N_2=5000$) and different number of folds used in \eqref{CV_2D} ($K_1=5$, $K_2=10$) (rows) for different numbers of inner knots $g=h$ (columns) with highlighted minimal NISE for each setting.}
\label{NISE_2D_f1}
\end{table}

\begin{table}[h]
\centering
\small
\setlength{\tabcolsep}{7pt}
\renewcommand{\arraystretch}{1.15}
\begin{tabular}{c|cc|cc}
\toprule
& \multicolumn{2}{c|}{$N=3000$}
& \multicolumn{2}{c}{$N=5000$} \\
\cmidrule(lr){2-3}
\cmidrule(lr){4-5}
Density & Silverman & Scott & Silverman & Scott \\
\midrule
$f_1(x,y)$ & 0.369 & \textbf{0.329} & 0.283
& \textbf{0.263} \\
$f_2(x,y)$ & \textbf{0.206} & 1.295 & \textbf{0.244} & 1.303\\
\bottomrule
\end{tabular}
\caption{Median NISE values for bivariate Gaussian kernel density estimates using Silverman's rule of thumb and Scott's rule for the considered sample sizes. The smaller median NISE for each density and sample size is highlighted in bold.}
\label{tab:KDE_NISE_2D}
\end{table}

Finally, examples of the resulting penalized maximum estimates of $f_1(x,y)$ and $f_2(x,y)$ are depicted in Figures \ref{final_comp_2D_f1} and \ref{final_comp_2D_f2}, respectively, together with the corresponding kernel density estimates.

\begin{figure}[h]
    \centering
    \includegraphics[width=0.32\textwidth]{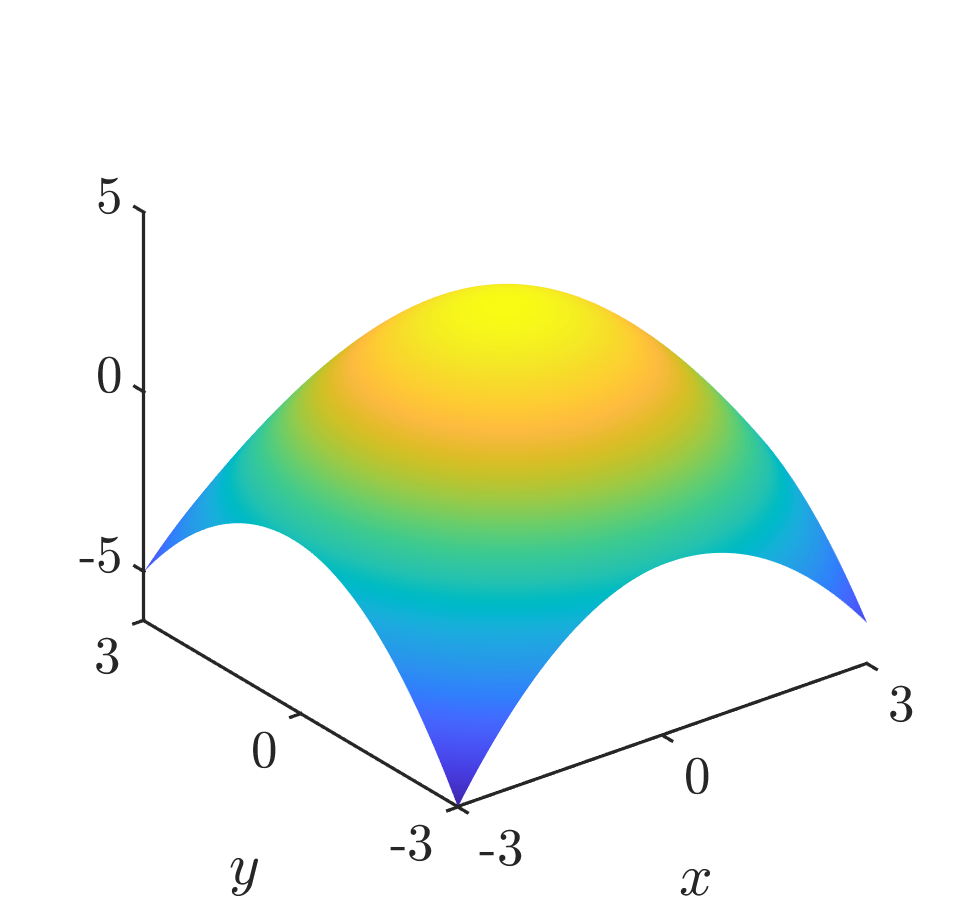}
    \includegraphics[width=0.32\textwidth]{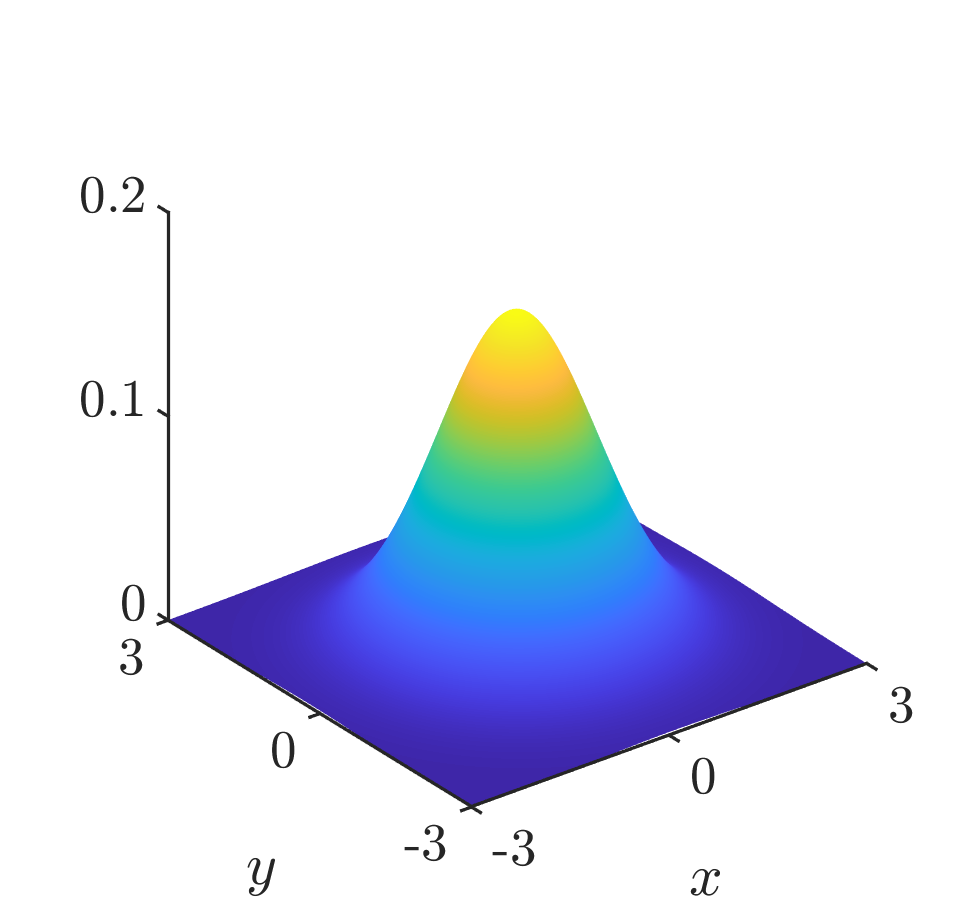}
    \includegraphics[width=0.32\textwidth]{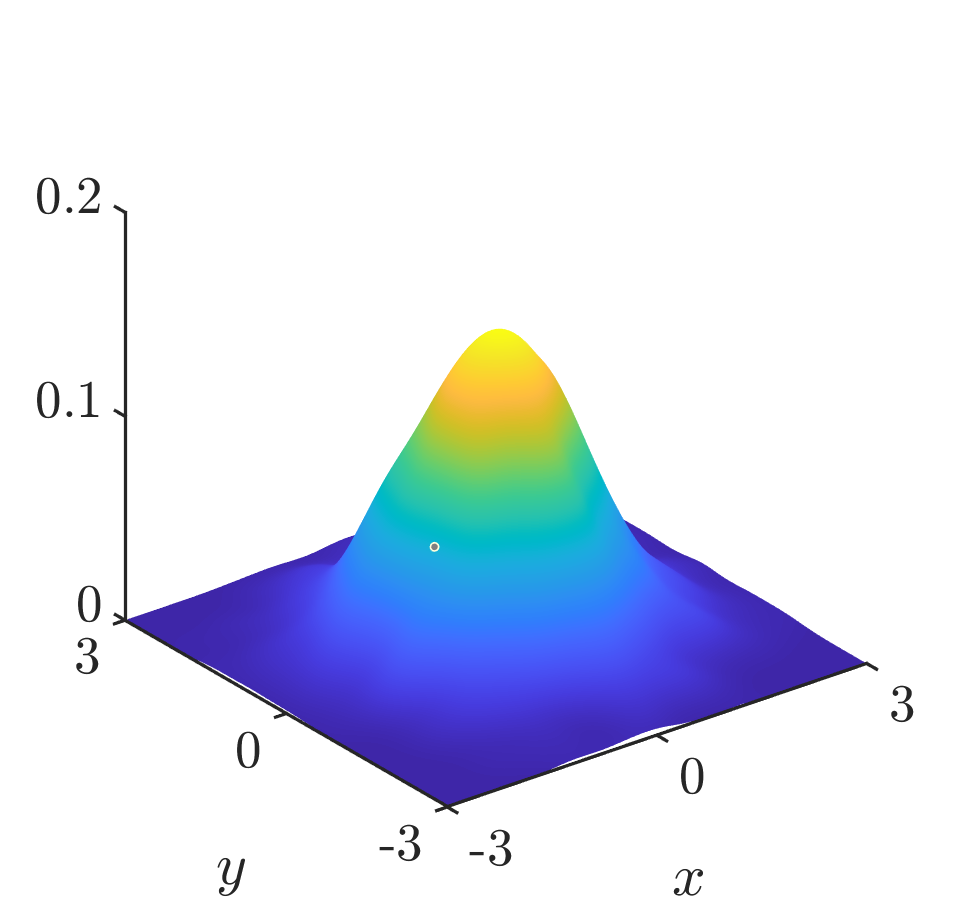}
    \caption{Penalized likelihood estimation of the density $f_1(x,y)$ constructed with $N=3000$ using $g=h=1$ and $\rho_x = 0.512$, $\rho_y=0.503$ in $L^2_0(\Omega)$ (left) and in $\mathcal{B}^2(\Omega)$ (middle), together with the kernel density estimate with $N=3000$ and bandwidth obtained with the Scott's rule in $\mathcal{B}^2(\Omega)$ (right).}
    \label{final_comp_2D_f1}
\end{figure}

Overall, the bivariate simulation study demonstrates that the proposed penalized maximum likelihood approach is capable of recovering both simple and more complex multimodal density functions. The results also emphasize the importance of balancing the flexibility of the spline representation with an appropriate level of penalization. The comparison with kernel density estimation further indicates neither approach uniformly dominates across different density structures. In particular, the proposed method performs particularly well for the unimodal density, whereas the relative performance in the more complex bimodal setting depends on the bandwidth selection rule. These findings support the ability of the proposed likelihood framework to provide flexible bivariate density estimates while retaining direct control over their smoothness via penalization.

\section{Application}\label{Application}
In this section, we illustrate the practical applicability of the proposed methodology using empirical geochemical data previously analyzed in \cite{Skorna2026, Grygar2024, Grygar2026}. The data were obtained from the Register of Contaminated Areas (Registr kontaminovaných ploch), collected by the Department of Agriculture of the Czech Republic \cite{eagri}. The dataset contains measurements of soil concentrations of several chemical elements in 77 Czech administrative districts, including the capital city of Prague, measured in $mg \cdot kg^{-1}$. We consider both a univariate application based on lead (Pb) concentrations and a bivariate application based on the joint distribution of lead (Pb) and zinc (Zn) concentrations.

\subsection{Lead distribution}
In the univariate application, we consider the distribution of log-transformed lead (Pb) concentrations on the common domain $I=[1.358,\ 5.716]$. For each district, the density was estimated using the penalized maximum likelihood approach described in Section \ref{PML1D}. We employed cubic splines, i.e., $k=3$, with first and second-order differences, i.e., $\mathrm{d}\in\{1,2\}$. Equidistant knot sequences with $g=1,2,\dots,10$ inner knots were used.
For each district and each combination of $\mathrm{d}$ and $g$, the penalization parameter $\rho$ was determined by $K$-fold cross-validation, as described in Section \ref{rhoselection}, using $K_1=5$ and $K_2=10$. For each combination of $\mathrm{d}$, $g$, and $K$, we recorded the cross-validated negative log-likelihood and the corresponding optimal penalization parameter for each district. Tables~\ref{application_cv_d=1} and \ref{application_cv_d=2} summarize these quantities across all $77$ districts by their mean and median values for $\mathrm{d}=1$ and $\mathrm{d}=2$, respectively.

\begin{table}[h]
\centering
\small
\setlength{\tabcolsep}{5pt}
\renewcommand{\arraystretch}{1.15}
\begin{tabular}{c|cccc|cccc}
\toprule
& \multicolumn{4}{c|}{$K=5$} 
& \multicolumn{4}{c}{$K=10$} \\
\cmidrule(lr){2-5}
\cmidrule(lr){6-9}
$g$
& Mean CV & Med CV & Mean $\rho$ & Med $\rho$
& Mean CV & Med CV & Mean $\rho$ & Med $\rho$ \\
\midrule
1  & 39.784 & 35.574 & 0.033 & 0.014 & 19.897 & 17.801 & 0.04  & 0.014 \\
2  & 39.401 & 35.004 & 0.037 & 0.014 & 19.68  & 17.35  & 0.045 & 0.018 \\
3  & 38.94  & 34.008 & 0.077 & 0.045 & 19.474 & 16.716 & 0.093 & 0.045 \\
4  & 38.881 & 34.324 & 0.114 & 0.045 & 19.457 & 16.851 & 0.135 & 0.091 \\
5  & 38.785 & 34.325 & 0.198 & 0.091 & 19.38  & 16.875 & 0.198 & 0.091 \\
6  & 38.661 & 33.843 & 0.273 & 0.184 & 19.314 & \textbf{16.669} & 0.267 & 0.184 \\
7  & 38.61  & \textbf{33.821} & 0.366 & 0.233 & 19.265 & 16.912 & 0.335 & 0.184 \\
8  & 38.688 & 33.867 & 0.429 & 0.233 & 19.308 & 16.882 & 0.415 & 0.295 \\
9  & \textbf{38.574} & 34.027 & 0.52  & 0.295 & 19.244 & 16.819 & 0.5   & 0.295 \\
10 & 38.871 & 34.12  & 0.592 & 0.373 &\textbf{ 19.23}  & 16.872 & 0.554 & 0.295 \\
\bottomrule
\end{tabular}
\caption{
Cross-validated negative log-likelihood values and corresponding optimal penalization parameters for the proposed penalized likelihood estimator for $\mathrm{d}=1$ with different numbers of inner knots $g$ and different numbers of folds $K$ with highlighted minimal mean and median CV values.}
\label{application_cv_d=1}
\end{table}

For small numbers of inner knots, the mean and median cross-validation criteria are generally higher. As $g$ increases, the criteria decrease, although the pattern differs somewhat between first- and second-order differences. For $\mathrm{d}=1$, see Table \ref{application_cv_d=1}, the smallest summary cross-validation criteria are attained for $g$ between 6 and 10, depending on the number of folds and whether the mean or median criterion is considered. For $\mathrm{d}=2$, as reported in Table \ref{application_cv_d=2}, the minima are more concentrated at the largest spline spaces, namely $g=9$ or $g=10$. At the same time, the optimal penalization parameter generally increases with $g$, particularly beyond the smallest knot configurations, indicating that increasing spline flexibility is accompanied by stronger penalization. Although the absolute cross-validation values differ between $K=5$ and $K=10$, the overall patterns with respect to $g$ are similar for both choices of $K$.

For comparison, we also considered Gaussian kernel density estimation with bandwidths selected using Silverman's rule of thumb, Scott’s rule, and the  Sheather-Jones method, as in Section \ref{Simulation1D}. Since the true reference densities are unknown, the kernel density estimators were evaluated using the same cross-validation partitions as the proposed  penalized likelihood estimator, ensuring a consistent comparison between the two approaches. Tables \ref{application_cv_kde_d=1} and \ref{application_cv_kde_d=2} summarize the resulting cross-validated negative log-likelihood values across all 77 districts by their mean and median values.

\begin{table}[h]
\centering
\small
\setlength{\tabcolsep}{7pt}
\renewcommand{\arraystretch}{1.15}
\begin{tabular}{l|cc|cc}
\toprule
& \multicolumn{2}{c|}{$K=5$} 
& \multicolumn{2}{c}{$K=10$} \\
\cmidrule(lr){2-3}
\cmidrule(lr){4-5}
Method 
& Mean CV & Median CV
& Mean CV & Median CV \\
\midrule
Silverman & 55.578 & 43.149 & 25.717 & 19.712 \\
Scott     & \textbf{48.853} & \textbf{39.243} & 24.029 & \textbf{19.261} \\
Sheather-Jones & 50.332 & 40.422 & \textbf{23.619} & 19.387 \\
\bottomrule
\end{tabular}
\caption{
Cross-validated negative log-likelihood values of kernel density estimators with different bandwidth selection rules, with highlighted minimal mean and median CV values.}
\label{application_cv_kde_d=1}
\end{table}

Among the three bandwidth selection rules, Silverman’s rule of thumb yields the highest cross-validation criteria in all settings. For $K=5$, Scott’s rule gives the smallest mean and median values, whereas for $K=10$, the smallest mean value is obtained with the Sheather–Jones method and the smallest median value with Scott’s rule. More importantly, for both $\mathrm{d}=1$ and $\mathrm{d}=2$, the minimum mean and median cross-validation criteria attained by the proposed spline-based estimator across the considered values of $g$ are lower than the corresponding values obtained by all three kernel density estimators. Thus, for the considered data and cross-validation settings, the proposed penalized likelihood estimator attains lower cross-validated negative log-likelihood values than the examined kernel density estimators.

An example of the resulting density estimate is shown in Figure~\ref{chomutov} for the Chomutov district, together with the corresponding kernel density estimate in $\mathcal{B}^2(I)$ and $L_0^2(I)$.

\begin{figure}[h]
    \centering
    \includegraphics[width=0.3\textwidth]{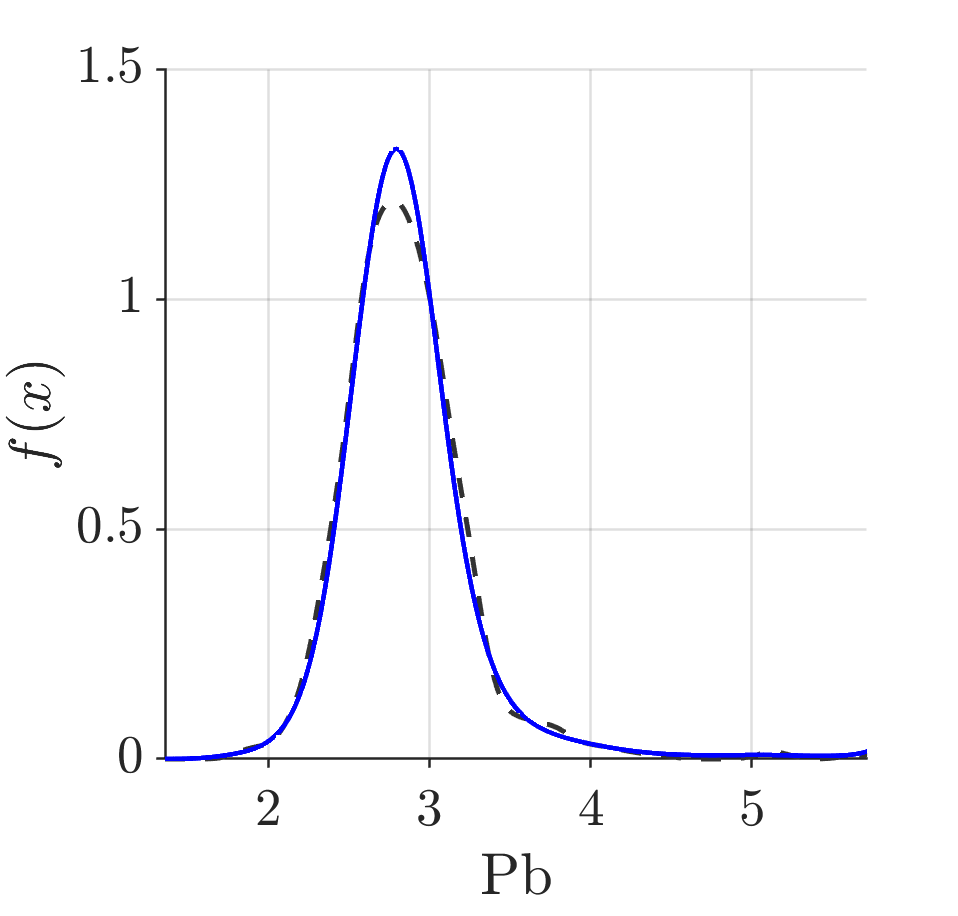}
    \includegraphics[width=0.3\textwidth]{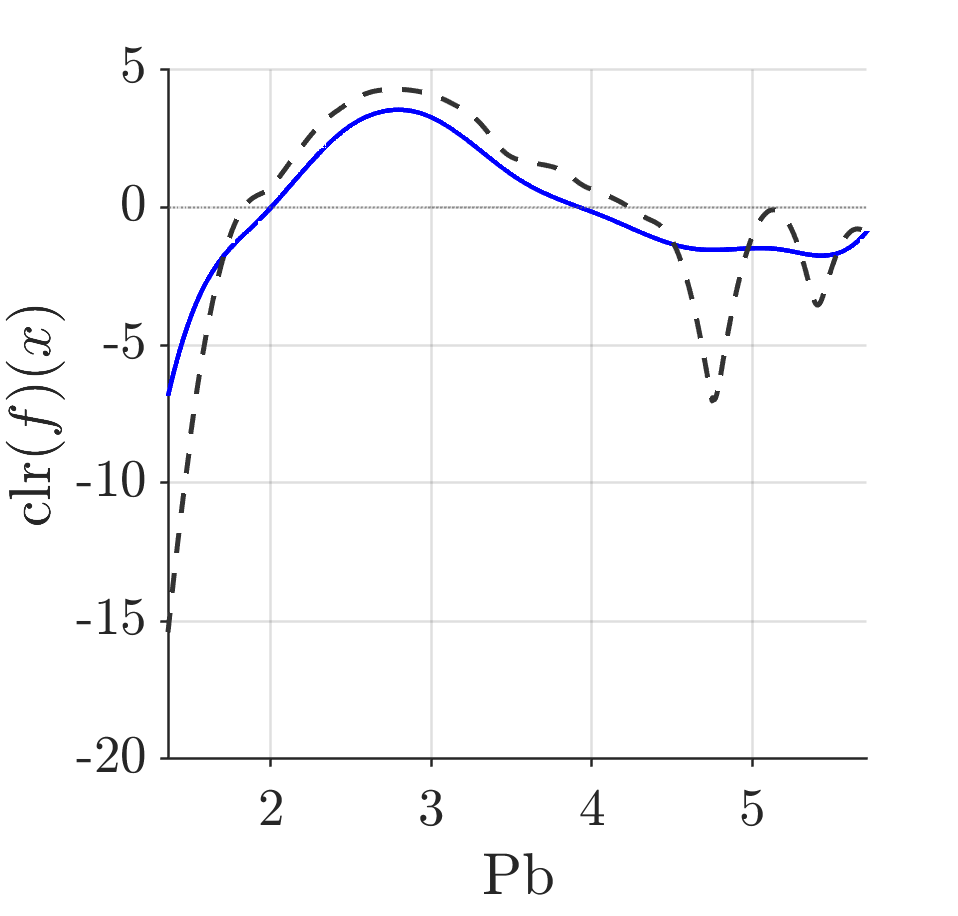}
    \caption{Penalized likelihood estimate of the density function for the Chomutov district with $\mathrm{d}=1$, $g=7$, $k=3$, and $\rho=0.366$ (blue), together with the kernel density estimate (dashed) with bandwidth obtained with the Scott's rule, in $\mathcal{B}^2(I)$ (left) and $L_0^2(I)$ (right).}
    \label{chomutov}
\end{figure}

\subsection{Lead and Zinc distribution}
In the bivariate application, we focus on the joint distribution of log-transformed lead (Pb) and zinc (Zn) concentrations. The common domain was set to $\Omega=[1.358,5.716]\times[1.459,5.663]$. For each district, the density was estimated using the penalized maximum likelihood approach described in Section~\ref{PML2D}. We employed bicubic splines, i.e., $k=l=3$, with equal difference orders in both coordinate directions, $\mathrm{d}_x=\mathrm{d}_y\in\{1,2\}$. Equidistant knot sequences with the same number of inner knots in both directions were used, with $g=h=1,2,\ldots,10$. For each district and each combination of $\mathrm{d}_x=\mathrm{d}_y$ and $g=h$, the pair of penalization parameters $(\rho_x,\rho_y)$ was determined by $K$-fold cross-validation, as described in Section~\ref{rhoselection}, using $K_1=5$ and $K_2=10$ folds.

For each combination of $\mathrm{d}_x=\mathrm{d}_y$, $g=h$, and $K$, we recorded the cross-validated negative log-likelihood and the corresponding optimal pair of penalization parameters $(\rho_x,\rho_y)$ for each district. Tables~\ref{application_cv_2D_d1_K5} and \ref{application_cv_2D_d1_K10} summarize these quantities across all $77$ districts by their mean and median values for $\mathrm{d}_x=\mathrm{d}_y=1$ using $K=5$ and $K=10$, respectively. The corresponding results for $\mathrm{d}_x=\mathrm{d}_y=2$ are reported in Tables~\ref{application_cv_2D_d2_K5} and \ref{application_cv_2D_d2_K10}.

\begin{table}[h]
\centering
\small
\setlength{\tabcolsep}{4pt}
\renewcommand{\arraystretch}{1.15}
\begin{tabular}{c|cccccc}
\toprule
& \multicolumn{6}{c}{$\mathrm{d}_x=\mathrm{d}_y=1,\ K=5$} \\
\cmidrule(lr){2-7}
$g=h$
& Mean CV & Med CV
& Mean $\rho_x$ & Med $\rho_x$
& Mean $\rho_y$ & Med $\rho_y$ \\
\midrule
1 & 80.367 & 71.860 & 0.009 & 0.002 & 0.013 & 0.002 \\
2 & 79.224 & 72.289 & 0.011 & 0.002 & 0.021 & 0.007 \\
3 & 78.767 & 72.940 & 0.031 & 0.013 & 0.057 & 0.043 \\
4 & 78.272 & 72.347 & 0.071 & 0.023 & 0.120 & 0.078 \\
5 & 78.051 & 72.675 & 0.134 & 0.078 & 0.215 & 0.144 \\
6 & 77.992 & 71.251 & 0.224 & 0.144 & 0.370 & 0.264 \\
7 & \textbf{77.704} & \textbf{70.964} & 0.400 & 0.264 & 0.571 & 0.483 \\
8 & 77.992 & 72.682 & 0.520 & 0.264 & 0.863 & 0.483 \\
9 & 78.088 & 72.818 & 0.685 & 0.264 & 1.177 & 0.886 \\
10 & 78.307 & 73.362 & 0.940 & 0.886 & 1.431 & 0.886 \\
\bottomrule
\end{tabular}
\caption{
Cross-validated negative log-likelihood values together with the corresponding optimal penalization parameters $\rho_x$ and $\rho_y$ of the proposed penalized likelihood estimator for different numbers of inner knots $g=h$ using 5-fold cross-validation with $\mathrm{d}_x=\mathrm{d}_y=1$. The smallest mean and median cross-validation values are highlighted in bold.}
\label{application_cv_2D_d1_K5}
\end{table}

\begin{table}[h]
\centering
\small
\setlength{\tabcolsep}{4pt}
\renewcommand{\arraystretch}{1.15}
\begin{tabular}{c|cccccc}
\toprule
& \multicolumn{6}{c}{$\mathrm{d}_x=\mathrm{d}_y=1,\ K=10$} \\
\cmidrule(lr){2-7}
$g=h$
& Mean CV & Med CV
& Mean $\rho_x$ & Med $\rho_x$
& Mean $\rho_y$ & Med $\rho_y$ \\
\midrule
1 & 40.095 & 36.444 & 0.008 & 0.002 & 0.013 & 0.004 \\
2 & 39.507 & 36.629 & 0.011 & 0.002 & 0.017 & 0.007 \\
3 & 39.312 & 36.854 & 0.025 & 0.013 & 0.064 & 0.043 \\
4 & 38.998 & 35.362 & 0.061 & 0.013 & 0.129 & 0.078 \\
5 & 38.917 & 36.042 & 0.150 & 0.078 & 0.212 & 0.144 \\
6 & 38.845 & 34.930 & 0.240 & 0.144 & 0.334 & 0.264 \\
7 & \textbf{38.733} & \textbf{34.889} & 0.395 & 0.264 & 0.571 & 0.483 \\
8 & 38.849 & 35.436 & 0.541 & 0.264 & 0.827 & 0.483 \\
9 & 38.886 & 35.483 & 0.666 & 0.483 & 1.161 & 0.886 \\
10 & 38.949 & 35.609 & 0.879 & 0.483 & 1.495 & 0.886 \\
\bottomrule
\end{tabular}
\caption{
Cross-validated negative log-likelihood values together with the corresponding optimal penalization parameters $\rho_x$ and $\rho_y$ of the proposed penalized likelihood estimator for different numbers of inner knots $g=h$ using 10-fold cross-validation with $\mathrm{d}_x=\mathrm{d}_y=1$. The smallest mean and median cross-validation values are highlighted in bold.}
\label{application_cv_2D_d1_K10}
\end{table}

For $\mathrm{d}_x=\mathrm{d}_y=1$, both the mean and median cross-validation criteria attain their minimum at $g=h=7$ for both $K=5$ and $K=10$. For $\mathrm{d}_x=\mathrm{d}_y=2$, a substantial decrease in the cross-validation criteria is observed when moving from $g=h=1$ to $g=h=2$, while the subsequent changes are considerably smaller. The mean criterion attains its minimum at $g=h=10$ for both choices of $K$, whereas the median criterion reaches its minimum at $g=h=2$ for $K=5$ and at $g=h=9$ for $K=10$.

The optimal penalization parameters $\rho_x$ and $\rho_y$ generally increase with the number of inner knots. This pattern is particularly clear for $\mathrm{d}_x=\mathrm{d}_y=1$. For $\mathrm{d}_x=\mathrm{d}_y=2$, the values corresponding to $g=h=1$ deviate from this pattern, whereas from $g=h=2$ onward, both penalization parameters generally increase with increasing spline complexity.

A comparison of the two difference orders reveals different patterns for the two cross-validation schemes. For $K=5$, except for $g=h=1$, both the mean and median cross-validation criteria are lower for $\mathrm{d}_x=\mathrm{d}_y=2$ than for $\mathrm{d}_x=\mathrm{d}_y=1$ at the same number of inner knots. For $K=10$, the differences between the two difference orders are considerably smaller and depend on the number of inner knots and on whether the mean or median criterion is considered. In particular, for larger values of $g=h$, the mean criterion tends to be slightly lower for $\mathrm{d}_x=\mathrm{d}_y=2$, whereas no uniform pattern is observed for the median criterion. Thus, the effect of the difference order depends on both the spline complexity and the cross-validation setting.

We also evaluated Gaussian kernel density estimation with bandwidths selected using Silverman's rule of thumb and Scott's rule. Sheather-Jones bandwidth selection method was omitted as mentioned in Section~\ref{subsec:bivariate_sim}. The kernel density estimators were evaluated using the same cross-validation partitions as the proposed penalized likelihood estimator, ensuring a consistent comparison between the two approaches. Table~\ref{application_cv_kde_2D} summarizes the resulting cross-validated negative log-likelihood values across all $77$ districts by their mean and median values. For both $K=5$ and $K=10$, Scott's rule yields lower mean and median cross-validation criteria than Silverman's rule of thumb.

\begin{table}[h]
\centering
\small
\setlength{\tabcolsep}{7pt}
\renewcommand{\arraystretch}{1.15}
\begin{tabular}{l|cc|cc}
\toprule
& \multicolumn{2}{c|}{$K=5$} 
& \multicolumn{2}{c}{$K=10$} \\
\cmidrule(lr){2-3}
\cmidrule(lr){4-5}
Method 
& Mean CV & Median CV
& Mean CV & Median CV \\
\midrule
Silverman & 88.680 & 81.981 & 43.164 & 40.312 \\
Scott & \textbf{85.144} & \textbf{76.821} & \textbf{41.984} & \textbf{38.260} \\
\bottomrule
\end{tabular}
\caption{
Cross-validated negative log-likelihood values of kernel density estimators with different bandwidth selection rules with highlighted minimal mean and median CV values.}
\label{application_cv_kde_2D}
\end{table}

Comparing Tables~\ref{application_cv_2D_d1_K5}, \ref{application_cv_2D_d1_K10}, \ref{application_cv_2D_d2_K5} and \ref{application_cv_2D_d2_K10} with Table \ref{application_cv_kde_2D} shows that the proposed penalized likelihood estimator attains lower cross-validation criteria than both kernel density estimators for both difference orders. For $\mathrm{d}_x=\mathrm{d}_y=1$, the lowest mean and median criteria are $77.704$ and $70.964$, respectively, for $K=5$, and $38.733$ and $34.889$ for $K=10$; all these values are attained at $g=h=7$. For $\mathrm{d}_x=\mathrm{d}_y=2$, the lowest mean and median criteria are $73.943$ and $66.902$, respectively, for $K=5$, attained at $g=h=10$ and $g=h=2$, and $38.591$ and $34.941$ for $K=10$, attained at $g=h=10$ and $g=h=9$, respectively. In comparison, Scott's rule, which gives the lower kernel-based criteria, yields mean and median values of $85.144$ and $76.821$ for $K=5$, and $41.984$ and $38.260$ for $K=10$.

The results obtained for the considered geochemical data further support the practical applicability of the proposed penalized likelihood estimator in the bivariate setting. Moreover, the resulting densities are represented directly through their spline coefficients, which provide a natural basis for subsequent statistical analyses of density-valued data, such as functional principal component analysis, clustering, or regression; see, e.g.,~\cite{Skorna2025}.

An example of the resulting maximum likelihood estimate is shown in Figure \ref{chomutov_2D} for Chomutov district in $L^2_0(\Omega)$ and $\mathcal{B}^2(\Omega)$, together with the corresponding kernel density estimate in $\mathcal{B}^2(\Omega)$.

\begin{figure}[h]
    \centering
    \includegraphics[width=0.31\textwidth]{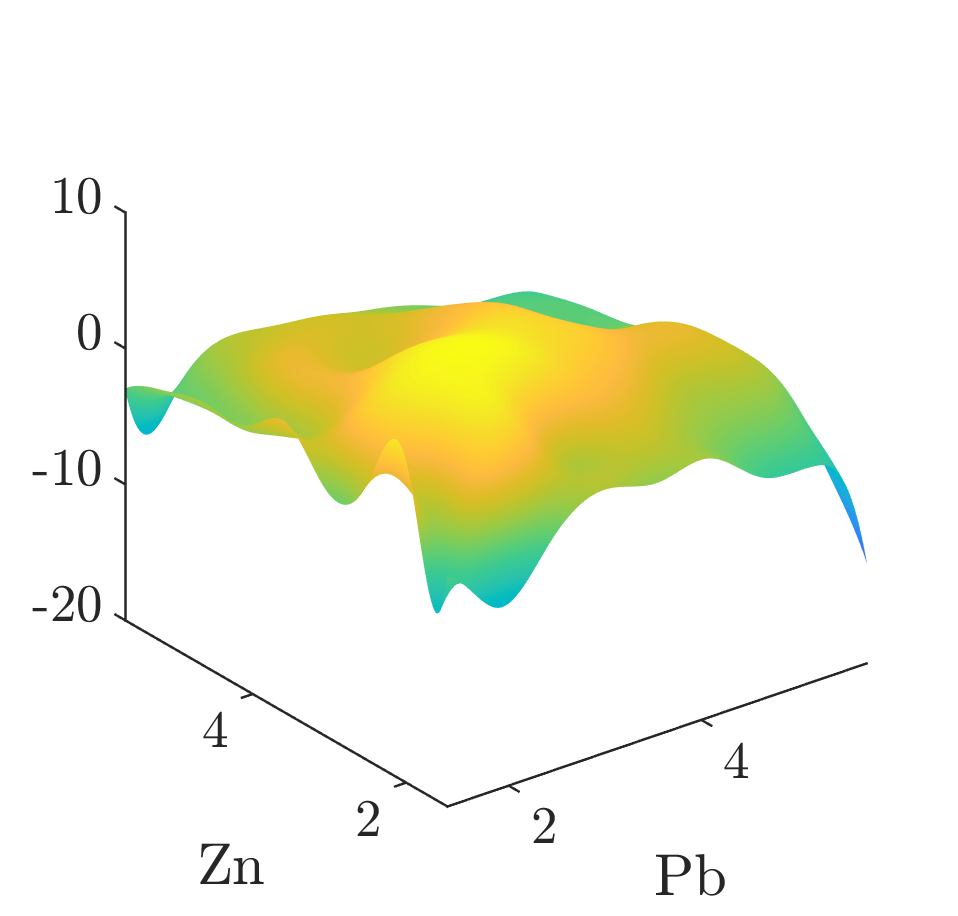}
    \includegraphics[width=0.32\textwidth]{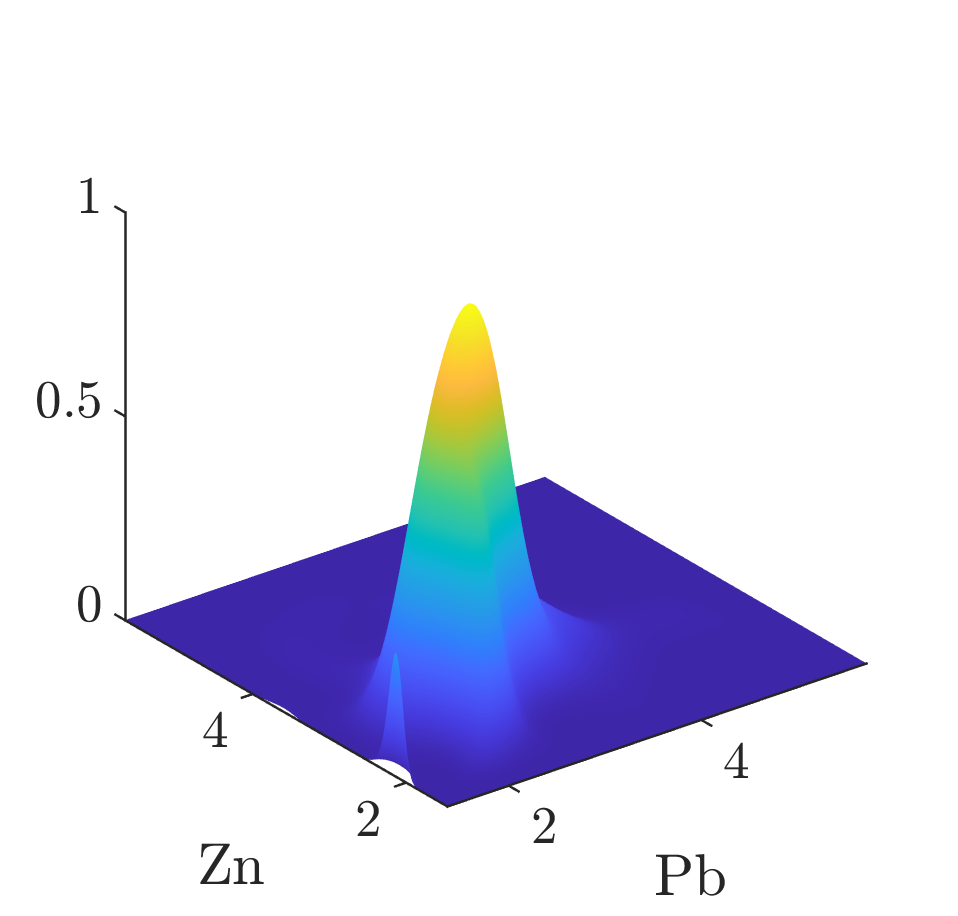}
    \includegraphics[width=0.32\textwidth]{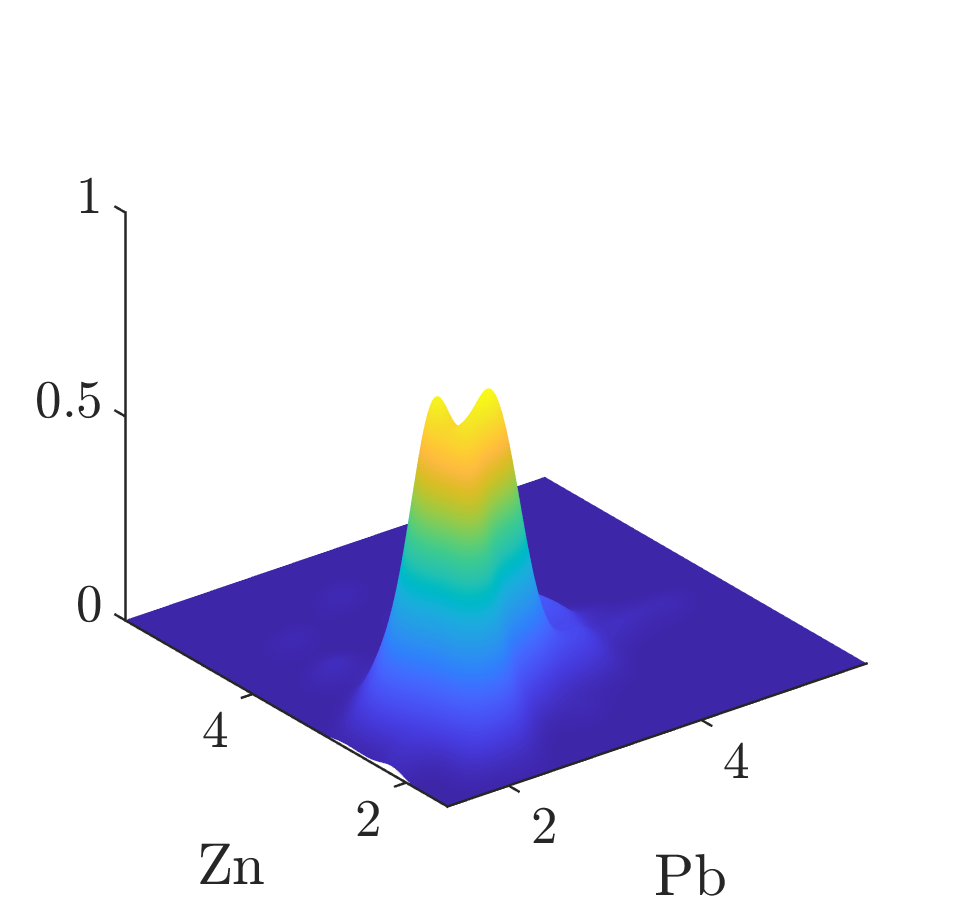}
    \caption{Penalized likelihood estimate of the density function for the Chomutov district with $\mathrm{d}_x=\mathrm{d}_y=1$, $g=h=7$, $k=l=3$, and $\rho_x=0.264$, $\rho_y=0.483$ in $L^2_0(\Omega)$ (left) and $\mathcal{B}^2(\Omega)$ (middle), together with the kernel density estimate with bandwidth obtained with the Scott's rule in $\mathcal{B}^2(\Omega)$ (right).}
    \label{chomutov_2D}
\end{figure}

Since the kernel density estimate consists of zero values on the edges of the common domain $\Omega$, the clr transformation cannot be employed to obtain the corresponding counterpart in $L^2_0(\Omega)$ without additional preprocessing steps, see \cite{Skorna2026}.

As the proposed penalized likelihood approach preserves the orthogonal decomposition, as described in Sections~\ref{Bayes} and \ref{bivariateZB}, the methodology directly provides estimates of independent and interactive parts, together with the estimate of the corresponding joint density. The decomposition parts are depicted in Figure~\ref{chomutov_decomposition}.

\begin{figure}[h]
    \centering
    \includegraphics[width=0.32\textwidth]{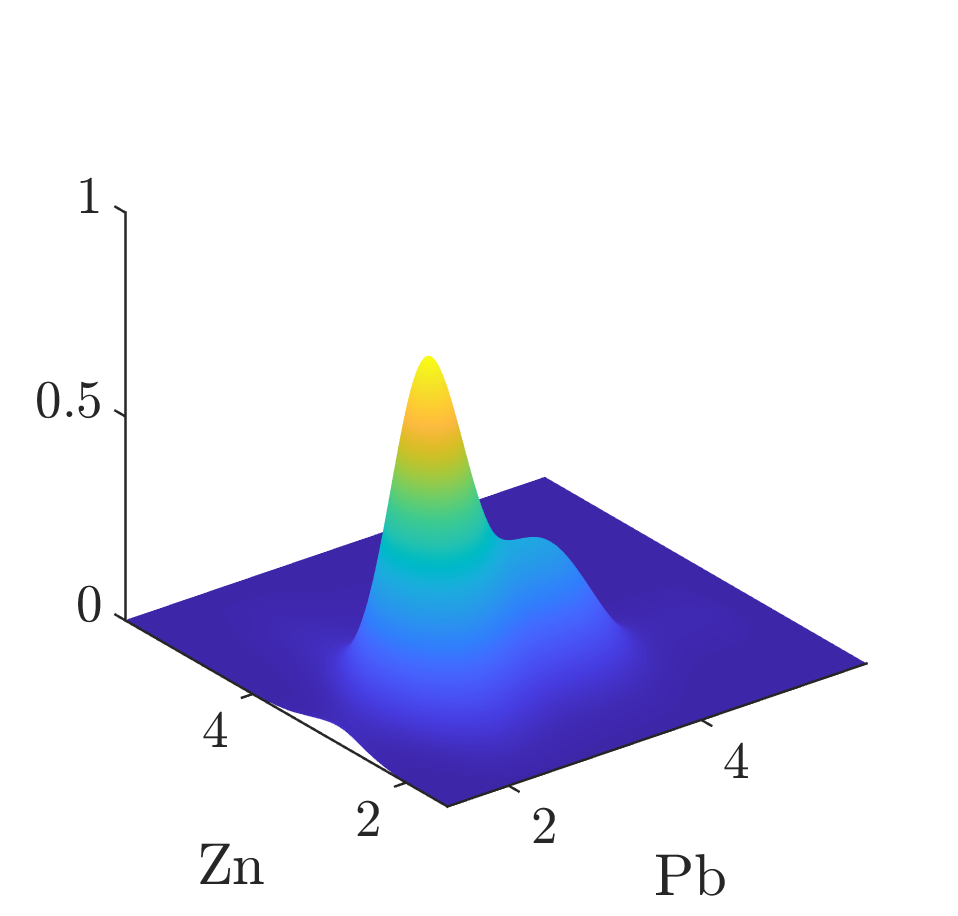}
    \includegraphics[width=0.32\textwidth]{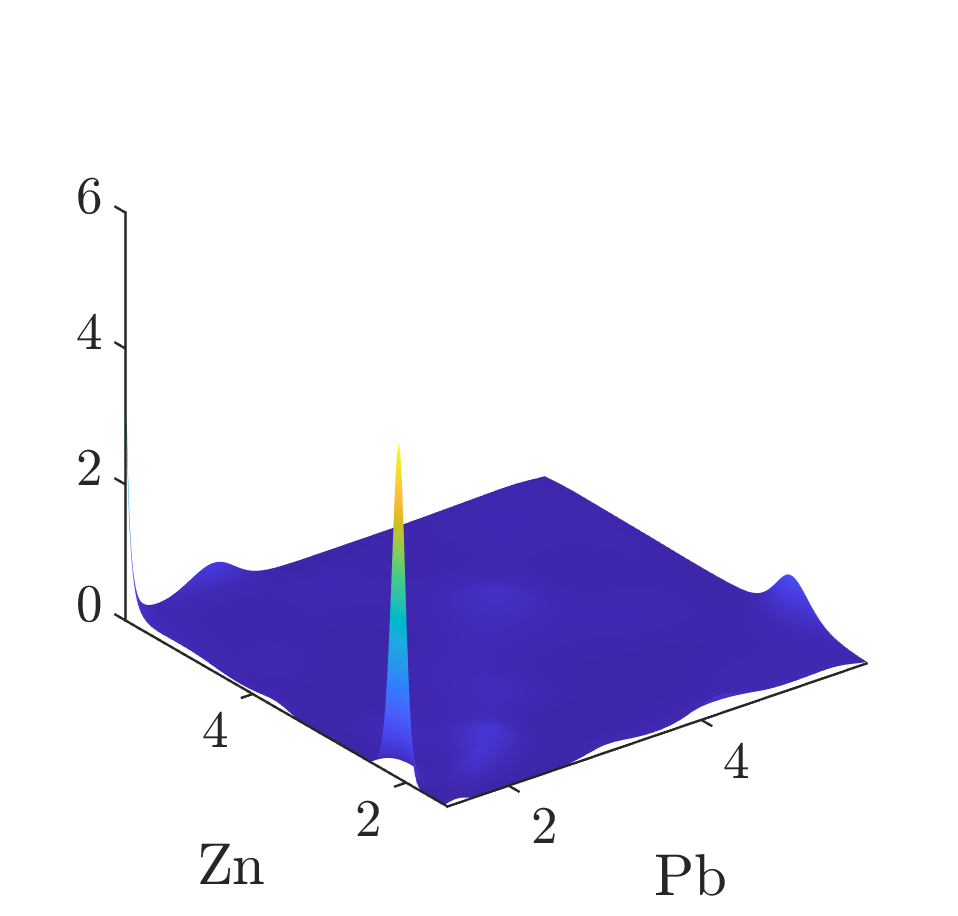}\\
    \includegraphics[width=0.32\textwidth]{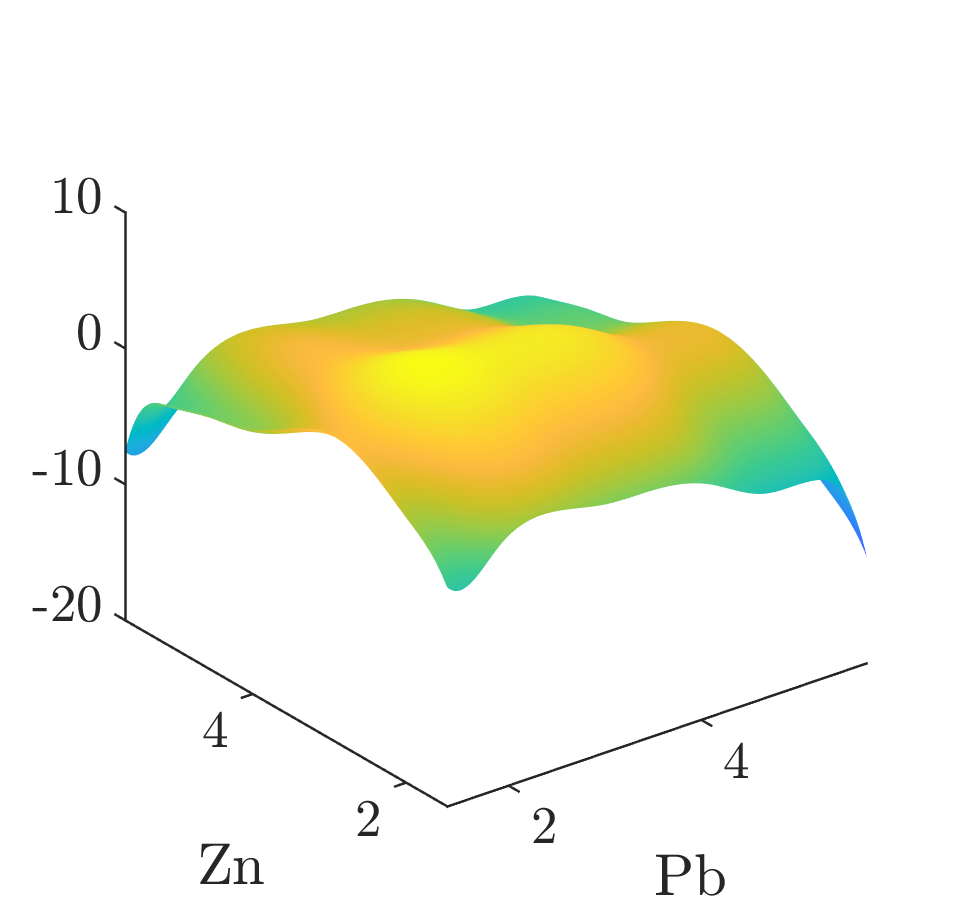}
    \includegraphics[width=0.32\textwidth]{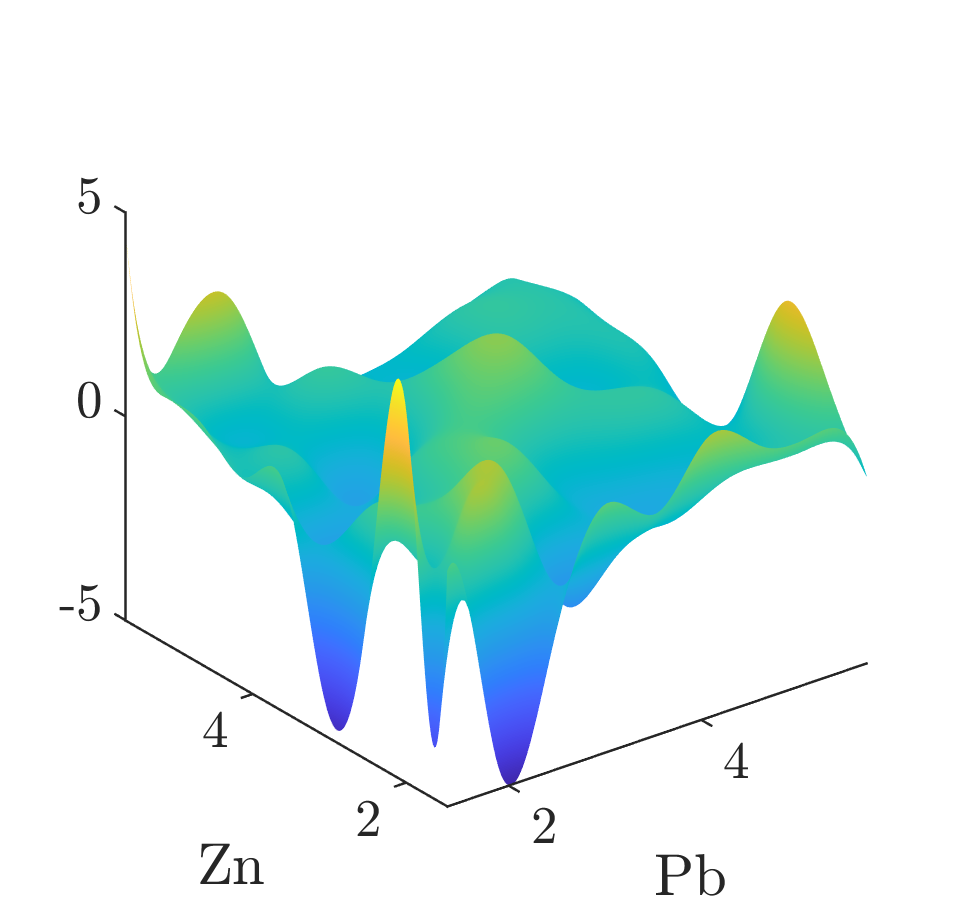}
    \caption{Penalized likelihood estimates of independent part (left) and interactive part (right) in $\mathcal{B}^2(\Omega)$ (top) and $L^2_0(\Omega)$ (bottom) for the Chomutov district.}
    \label{chomutov_decomposition}
\end{figure}

The independent part (left) describes the joint structure determined solely by the respective geometrical marginals of Pb and Zn. In contrast, the interactive part (right) captures the remaining joint structure that cannot be explained by the marginal distributions. The decomposition, therefore, allows an interesting insight, distinguishing effects on the geometric marginals for the distribution of Pb and Zn from those on their dependence structure (similar to the decomposition in copula methods, e.g., \cite{genest22}).

For the Chomutov district, the independent part exhibits a dominant smooth mode corresponding to the main concentration range of Pb and Zn, indicating that a substantial part of the joint distribution is determined by their geometrical marginals. However, the interactive part, particularly apparent in the $L^2_0(\Omega)$ representation (bottom right), reveals pronounced localized deviations from independence. In particular, it shows alternating positive and negative regions, while the interaction remains comparatively weak over a substantial part of the domain. This indicates that the dependence between Pb and Zn is predominantly localized, mostly at the edges of the given domain, rather than uniformly distributed.

\section{Conclusion}\label{Conclusion}

In this paper, we propose a penalized maximum likelihood framework for the direct estimation of probability density functions in Bayes Hilbert spaces. The key contribution of the work lies in avoiding the commonly used two-stage strategy, where densities are first approximated from raw observations and subsequently represented within a functional or Bayes space framework. Instead, the proposed method estimates the density directly from the observed sample through a likelihood-based formulation while preserving the intrinsic compositional geometry of probability density functions.

The methodology combines three key ingredients: the Bayes Hilbert space representation of densities, the clr transformation, and $Z\!B$-spline basis functions satisfying the zero-integral constraint by construction. Together, these components yield a unified likelihood-based framework in which the unknown clr-transformed density is represented directly by spline coefficients and estimated by minimizing a penalized negative log-likelihood. Smoothness is controlled through quadratic penalties imposed on the spline coefficients.

The proposed framework was developed for both univariate and bivariate densities. In the bivariate case, the method naturally incorporates the orthogonal decomposition of densities into independent and interactive components. This decomposition constitutes one of the key advantages of the proposed framework, since it provides an interpretable representation of both independent and interactive structures within a unified likelihood-based setting.

The simulation studies demonstrated that the performance of the proposed methodology depends on the structural complexity of the underlying density and on the selected spline representation. In the univariate settings, the best-performing spline configurations achieved lower median normalized integrated squared error than all considered kernel density estimators. In the bivariate setting, the relative performance depended on the underlying density and the bandwidth selection rule.

The empirical application to geochemical data further demonstrated the practical applicability of the methodology for real distributional data. In both the univariate and bivariate settings, the proposed estimator attained lower cross-validated negative log-likelihood values than the examined kernel density estimators.

Overall, the proposed approach provides a direct likelihood-based alternative to existing two-stage procedures for the estimation and functional representation of density data in Bayes spaces. Further directions for future research include extensions to higher-dimensional densities and subsequent functional data analysis procedures based directly on the resulting spline representations.

\section*{Acknowledgment}
We gratefully acknowledge the support of this research by following grants: IGA\_PrF\_2026\_018 Mathematical Models and the Czech Science Foundation grant 25-15447S.

\section*{Data availability}
Data together with codes are available at \href{https://github.com/skorst01/Penalized-Likelihood-Density-Estimation}{https://github.com/skorst01/Penalized-Likelihood-Density-Estimation}.

\section*{Conflict of interest}
We declare that there are no conflicts of interest regarding the publication of this paper.

\begin{appendices}

\section{Proofs}\label{app:proofs}
In this appendix, we provide the proofs of the theorems stated in Section \ref{PML}.

\bigskip
\noindent
\textbf{Proof of Theorem \ref{thm:existence_uniqueness_univariate}}
We divide the proof into several steps.

\smallskip
\noindent
\textit{Step 1: Continuity of the penalized negative log-likelihood}

\noindent
For fixed $\rho>0$ and $\mathrm{d}\in\{1,\ldots,g+k-1\}$, consider the penalized negative log-likelihood function
\begin{equation*}
l_{\rho}(\mathbf{z}) = -\mathbf{c}^{\top}\mathbf{z} + N \ln \int_I \exp\bigl(\mathbf{Z}_{k+1} (t)\mathbf{z}\bigr)\,\mathrm{d}t + \rho\,\mathbf{z}^{\top}\mathbf{P}_\mathrm{d}\, \mathbf{z}.
\end{equation*}
The first and the third terms are continuous functions of $\mathbf z$. It therefore remains to consider
$$
B(\mathbf z) := 
\int_I \exp\bigl(\mathbf Z_{k+1}(t)\mathbf z\bigr)\,\mbox{d}t.
$$
Since the $Z\!B$-spline basis functions are continuous on the compact interval $I$, the mapping $(t,\mathbf z)\mapsto \exp\bigl(\mathbf Z_{k+1}(t)\mathbf z\bigr)$ is continuous. Hence, $B(\mathbf z)$ is continuous with respect to $\mathbf z$. Moreover, $B(\mathbf z)>0$ for every $\mathbf z\in\mathbb R^{g+k}$. Therefore, $\ln B(\mathbf z)$ is continuous, and consequently $l_\rho$ is continuous on $\mathbb R^{g+k}$.

\medskip
\noindent
\textit{Step 2: Derivation of the gradient and Hessian matrix} 

\noindent
Differentiating $l_\rho(\mathbf z)$ with respect to $\mathbf z$ gives
\begin{equation}\label{eq:grad}
\nabla_{\mathbf z} l_\rho(\mathbf z) = -\mathbf{c} + N \frac{ \displaystyle\int_I \mathbf Z_{k+1}^{\top}(t) \exp\bigl(\mathbf Z_{k+1}(t)\mathbf z\bigr)\,\mbox{d}t }{
\displaystyle\int_I \exp\bigl(\mathbf Z_{k+1}(t)\mathbf z\bigr)\,\mbox{d}t} + 2\rho \mathbf{P}_\mathrm{d}\mathbf z.
\end{equation}
Let
$$ A(\mathbf z) := \int_I \mathbf Z_{k+1}^{\top}(t) \exp\bigl(\mathbf Z_{k+1}(t)\mathbf z\bigr)\,\mbox{d}t. 
$$
Using $B(\mathbf z)$ defined in Step~1, the gradient can be written as 
$$
\nabla_{\mathbf z}l_\rho(\mathbf z) = -\mathbf c + N\frac{A(\mathbf z)}{B(\mathbf z)} + 2\rho \mathbf{P}_\mathrm{d}\mathbf z.
$$
Since $\nabla_{\mathbf z}B(\mathbf z)=A(\mathbf z)$ and
$$
\nabla_{\mathbf{z}}A(\mathbf z) = \int_I \mathbf Z_{k+1}^{\top}(t)\mathbf Z_{k+1}(t) \exp\bigl(\mathbf Z_{k+1}(t)\mathbf z\bigr)\,\mbox{d}t,
$$
differentiating the gradient once more gives the Hessian matrix
$$
\nabla_{\mathbf{zz}}^2l_\rho(\mathbf z) = N\left( \frac{\nabla_{\mathbf z}A(\mathbf z)}{B(\mathbf z)} - \frac{A(\mathbf z)A^\top(\mathbf z)}{B^2(\mathbf z)} \right) + 2\rho \mathbf{P}_\mathrm{d}.
$$

\medskip
\noindent
\textit{Step 3: Positive definiteness of the Hessian matrix} 

\noindent
For any $\mathbf{a}\in\mathbb R^{g+k}$, $\mathbf{a}\neq\mathbf 0$, we have
$$
\begin{aligned}
\mathbf{a}^\top\nabla_{\mathbf{zz}}^2l_\rho(\mathbf z)\mathbf{a} ={}& N\left[ \frac{1}{B(\mathbf z)}
\int_I \bigl(\mathbf Z_{k+1}(t)\mathbf a\bigr)^2
\exp\bigl(\mathbf Z_{k+1}(t)\mathbf z\bigr)\,\mbox{d}t
\right.\\
&\left. -\frac{1}{B^2(\mathbf z)} \left( \int_I \mathbf Z_{k+1}(t)\mathbf a\, \exp\bigl(\mathbf Z_{k+1}(t)\mathbf z\bigr)\,\mbox{d}t \right)^2 \right] + 2\rho\,\mathbf a^\top \mathbf{P}_\mathrm{d}\mathbf a.
\end{aligned}
$$
Define
\[ 
p_{\mathbf z}(t) = \frac{\exp\bigl(\mathbf Z_{k+1}(t)\mathbf z\bigr)} {B(\mathbf z)},
\]
which is a strictly positive probability density function on $I$. Then
\[
\mathbf{a}^\top\nabla_{\mathbf{zz}}^2l_\rho(\mathbf{z})\mathbf{a} = N\operatorname{Var}_{p_{\mathbf z}} \bigl(\mathbf Z_{k+1}(t)\mathbf a\bigr) + 2\rho\|\textbf{D}_\mathrm{d}\mathbf a\|_2^2.
\]
The second term is nonnegative. Moreover, if $\operatorname{Var}_{p_{\mathbf z}}
\bigl(\mathbf Z_{k+1}(t)\mathbf a\bigr)=0,$ then $\mathbf Z_{k+1}(t)\mathbf a$ is constant almost everywhere on $I$. Since every linear combination of $Z\!B$-splines has a zero integral over $I$, this constant must be zero. The linear independence of the $Z\!B$-spline basis then implies $\mathbf a=\mathbf 0$, which contradicts the assumption $\mathbf a\neq\mathbf 0$. Hence $\operatorname{Var}_{p_{\mathbf z}} \bigl(\mathbf Z_{k+1}(t)\mathbf a\bigr)>0$
for every $\mathbf a\neq\mathbf 0$, and consequently
\[
\mathbf a^\top\nabla_{\mathbf{zz}}^2l_\rho(\mathbf z)\mathbf a>0.
\]
Therefore, $\nabla^2_{\mathbf z\mathbf z} l_\rho(\mathbf z)$ is positive definite for every $\mathbf z\in\mathbb R^{g+k}$, and consequently $l_\rho$ is strictly convex. The same argument with $\rho=0$ shows that the unpenalized negative log-likelihood $l_0$ is also strictly convex.

\smallskip
\noindent
\textit{Step 4: Existence of the minimizer}\\ 
Consider first the unpenalized negative log-likelihood
$$
l_0(\mathbf z) = -\mathbf c^\top \mathbf z + N\ln \int_I \exp\bigl(\mathbf Z_{k+1}(t)\mathbf z\bigr)\,\mathrm dt.
$$
The corresponding model forms a finite-dimensional exponential family generated by the $Z\!B$-spline basis. Since the interval $I$ is bounded and the $Z\!B$-spline basis functions are bounded on $I$, the normalizing integral is finite for every $\mathbf{z}\in\mathbb{R}^{g+k}$. Moreover, the family is minimal, because the $Z\!B$-spline basis admits no nontrivial linear combination that is constant on $I$: by the zero-integral property, such a constant must be zero, and the linear independence of the basis then implies that all coefficients vanish. Under the nondegeneracy assumption
$$
\frac{\mathbf{c}^\top}{N} \in \operatorname{int}\operatorname{conv} \left\{ \mathbf{Z}_{k+1}(t): t\in I \right\},
$$
the existence of the maximum-likelihood estimator follows from the standard theory of exponential families; see ~\cite[Theorem~9.13]{Barnd1978}. Consequently, there exists $\widehat{\mathbf{z}}_0\in\mathbb{R}^{g+k}$ such that
$$
l_0(\widehat{\mathbf{z}}_0) = \min_{\mathbf{z}} l_0(\mathbf{z}).
$$
As shown above, the Hessian matrix of $l_0$ is positive definite for every $\mathbf z\in\mathbb R^{g+k}$. Hence, $l_0$ is strictly convex and $\widehat{\mathbf z}_0$ is its unique minimizer. 

We now show that $l_0$ is coercive. Suppose, to the contrary, that $l_0$ is not coercive. Then there exists a sequence $\{\mathbf z_n\}\subset\mathbb R^{g+k}$ such that $\|\mathbf z_n\|_2\to\infty $, while the sequence $\{l_0(\mathbf z_n)\}$ is bounded from above. Thus, for some $C\in \mathbb R$, $l_0(\mathbf z_n)\leq C $ along a subsequence. Set
$$
r_n = \|\mathbf z_n-\widehat{\mathbf z}_0\|_2, \qquad \mathbf a_n = \frac{\mathbf z_n-\widehat{\mathbf z}_0}{r_n}.
$$
Then $r_n\to\infty$ and $\|\mathbf a_n\|_2=1$. Since the unit sphere in $\mathbb R^{g+k}$ is compact, there exists a subsequence, denoted again by $\{\mathbf a_n\}$, and a vector $\mathbf a$ with $\|\mathbf a\|_2=1$ such that
$$
\mathbf a_n\to\mathbf a.
$$
Let $r>0$ be arbitrary. For all sufficiently large $n$, $r<r_n$, and
$$
\widehat{\mathbf z}_0+r\mathbf a_n = \left(1-\frac{r}{r_n}\right)\widehat{\mathbf z}_0 + \frac{r}{r_n}\mathbf z_n. 
$$
By convexity of $l_0$,
$$
l_0(\widehat{\mathbf z}_0+r\mathbf a_n) \leq \left(1-\frac{r}{r_n}\right) l_0(\widehat{\mathbf z}_0) + \frac{r}{r_n}l_0(\mathbf z_n).
$$
Since $r_n\to\infty$ and $l_0(\mathbf z_n)\leq C$, passing to the limit and using continuity of $l_0$ yields 
$$
l_0(\widehat{\mathbf z}_0+r\mathbf a) \leq l_0(\widehat{\mathbf z}_0).
$$
On the other hand, $\widehat{\mathbf z}_0$ is a global minimizer of $l_0$, and therefore
$$
l_0(\widehat{\mathbf z}_0+r\mathbf a) = l_0(\widehat{\mathbf z}_0).
$$
Since $r>0$ and $\|\mathbf a\|_2=1$, the point $\widehat{\mathbf z}_0+r\mathbf a$ differs from $\widehat{\mathbf z}_0$, which contradicts the uniqueness of the minimizer. Hence,
$$
l_0(\mathbf z)\to+\infty \qquad\text{as}\qquad \|\mathbf z\|_2\to\infty,
$$
and therefore $l_0$ is coercive. Finally, since
$\mathbf{P}_\mathrm{d}=\mathbf{D}_\mathrm{d}^\top \mathbf{D}_\mathrm{d},$ we have $\mathbf z^\top \mathbf{P}_\mathrm{d}\mathbf z = \|\mathbf{D}_\mathrm{d}\mathbf z\|_2^2 \geq 0.$ Consequently,
$$
l_\rho(\mathbf z) = l_0(\mathbf z) + \rho\,\mathbf z^\top \mathbf{P}_\mathrm{d}\mathbf z \geq l_0(\mathbf z).
$$
The coercivity of $l_0$ therefore implies 
$$
l_\rho(\mathbf z)\to+\infty \qquad\text{as}\qquad \|\mathbf z\|_2\to\infty.
$$
Thus, $l_\rho$ is coercive. Since $l_\rho$ is continuous on $\mathbb R^{g+k}$, it attains its global minimum. Hence, there exists $\mathbf z_\rho^*\in\mathbb R^{g+k}$ such that $l_\rho(\mathbf z_\rho^*) = \min_{\mathbf{z}}l_\rho(\mathbf z).$

\smallskip
\noindent
\textit{Step 5: Uniqueness of the minimizer}\\
By Step 3, the penalized negative log-likelihood $l_\rho$ is strictly convex on $\mathbb R^{g+k}$. By Step~4, $l_\rho$ attains its global minimum at some $\mathbf z_\rho^* \in \mathbb R^{g+k}$. Since a strictly convex function can have at most one global minimizer, $\mathbf z_\rho^*$ is unique. Therefore, there exists a unique vector $\mathbf z_\rho^* \in \mathbb R^{g+k}$ such that $l_\rho(\mathbf z_\rho^*) = \min_{\mathbf{z}} l_\rho(\mathbf z).$
\qed

\bigskip
\noindent
\textbf{Proof of Theorem \ref{thm:existence_uniqueness_bivariate}}
The proof follows the same general arguments as the proof of Theorem~\ref{thm:existence_uniqueness_univariate}. We therefore provide the explicit expressions for the gradient and Hessian matrix and indicate the modifications required for the bivariate setting.

For fixed $\rho_x>0$, $\rho_y>0$ and $\mathrm{d}_x\in\{1,\ldots,g+k-1\}$, $\mathrm{d}_y\in\{1,\ldots,h+l-1\}$, consider 
\[
l_{\rho_x,\rho_y}(\boldsymbol{\theta}) = -\mathbf{h}^{\top}\boldsymbol{\theta}
+ N\ln \int\!\!\!\int_{\Omega} \exp\bigl(\boldsymbol{\Phi}(t,s)\boldsymbol{\theta}\bigr)
\,\mathrm{d}t\,\mathrm{d}s + \boldsymbol{\theta}^{\top}
\mathbb{P}_{\rho_x,\rho_y} \boldsymbol{\theta}.
\]
Let
\[
B(\boldsymbol{\theta}) = \int\!\!\!\int_{\Omega} \exp\bigl(\boldsymbol{\Phi}(t,s)\boldsymbol{\theta}\bigr) \,\mathrm{d}t\,\mathrm{d}s
\]
and
\[
A(\boldsymbol{\theta}) = \int\!\!\!\int_{\Omega} \boldsymbol{\Phi}^{\top}(t,s)
\exp\bigl(\boldsymbol{\Phi}(t,s)\boldsymbol{\theta}\bigr) \,\mathrm{d}t\,\mathrm{d}s.
\]
Then
\begin{equation}\label{eq:grad2D}
\nabla_{\boldsymbol{\theta}} l_{\rho_x,\rho_y}(\boldsymbol{\theta}) =
-\mathbf{h} + N\frac{A(\boldsymbol{\theta})} {B(\boldsymbol{\theta})}
+ 2\mathbb{P}_{\rho_x,\rho_y}\boldsymbol{\theta}.
\end{equation}
Since $\nabla_{\boldsymbol{\theta}} B(\boldsymbol{\theta}) = \mathbf{A}(\boldsymbol{\theta})$ and
\[
\nabla_{\boldsymbol{\theta}}A(\boldsymbol{\theta}) = \int\!\!\!\int_{\Omega} \boldsymbol{\Phi}^{\top}(t,s)\boldsymbol{\Phi}(t,s) \exp\bigl(\boldsymbol{\Phi}(t,s)\boldsymbol{\theta}\bigr) \,\mathrm{d}t\,\mathrm{d}s,
\]
differentiating the gradient once more gives
\[
\nabla_{\boldsymbol{\theta}\boldsymbol{\theta}}^2 l_{\rho_x,\rho_y}(\boldsymbol{\theta}) = N\left(
\frac{\nabla_{\boldsymbol{\theta}}A(\boldsymbol{\theta})}{ B(\boldsymbol{\theta})}
- \frac{ A(\boldsymbol{\theta})A^{\top}(\boldsymbol{\theta})}{B^2(\boldsymbol{\theta})}
\right)+2\mathbb{P}_{\rho_x,\rho_y}.
\]
Define
\[
p_{\boldsymbol{\theta}}(t,s) = \frac{ \exp\bigl(\boldsymbol{\Phi}(t,s)\boldsymbol{\theta}\bigr)}{
B(\boldsymbol{\theta})}
\]
which is a strictly positive probability density on $\Omega$. Then for any nonzero $\mathbf{a}\in\mathbb{R}^{(g+k+1)(h+l+1)-1}$, we obtain
\[
\mathbf{a}^{\top} \nabla_{\boldsymbol{\theta}\boldsymbol{\theta}}^2 l_{\rho_x,\rho_y}(\boldsymbol{\theta}) \mathbf{a} = N\,\operatorname{Var}_{p_{\boldsymbol{\theta}}}
\bigl( \boldsymbol{\Phi}(t,s)\mathbf{a} \bigr) + 2\mathbf{a}^{\top} \mathbb{P}_{\rho_x,\rho_y}
\mathbf{a}.
\]
The second term is nonnegative since $\mathbb{P}_{\rho_x,\rho_y}$ is positive semidefinite. Moreover, if $\operatorname{Var}_{p_{\boldsymbol{\theta}}}\bigl(\boldsymbol{\Phi}(t,s)\mathbf{a}\bigr)=0$, then $\boldsymbol{\Phi}(t,s)\mathbf{a}$ is constant almost everywhere on $\Omega$. Since every linear combination of the basis functions constituting $\boldsymbol{\Phi}(t,s)$ has zero integral over $\Omega$, this constant must be zero. By linear independence of the bivariate $Z\!B$-spline basis, this implies $\mathbf{a}=\mathbf{0}$, which contradicts the assumption  $\mathbf{a}\neq\mathbf{0}$. Hence, $\operatorname{Var}_{p_{\boldsymbol{\theta}}}\bigl(\boldsymbol{\Phi}(t,s)\mathbf{a} \bigr)>0$ for every nonzero $\mathbf{a}$, and consequently
\[
\mathbf{a}^{\top} \nabla_{\boldsymbol{\theta}\boldsymbol{\theta}}^2
l_{\rho_x,\rho_y}(\boldsymbol{\theta}) \mathbf{a}>0.
\]
Therefore, $l_{\rho_x,\rho_y}$ is strictly convex. The same argument with $\rho_x=\rho_y=0$ shows that the unpenalized negative log-likelihood 
\[
l_{0,0}(\boldsymbol{\theta}) = -\mathbf{h}^{\top}\boldsymbol{\theta} +
N\ln \int\!\!\!\int_{\Omega} \exp\bigl(\boldsymbol{\Phi}(t,s)\boldsymbol{\theta}\bigr)
\,\mathrm{d}t\,\mathrm{d}s
\]
is also strictly convex. It remains to establish the existence of the minimizer. The argument is analogous to that used in the proof of Theorem~\ref{thm:existence_uniqueness_univariate}. Consider first the unpenalized negative log-likelihood $l_{0,0}$. The corresponding model is a finite-dimensional exponential family generated by the components of $\boldsymbol{\Phi}(t,s)$. Since $\Omega$ is compact and all basis functions constituting $\boldsymbol{\Phi}(t,s)$ are bounded on $\Omega$, the normalizing integral is finite for every $\boldsymbol{\theta}$. Moreover, the family is minimal, since no nontrivial linear combination of the bivariate $Z\!B$-spline basis functions can be constant on $\Omega$: by the zero-integral property, such a constant must be zero, and linear independence then implies that all coefficients vanish. Under the nondegeneracy assumption
\[
\frac{\mathbf{h}^{\top}}{N} \in \operatorname{int}\operatorname{conv} \left\{
\boldsymbol{\Phi}(t,s):(t,s)\in\Omega \right\},
\]
the existence of the maximum-likelihood estimator follows from the standard theory of exponential families; see \cite[Theorem~9.13]{Barnd1978}. Hence, $l_{0,0}$ has a minimizer. Since $l_{0,0}$ is strictly convex, as shown above, this minimizer is unique. As in the proof of Theorem~\ref{thm:existence_uniqueness_univariate}, $l_{0,0}$ is continuous. Since it is strictly convex and has a unique global minimizer, exactly the same convexity argument as in the proof of Theorem~\ref{thm:existence_uniqueness_univariate} shows that $l_{0,0}$ is coercive, i.e.,
\[
l_{0,0}(\boldsymbol{\theta}) \longrightarrow +\infty \qquad\text{as}\qquad \|\boldsymbol{\theta}\|_2\longrightarrow\infty.
\]
Furthermore, $\mathbb{P}_{\rho_x,\rho_y}$ is positive semidefinite by construction. Hence, $\boldsymbol{\theta}^{\top} \mathbb{P}_{\rho_x,\rho_y} \boldsymbol{\theta}\geq 0,$ and therefore
\[
l_{\rho_x,\rho_y}(\boldsymbol{\theta}) = l_{0,0}(\boldsymbol{\theta}) +
\boldsymbol{\theta}^{\top} \mathbb{P}_{\rho_x,\rho_y} \boldsymbol{\theta}
\geq l_{0,0}(\boldsymbol{\theta}).
\]
Consequently,
\[
l_{\rho_x,\rho_y}(\boldsymbol{\theta}) \longrightarrow +\infty
\qquad\text{as}\qquad \|\boldsymbol{\theta}\|_2\longrightarrow\infty,
\]
so that $l_{\rho_x,\rho_y}$ is coercive.

As in the proof of Theorem~\ref{thm:existence_uniqueness_univariate}, the function $l_{\rho_x,\rho_y}$ is continuous. Hence, by continuity and coercivity, it attains a global minimum. Since $l_{\rho_x,\rho_y}$ is strictly convex, as shown above, this minimum is unique. Therefore, there exists a unique $\boldsymbol{\theta}^{*}_{\rho_x,\rho_y}$ such that
\[
l_{\rho_x,\rho_y} \bigl(\boldsymbol{\theta}^{*}_{\rho_x,\rho_y}\bigr)
= \min_{\boldsymbol{\theta}} l_{\rho_x,\rho_y}(\boldsymbol{\theta}).
\]
\qed

\section{Supplementary materials for Simulation study}
\label{app:simulation}
In this section, we include supplementary outputs for Section \ref{Simulation}.

For the univariate simulation study, Tables \ref{opt_rho_f2} and \ref{opt_rho_f3} report the mean values of optimal penalization parameter $\rho$ for maximum likelihood estimation of densities $f_2(x)$ and $f_3(x)$, respectively, regarding the considered simulation scenario, i.e., sample sizes $N$ and the number of folds $K$. Figures \ref{NISE_f2} and \ref{NISE_f3} show boxplots of measured NISE values for the likelihood estimates of $f_2(x)$ and $f_3(x)$ given the considered number of inner knots $g$ in each simulation scenario. Subsequently, Tables \ref{median_NISE_f2} and \ref{median_NISE_f3} state the corresponding median NISE values. Finally, Figures \ref{final_comp_f2} and \ref{final_comp_f3} depicture the true densities $f_2(x)$ and $f_3(x)$, respectively, together with their corresponding estimates via penalized maximum likelihood and kernel smoothing.

For the bivariate case, Table \ref{opt_rho_f2_2D} gives information about the mean optimal values of penalization parameters $(\rho_x, \rho_y)$ for maximum likelihood estimation of $f_2(x,y)$ with respect to the considered simulation scenario, similarly to the univariate case. Figure \ref{NISE_f2_2D} displays boxplots of measured NISE values for the likelihood estimation of $f_2(x,y)$ and Table \ref{NISE_2D_f2} reports the corresponding median NISE values. Finally, Figure \ref{final_comp_2D_f2} depictures penalized maximum likelihood estimate of $f_2(x,y)$ together with the kernel density estimate.

\begin{table}[h]
\centering
\small
\setlength{\tabcolsep}{5pt}
\begin{tabular}{cc|cccccccccc}
\toprule
 &  & \multicolumn{10}{c}{$g$} \\
\cmidrule(lr){3-12}
$N$ & $K$ & 1 & 2 & 3 & 4 & 5 & 6 & 7 & 8 & 9 & 10 \\
\midrule
3000 & 5  & 0.461& 0.01& 0.044& 0.061& 0.087& 0.152& 0.15& 0.216& 0.133& 0.205 \\
3000 & 10 & 0.469& 0.01& 0.043& 0.06& 0.087& 0.158& 0.16& 0.226& 0.137& 0.218 \\
5000 & 5 & 0.401& 0.009& 0.042& 0.056& 0.08& 0.144& 0.166& 0.233& 0.132& 0.204 \\
5000 & 10 & 0.381& 0.01& 0.044& 0.057& 0.08& 0.151& 0.173& 0.245& 0.138& 0.21 \\
\bottomrule
\end{tabular}
\caption{Mean values of optimal penalization parameters $\rho$ for estimating the density $f_2(x)$ using different numbers of sampled values ($N_1 = 3000$, $N_2 = 5000$) and different number of folds used in \eqref{CV_1D} ($K_1 = 5$, $K_2 = 10$) (rows) for different number of inner knots $g$ (columns).}
\label{opt_rho_f2}
\end{table}

\begin{table}[h]
\centering
\small
\setlength{\tabcolsep}{5pt}
\begin{tabular}{cc|cccccccccc}
\toprule
 &  & \multicolumn{10}{c}{$g$} \\
\cmidrule(lr){3-12}
$N$ & $K$ & 1 & 2 & 3 & 4 & 5 & 6 & 7 & 8 & 9 & 10 \\
\midrule
3000 & 5  & 0.033& 0.205& 0.025& 0.02& 0.024& 0.007& 0.032& 0.01& 0.046& 0.041 \\
3000 & 10 & 0.031& 0.189& 0.024& 0.02& 0.025& 0.007& 0.033& 0.011& 0.046& 0.04\\
5000 & 5 & 0.031& 0.204& 0.025& 0.02& 0.026& 0.007& 0.033& 0.011& 0.049& 0.037 \\
5000 & 10 & 0.031& 0.193& 0.024& 0.02& 0.025& 0.006& 0.033& 0.01& 0.051& 0.036 \\
\bottomrule
\end{tabular}
\caption{Mean values of optimal penalization parameters $\rho$ for estimating the density $f_3(x)$ using different numbers of sampled values ($N_1 = 3000$, $N_2 = 5000$) and different number of folds used in \eqref{CV_1D} ($K_1 = 5$, $K_2 = 10$) (rows) for different number of inner knots $g$ (columns).}
\label{opt_rho_f3}
\end{table}

\begin{figure}[h]
    \centering
    \includegraphics[width=0.4\textwidth]{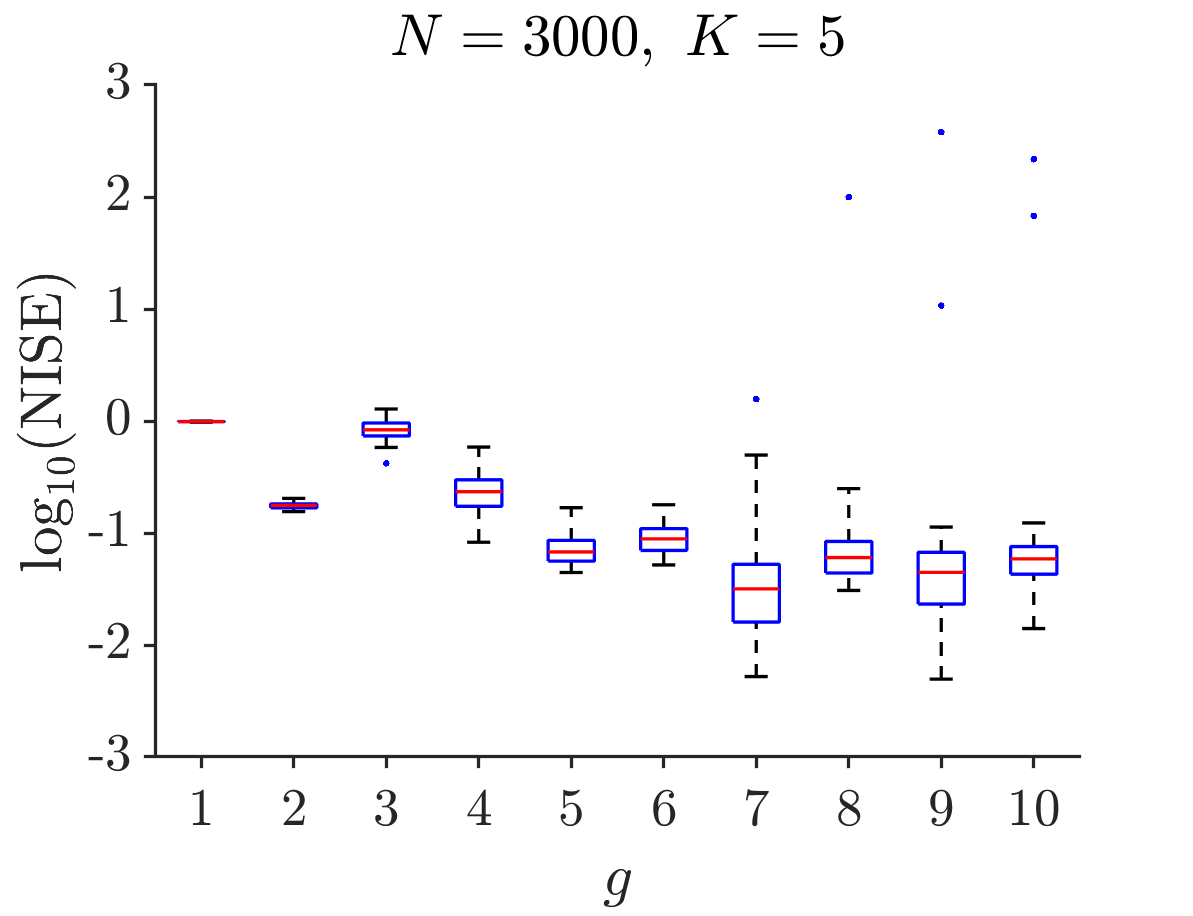}\qquad
    \includegraphics[width=0.4\textwidth]{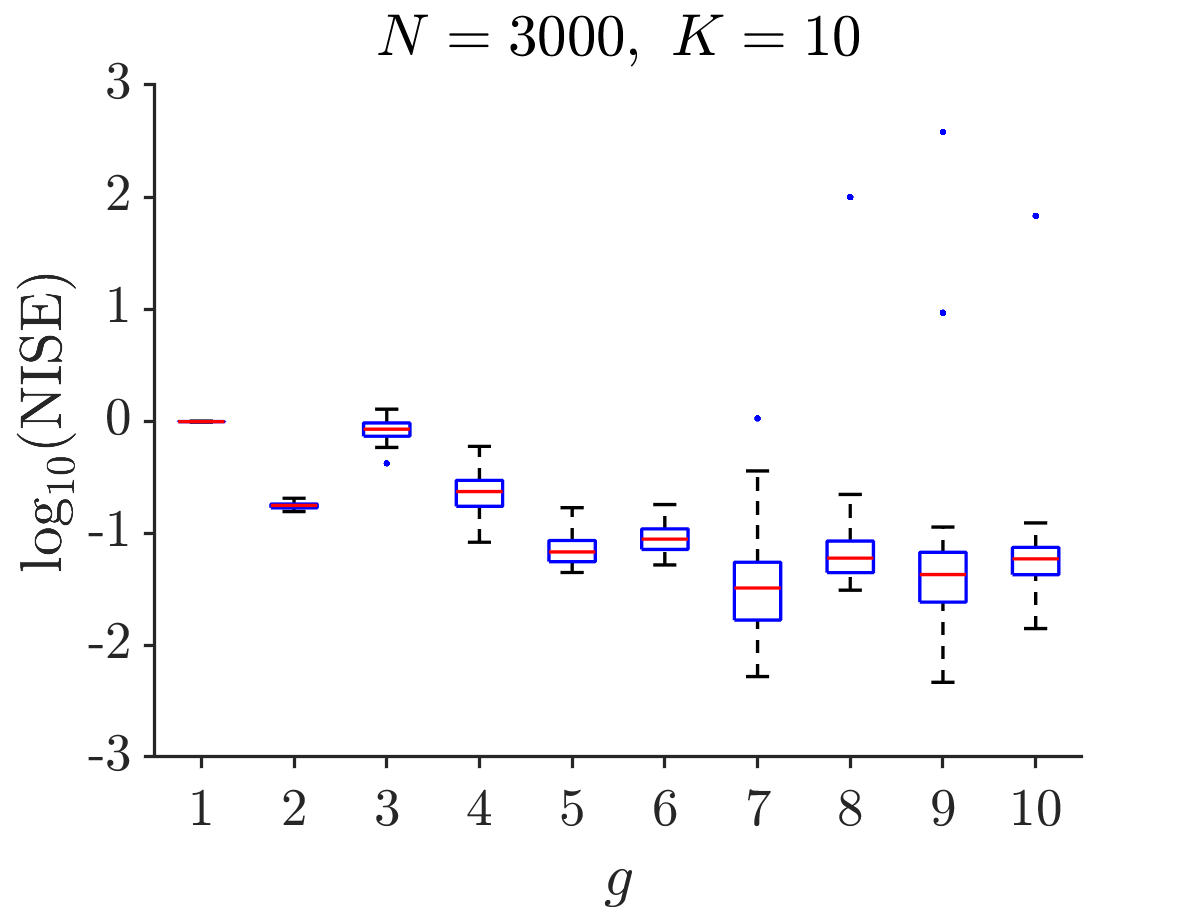}\\[0.5cm]
    \includegraphics[width=0.4\textwidth]{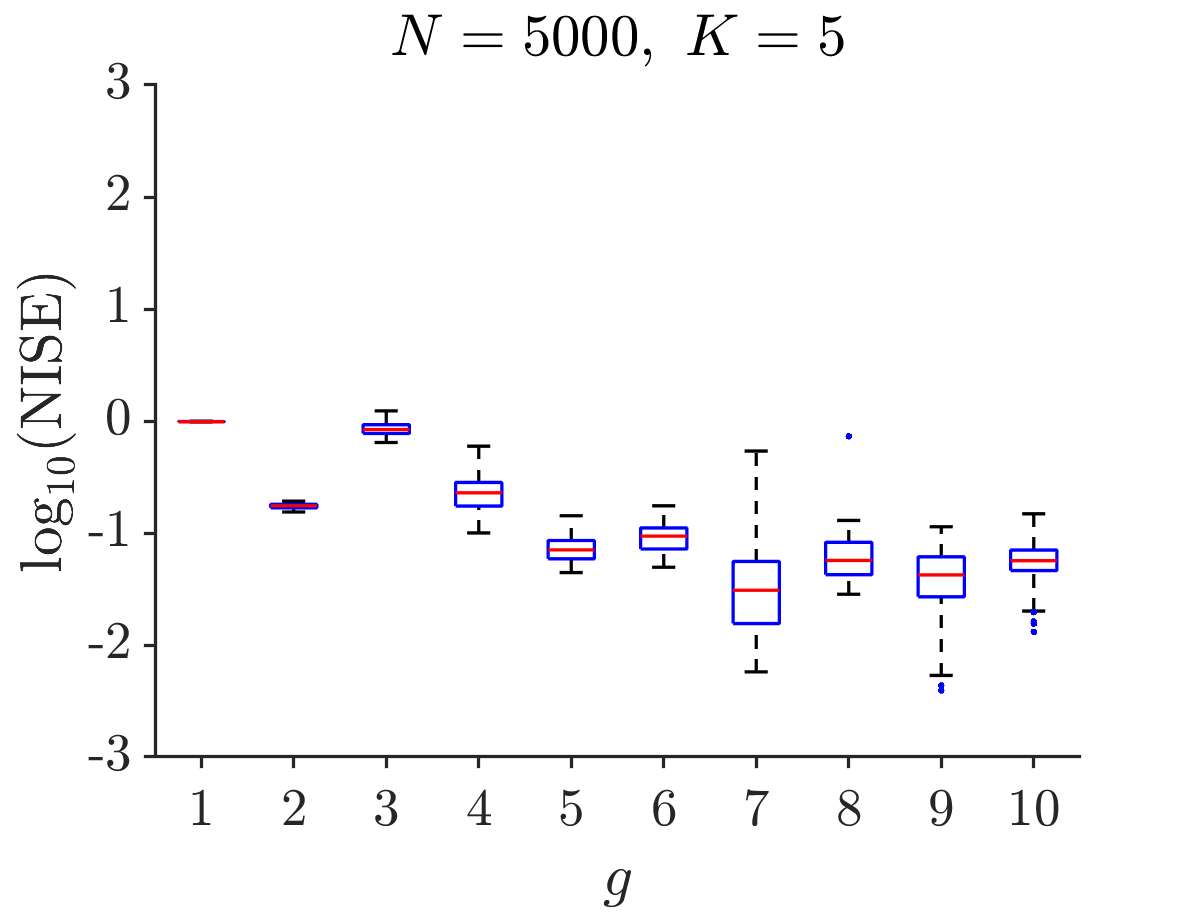}\qquad
    \includegraphics[width=0.4\textwidth]{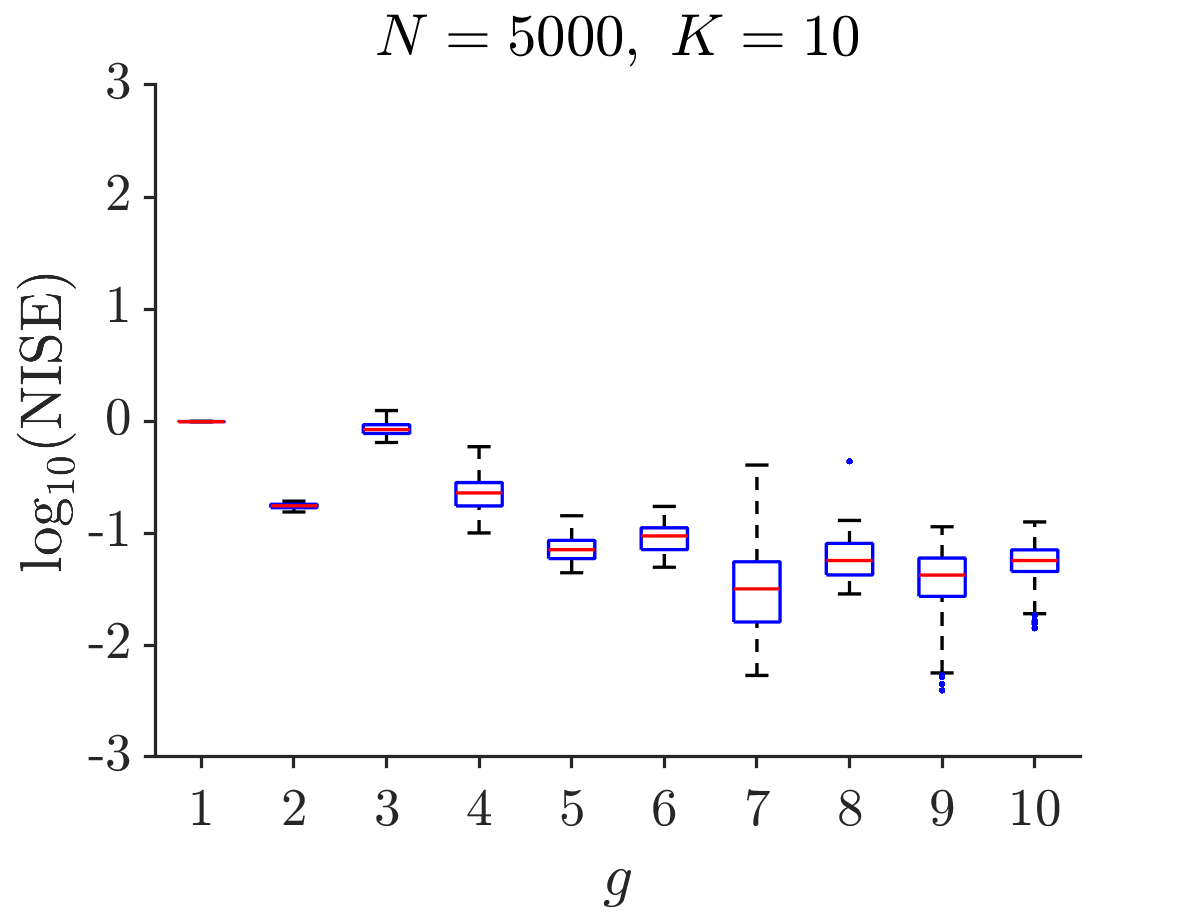}
    \caption{Boxplots of NISE values for the penalized likelihood estimates of $f_2(x)$ for different numbers of inner knots $g$ and the considered combinations of the sample size $N$ and the number of cross-validation folds $K$.}
    \label{NISE_f2}
\end{figure}

\begin{figure}[h]
    \centering
    \includegraphics[width=0.4\textwidth]{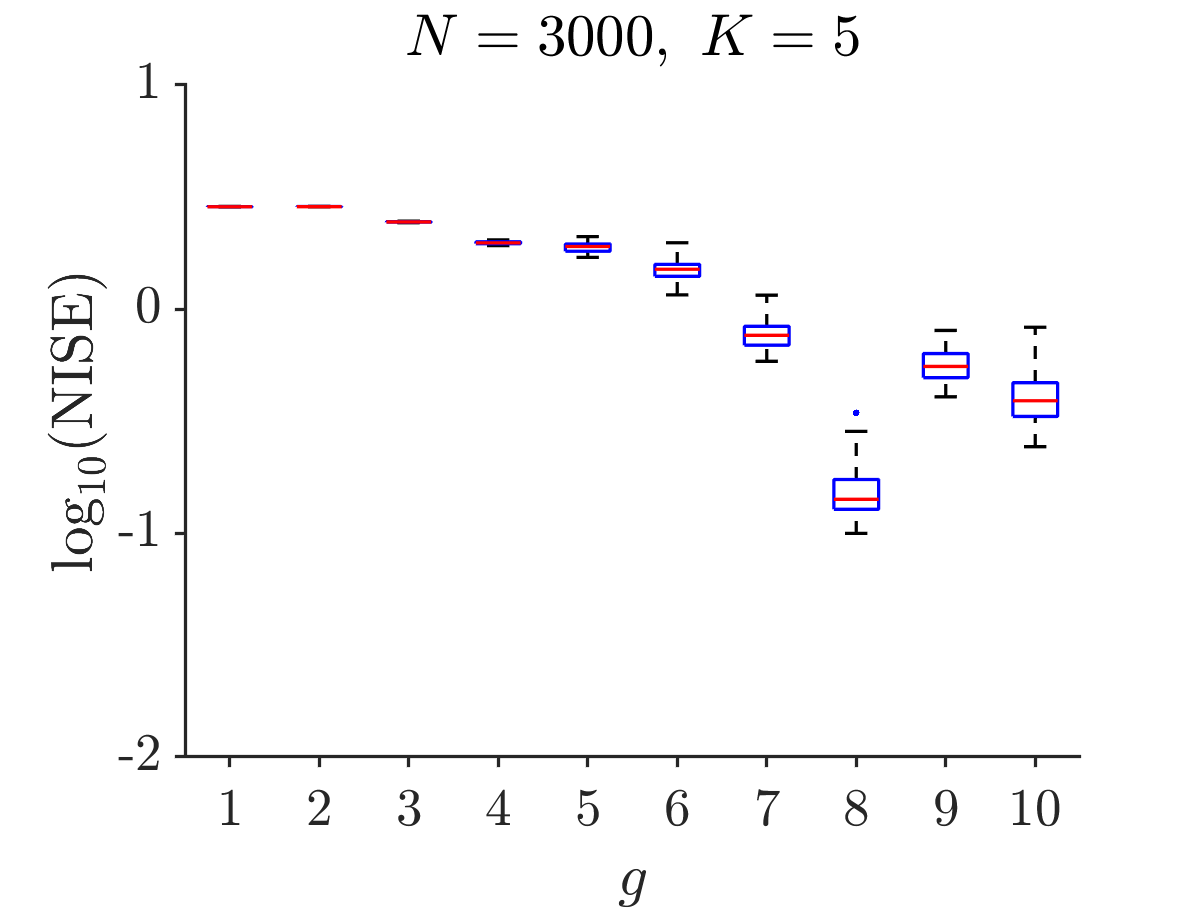}\qquad
    \includegraphics[width=0.4\textwidth]{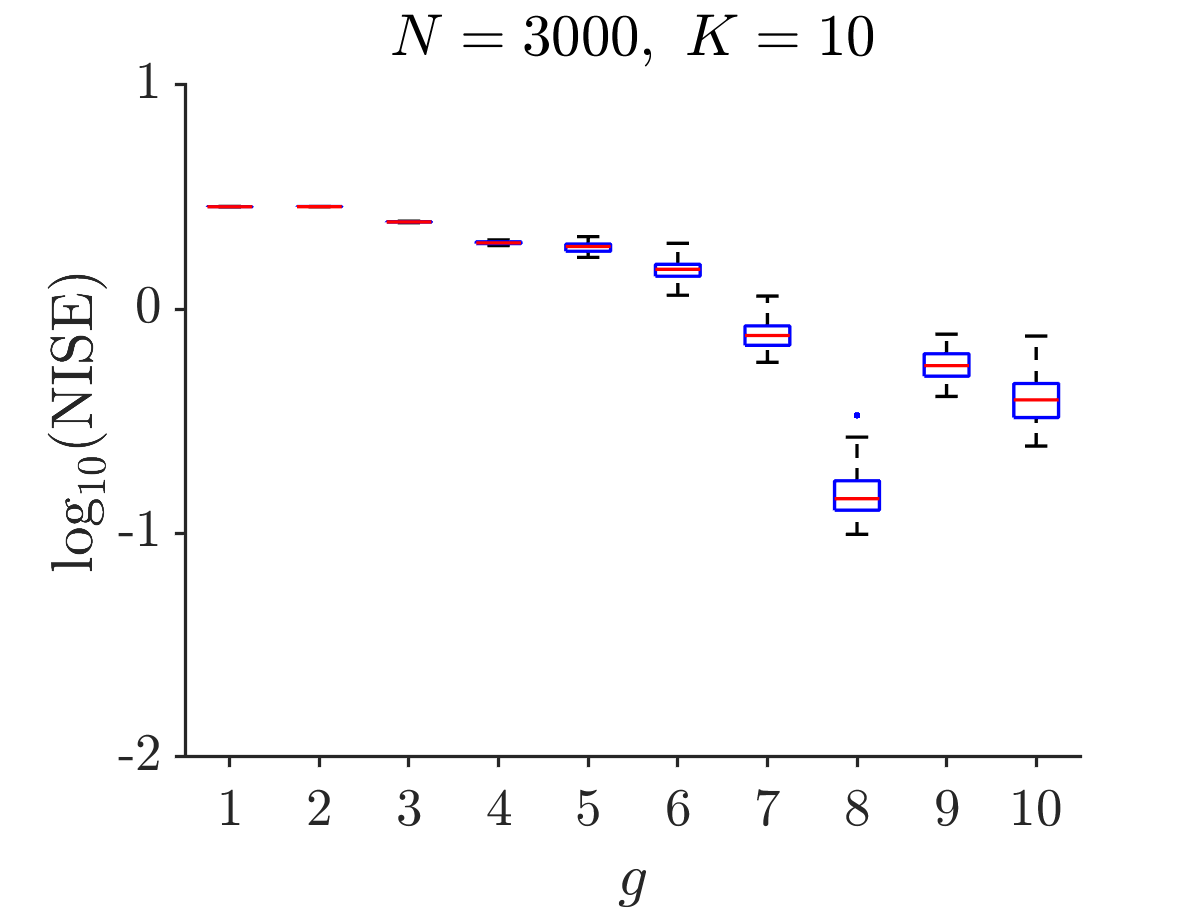}\\[0.5cm]
    \includegraphics[width=0.4\textwidth]{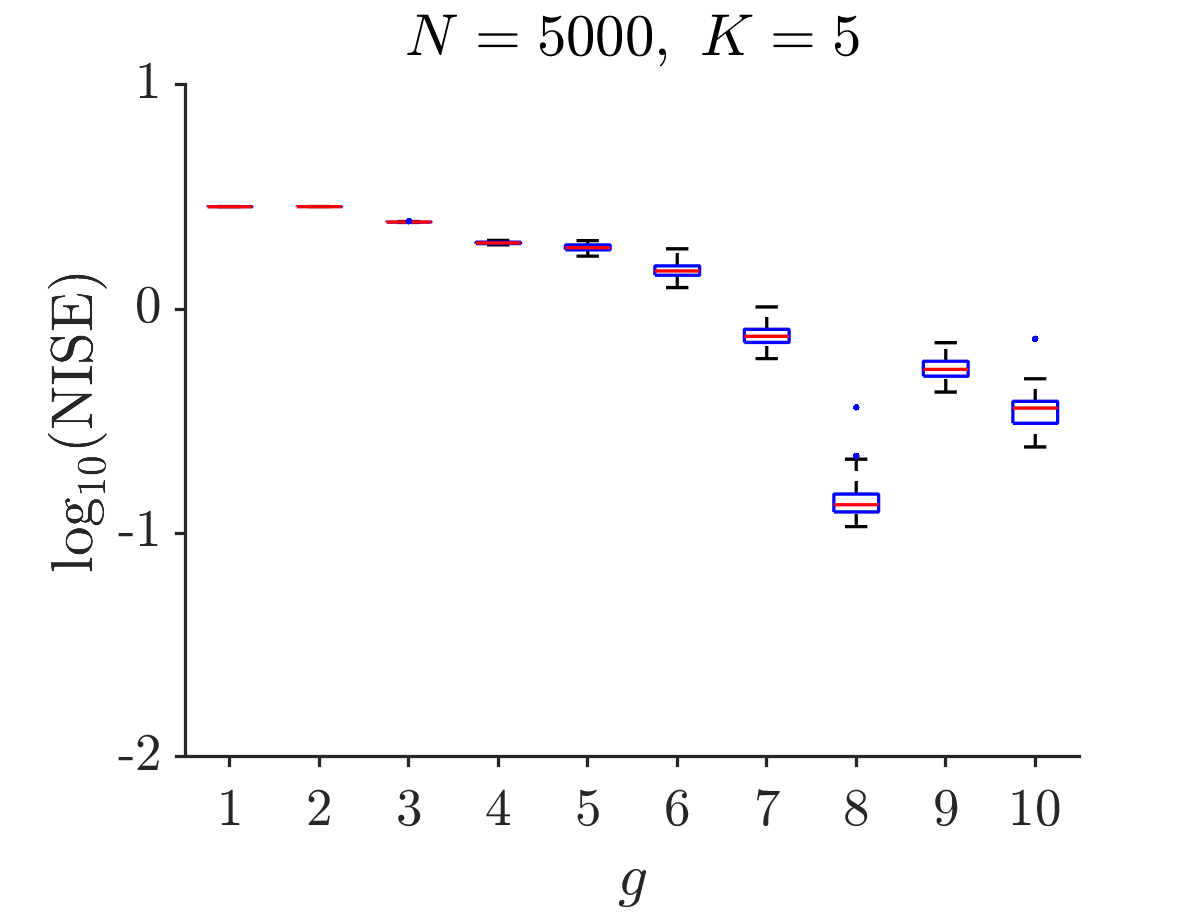}\qquad
    \includegraphics[width=0.4\textwidth]{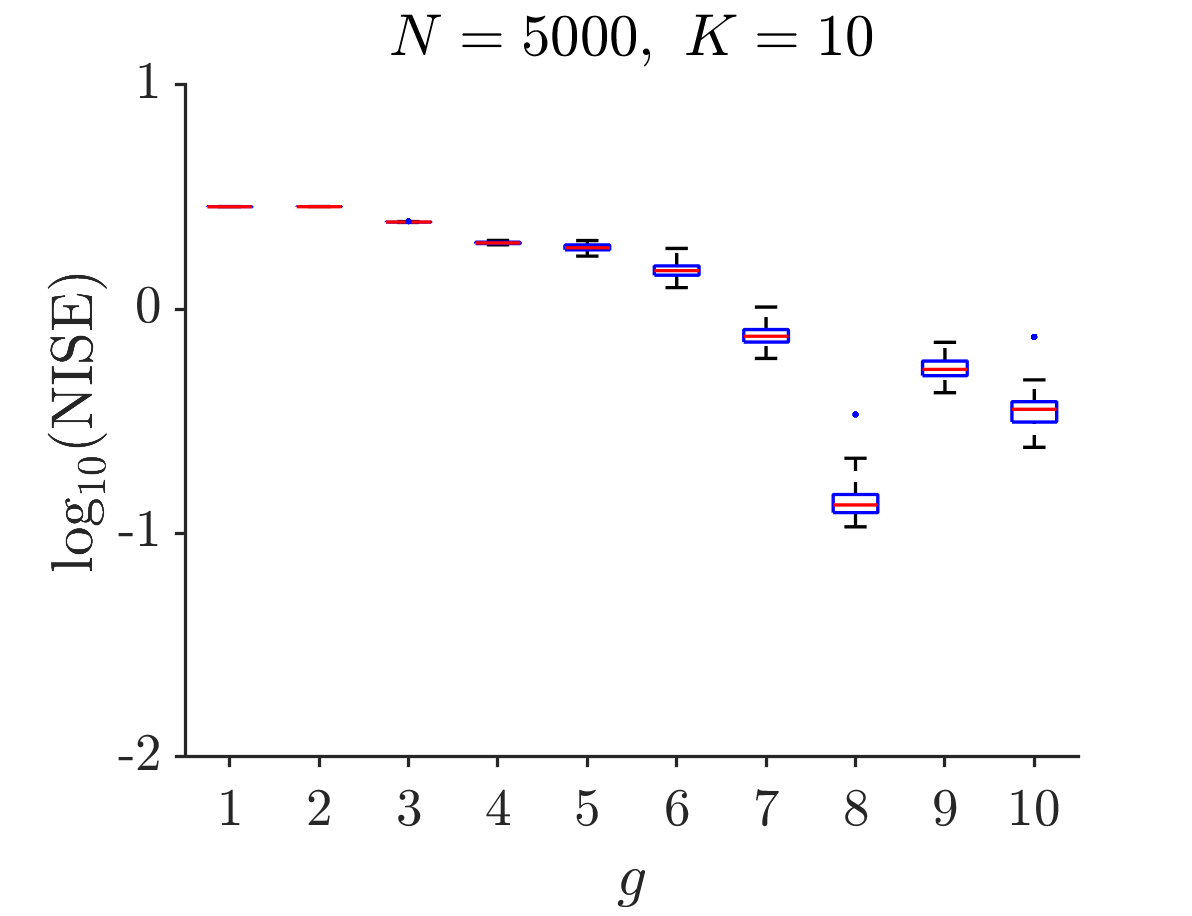}
    \caption{Boxplots of NISE values for the penalized likelihood estimates of $f_3(x)$ for different numbers of inner knots $g$ and the considered combinations of the sample size $N$ and the number of cross-validation folds $K$.}
    \label{NISE_f3}
\end{figure}

\begin{table}[h]
\centering
\small
\setlength{\tabcolsep}{5pt}
\begin{tabular}{cc|cccccccccc}
\toprule
 &  & \multicolumn{10}{c}{$g$} \\
\cmidrule(lr){3-12}
$N$ & $K$ & 1 & 2 & 3 & 4 & 5 & 6 & 7 & 8 & 9 & 10 \\
\midrule
3000 & 5  & 5.921& 1.058& 5.01& 1.402& 0.406& 0.532& \textbf{0.19}& 0.362& 0.267& 0.352 \\
3000 & 10 & 5.922& 1.058& 5.068& 1.407& 0.406& 0.529& \textbf{0.194}& 0.358& 0.255& 0.353 \\
5000 & 5 & 5.916& 1.05& 5.032& 1.371& 0.424& 0.563& \textbf{0.185}& 0.341& 0.253& 0.34 \\
5000 & 10 & 5.916& 1.05& 5.028& 1.368& 0.426& 0.565& \textbf{0.19}& 0.341& 0.253& 0.34 \\
\bottomrule
\end{tabular}
\caption{Median NISE values of $f_2(x)$ using different numbers of sampled values ($N_1 = 3000$, $N_2 = 5000$) and different number of folds used in \eqref{CV_1D} ($K_1 = 5$, $K_2 = 10$) (rows) for different numbers of inner knots (columns) with the highlighted minimal NISE for each setting.}
\label{median_NISE_f2}
\end{table}

\begin{table}[h]
\centering
\small
\setlength{\tabcolsep}{4pt}
\begin{tabular}{cc|cccccccccc}
\toprule
 &  & \multicolumn{10}{c}{$g$} \\
\cmidrule(lr){3-12}
$N$ & $K$ & 1 & 2 & 3 & 4 & 5 & 6 & 7 & 8 & 9 & 10 \\
\midrule
3000 & 5 & 17.165& 17.18& 14.688& 11.842& 11.407& 9.026& 4.576& \textbf{0.848} & 3.326& 2.337\\
3000 & 10 & 17.165& 17.178& 14.687& 11.842& 11.406& 9.023& 4.568& \textbf{0.854}& 3.352& 2.357 \\
5000 & 5 & 17.167& 17.177& 14.672& 11.831& 11.26& 8.876& 4.53& \textbf{0.802}& 3.225& 2.166\\
5000 & 10 & 17.167& 17.177& 14.671& 11.832& 11.26& 8.902& 4.535& \textbf{0.801}& 3.225& 2.139 \\
\bottomrule
\end{tabular}
\caption{Median NISE values of $f_3(x)$ using different numbers of sampled values ($N_1 = 3000$, $N_2 = 5000$) and different number of folds used in \eqref{CV_1D} ($K_1 = 5$, $K_2 = 10$) (rows) for different numbers of inner knots (columns) with the highlighted minimal NISE for each setting.}
\label{median_NISE_f3}
\end{table}

\begin{figure}[h]
    \centering
    \includegraphics[width=0.3\textwidth]{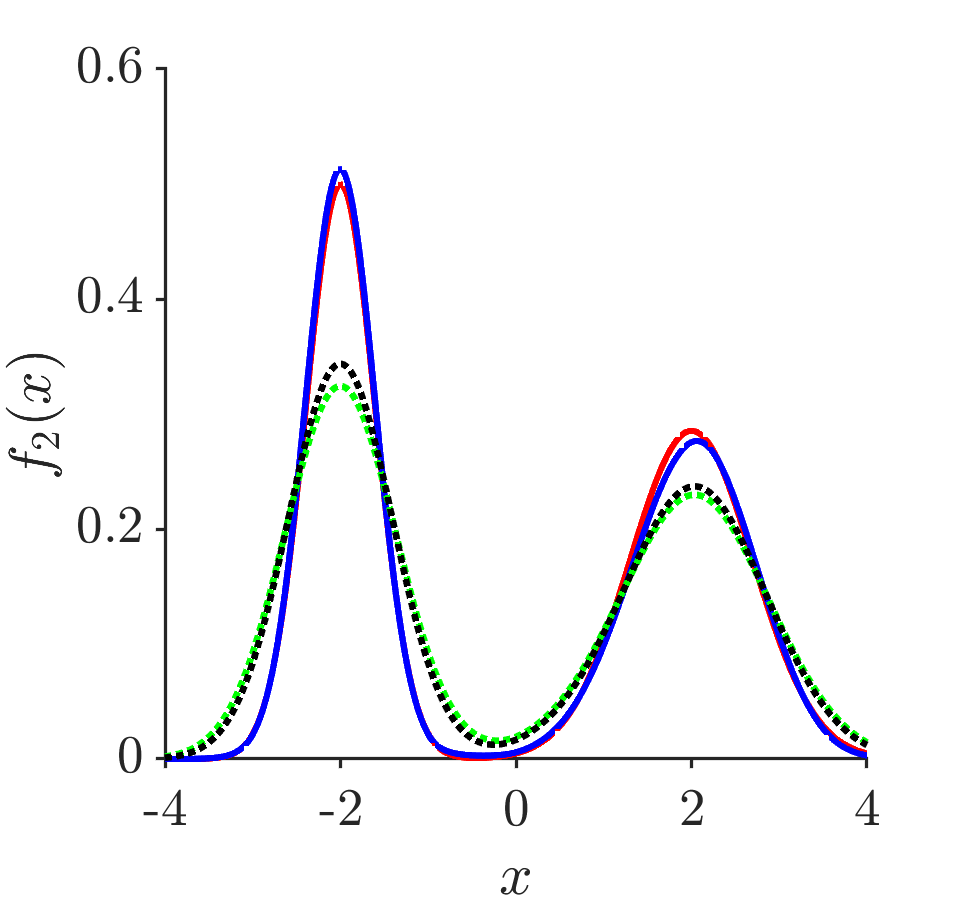}
    \includegraphics[width=0.3\textwidth]{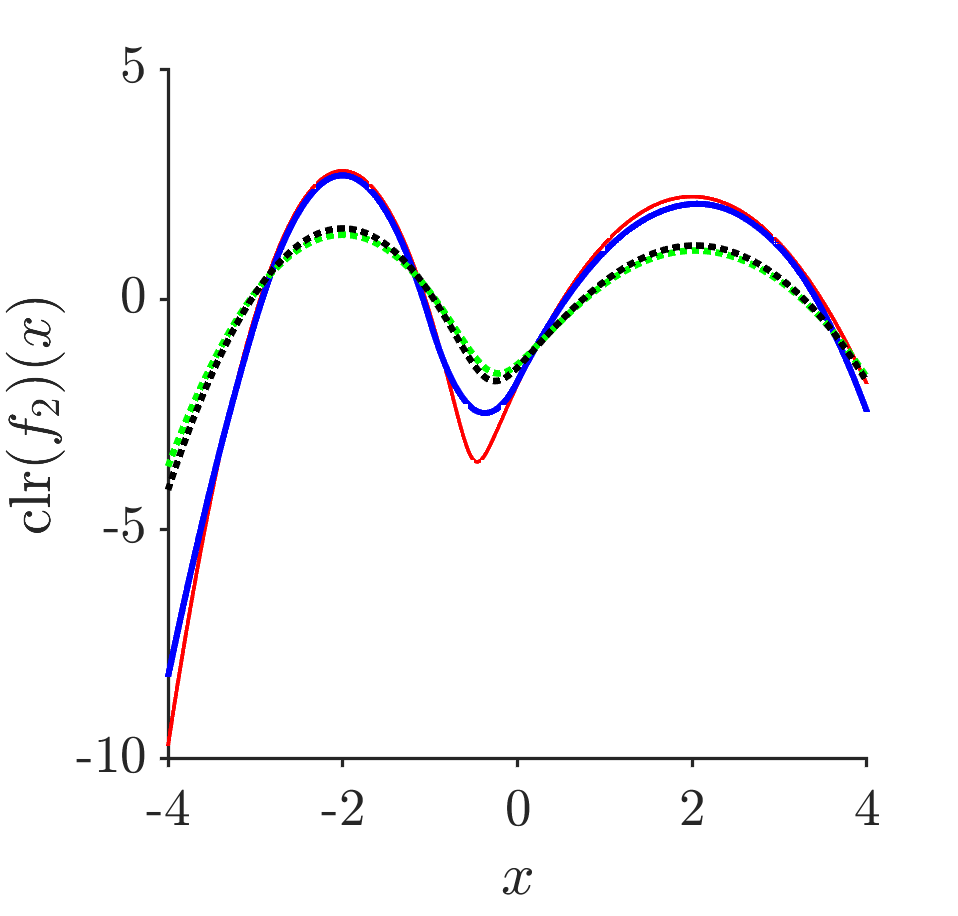}
    \caption{Comparison of the original density $f_2(x)$ (red), the penalized likelihood estimate constructed with $N=3000$ using $g=7$ and $\rho = 0.15$(blue), and the kernel density estimates with $N=3000$ based on Silverman's rule (dotted, green), Scott's rule (dotted, yellow), and the Sheather--Jones method (dotted, black). The estimates based on Scott's rule and the Sheather--Jones method coincide. The unit-integral representatives in $B^2(I)$ are shown in the left panel, and their clr representations in $L_0^2(I)$ are shown in the right panel.}
    \label{final_comp_f2}
\end{figure}

\begin{figure}[h]
    \centering
    \includegraphics[width=0.3\textwidth]{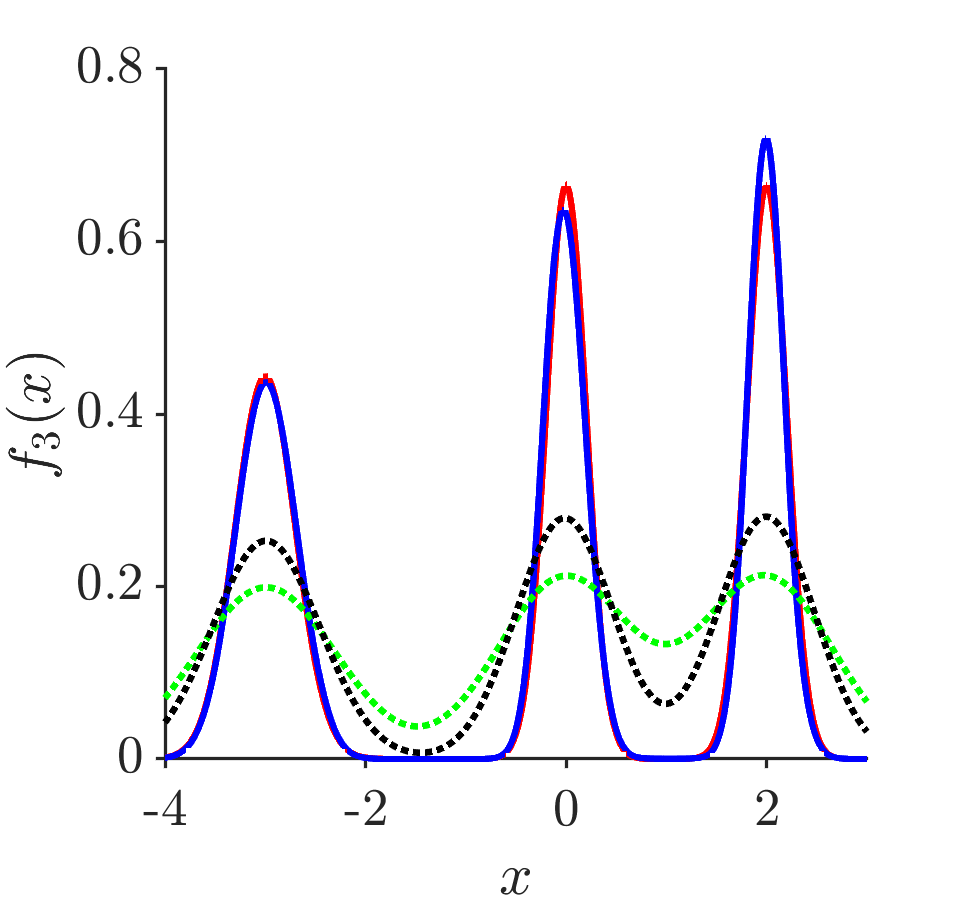}
    \includegraphics[width=0.3\textwidth]{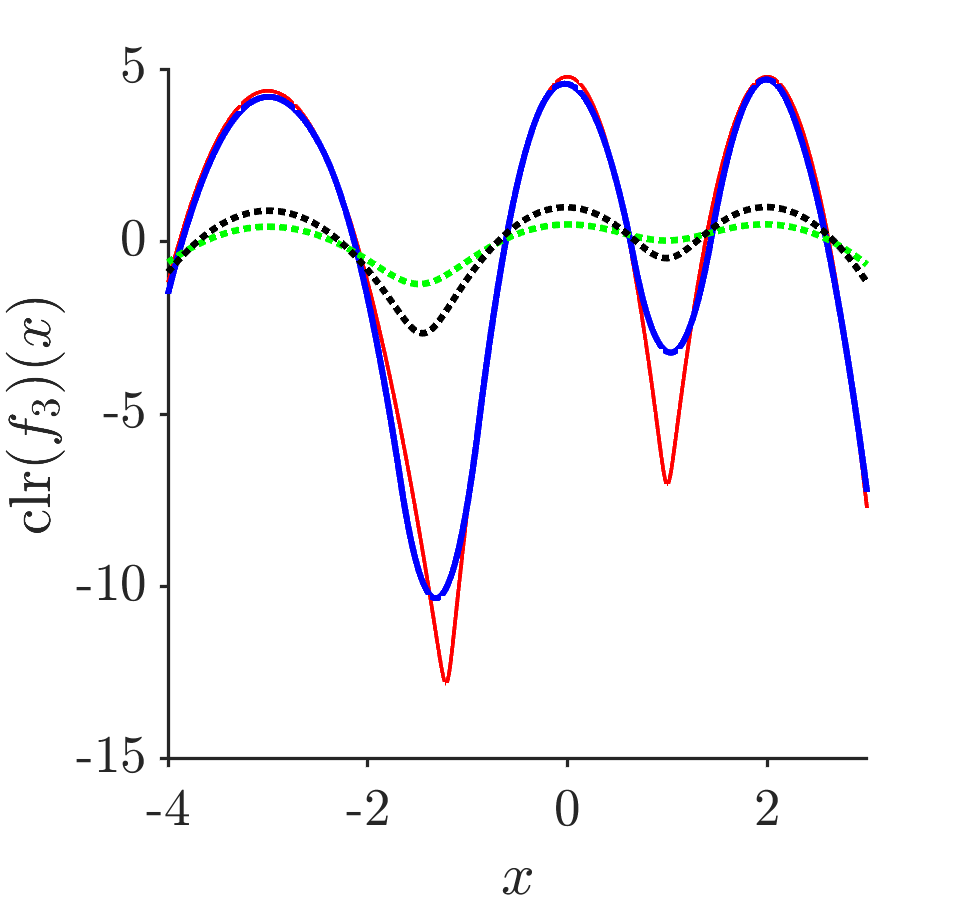}
    \caption{Comparison of the true density $f_3(x)$ (red), the penalized likelihood estimate constructed with $N=3000$ using $g=8$ and $\rho = 0.01$(blue), and the kernel density estimates with $N=3000$ based on Silverman's rule (dotted, green), Scott's rule (dotted, yellow), and the Sheather--Jones method (dotted, black). The estimates based on Scott's rule and the Sheather--Jones method coincide. The unit-integral representatives in $B^2(I)$ are shown in the left panel, and their clr representations in $L_0^2(I)$ are shown in the right panel.}
    \label{final_comp_f3}
\end{figure}

\begin{table}[h]
\centering
\small
\setlength{\tabcolsep}{5pt}
\begin{tabular}{cc|c|cccccccccc}
\toprule
 &  &  & \multicolumn{10}{c}{$g,\ h$} \\
\cmidrule(lr){4-13}
$N$ & $K$ & $\rho$ & 1 & 2 & 3 & 4 & 5 & 6 & 7 & 8 & 9 & 10 \\
\midrule
3000 & 5 & $\rho_x$ & 0.004 & 0.019 & 0.030 & 0.034 & 0.039 & 0.048 & 0.058 & 0.066 & 0.083 & 0.102 \\
3000 & 5 & $\rho_y$ & 0.002 & 0.006 & 0.015 & 0.021 & 0.029 & 0.034 & 0.040 & 0.049 & 0.056 & 0.066 \\
\bottomrule
3000 & 10 & $\rho_x$ & 0.004 & 0.019 & 0.031 & 0.034 & 0.039 & 0.045 & 0.054 & 0.066 & 0.078 & 0.099 \\
3000 & 10 & $\rho_y$ & 0.002 & 0.006 & 0.015 & 0.021 & 0.030 & 0.035 & 0.039 & 0.049 & 0.056 & 0.063 \\
\bottomrule
5000 & 5 & $\rho_x$ & 0.004 & 0.020 & 0.032 & 0.040 & 0.045 & 0.049 & 0.053 & 0.064 & 0.072 & 0.076 \\
5000 & 5 & $\rho_y$ & 0.002 & 0.006 & 0.015 & 0.021 & 0.029 & 0.031 & 0.038 & 0.043 & 0.054 & 0.065 \\
\bottomrule
5000 & 10 & $\rho_x$ & 0.004 & 0.020 & 0.033 & 0.041 & 0.044 & 0.047 & 0.054 & 0.063 & 0.072 & 0.077 \\
5000 & 10 & $\rho_y$ & 0.002 & 0.007 & 0.015 & 0.022 & 0.029 & 0.034 & 0.039 & 0.044 & 0.054 & 0.064 \\
\bottomrule
\end{tabular}
\caption{Mean values of optimal penalization parameters $\rho_x$ and $\rho_y$ for estimating the  density $f_2(x,y)$ using different numbers of sampled values ($N_1=3000$, $N_2=5000$) and different number of folds $K$ used in \eqref{CV_2D} ($K_1=5$, $K_2=10$) (rows) for different number of inner knots $g=h$ (colums).}
\label{opt_rho_f2_2D}
\end{table}

\begin{figure}[h]
    \centering
    \includegraphics[width=0.4\textwidth]{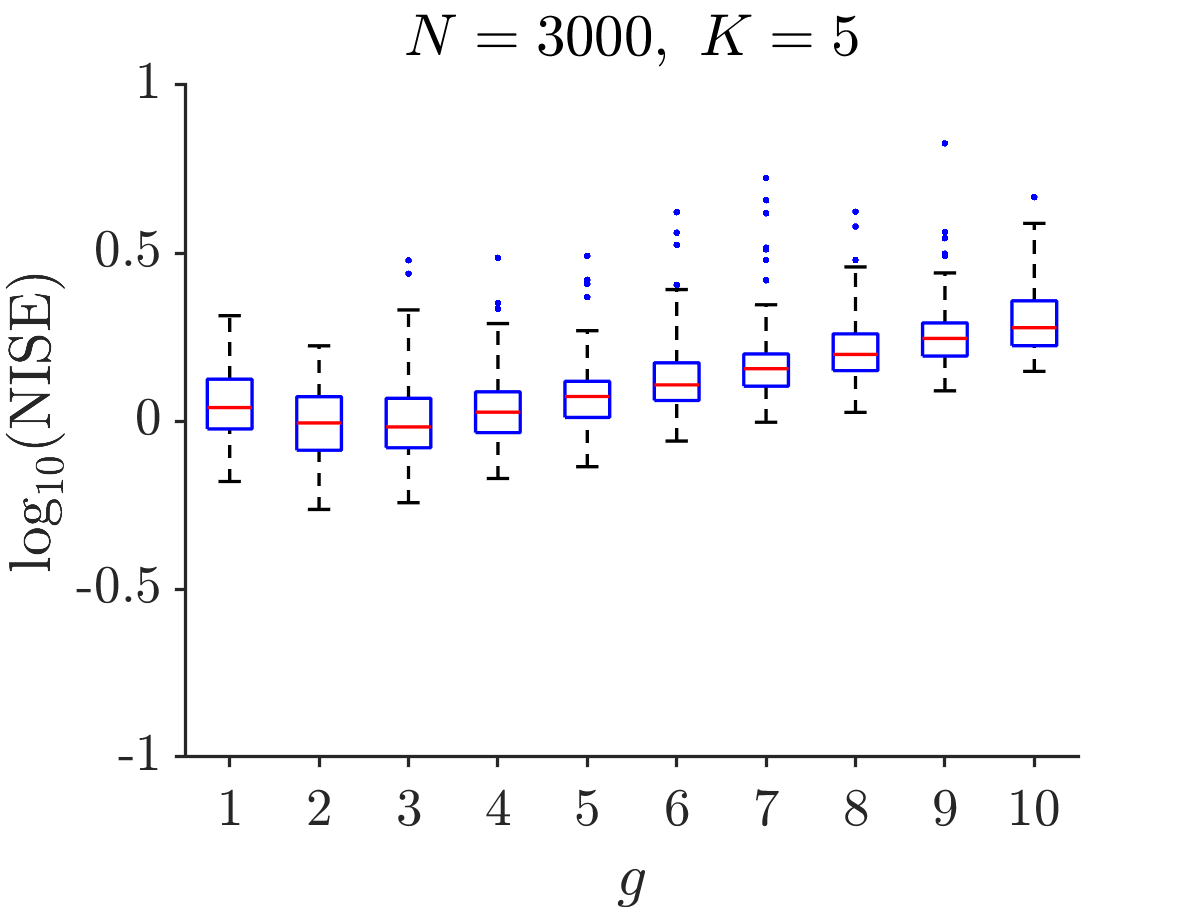}\qquad
    \includegraphics[width=0.4\textwidth]{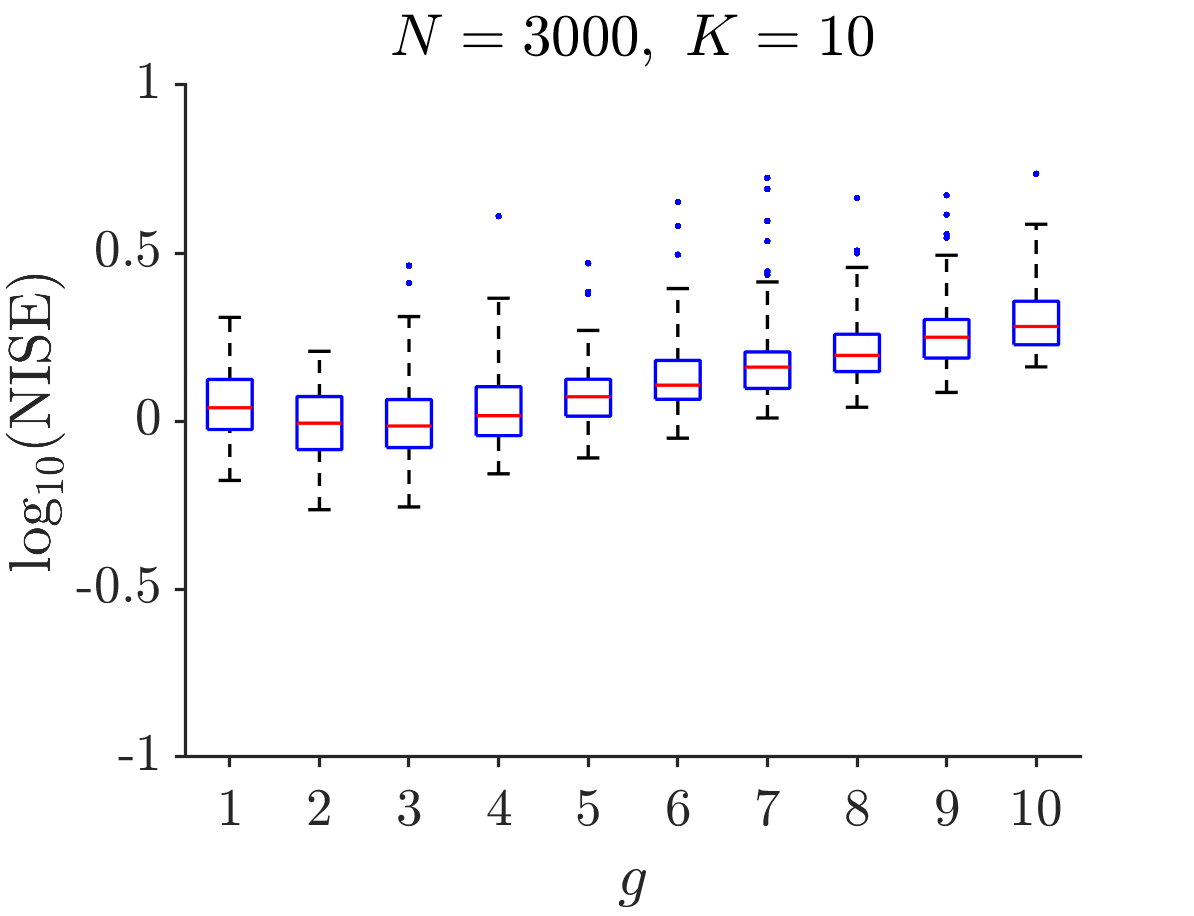}\\[0.5cm]
    \includegraphics[width=0.4\textwidth]{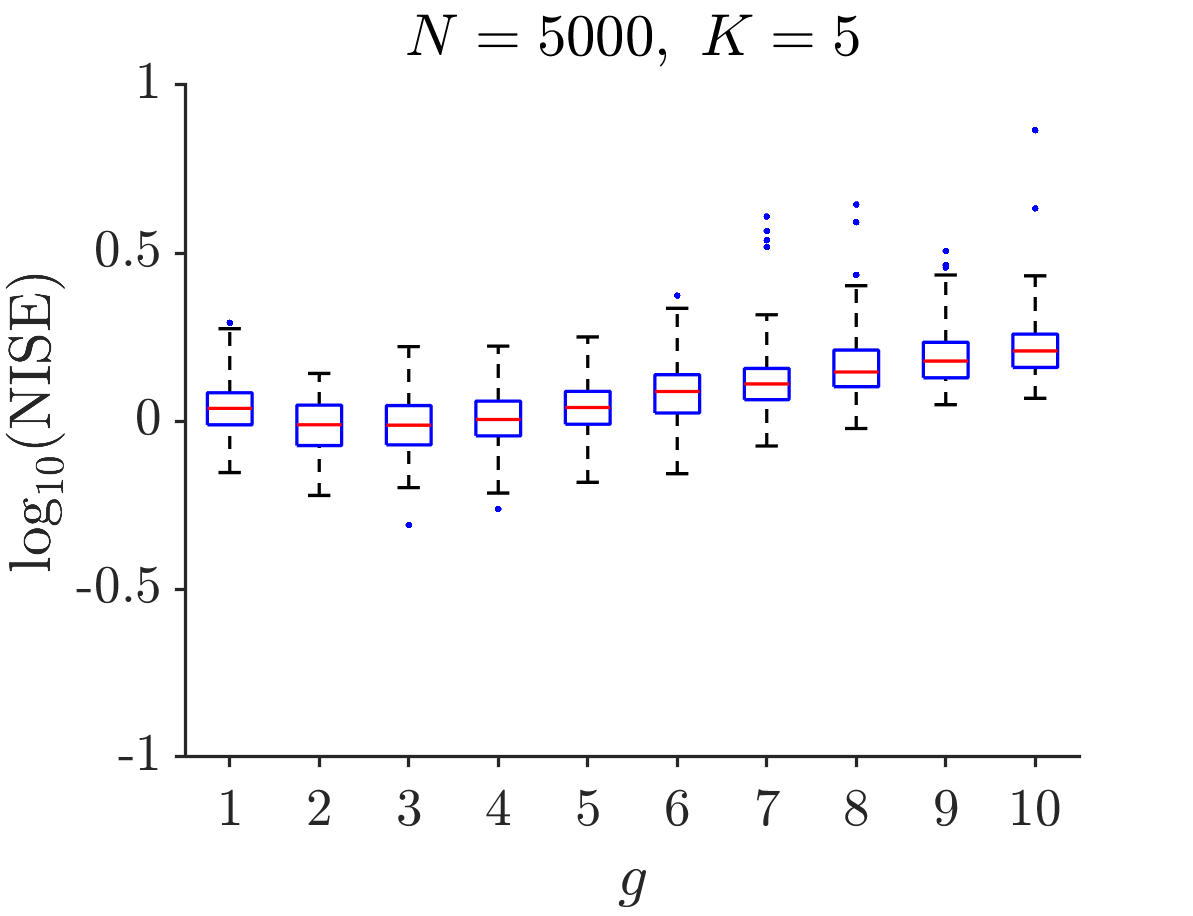}\qquad
    \includegraphics[width=0.4\textwidth]{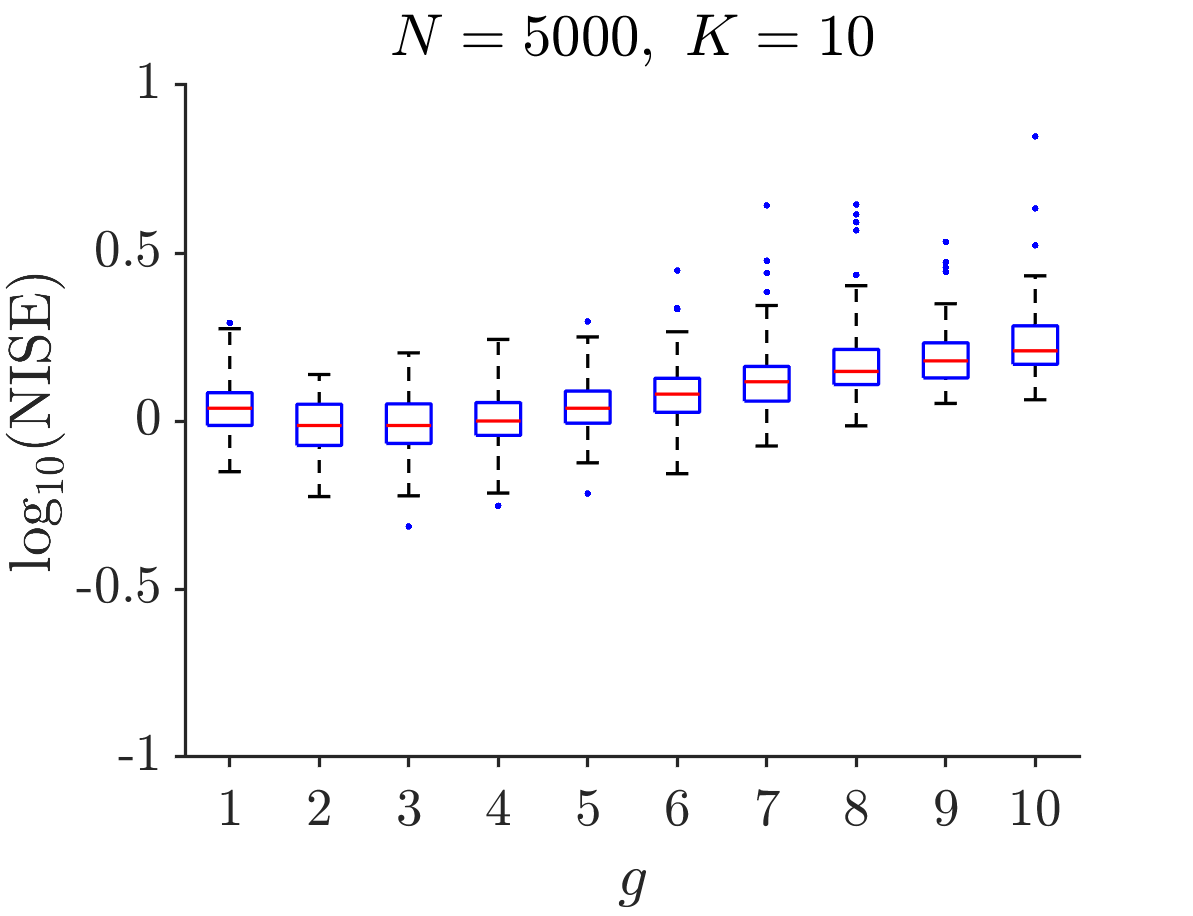}
    \caption{Boxplots of NISE values for the penalized likelihood estimates of $f_2(x,y)$ for different numbers of inner knots $g=h$ and the considered combinations of the sample size $N$ and the number of cross-validation folds $K$.}
    \label{NISE_f2_2D}
\end{figure}

\begin{table}[h]
\centering
\small
\setlength{\tabcolsep}{5pt}
\renewcommand{\arraystretch}{1.15}
\begin{tabular}{cc|cccccccccc}
\toprule
 &  & \multicolumn{10}{c}{$g=h$} \\
\cmidrule(lr){3-12}
$N$ & $K$ &1&2&3&4&5&6&7&8&9&10\\
\midrule
3000&5 
& 1.097& 0.987& \textbf{0.960}& 1.062& 1.183& 1.281& 1.432& 1.578& 1.760& 1.896 \\
3000&10 
& 1.095& 0.985& \textbf{0.966}& 1.037& 1.181& 1.278& 1.447& 1.567& 1.775& 1.911 \\
5000&5 
& 1.090& 0.975& \textbf{0.972}& 1.010& 1.097& 1.224& 1.289& 1.399& 1.508& 1.617 \\
5000&10
& 1.091& 0.9697& \textbf{0.9696}& 1.000&  1.091& 1.202& 1.309& 1.404& 1.510& 1.618 \\
\bottomrule
\end{tabular}
\caption{Median NISE values of $f_2(x,y)$ using different numbers of sampled values ($N_1 = 3000$, $N_2 = 5000$) and different number of folds used in \eqref{CV_1D} ($K_1 = 5$, $K_2 = 10$) (rows) for different numbers of inner knots (columns) with the highlighted minimal NISE for each setting.}
\label{NISE_2D_f2}
\end{table}

\begin{figure}[h]
    \centering
    \includegraphics[width=0.32\textwidth]{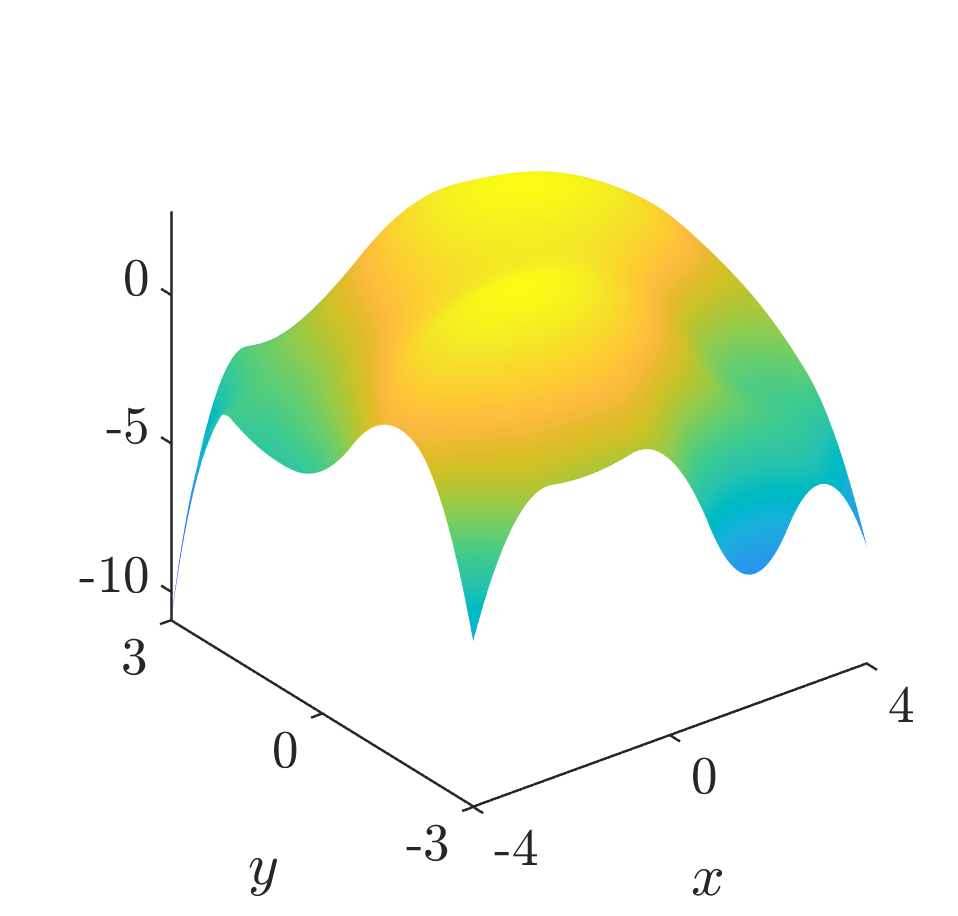}
    \includegraphics[width=0.32\textwidth]{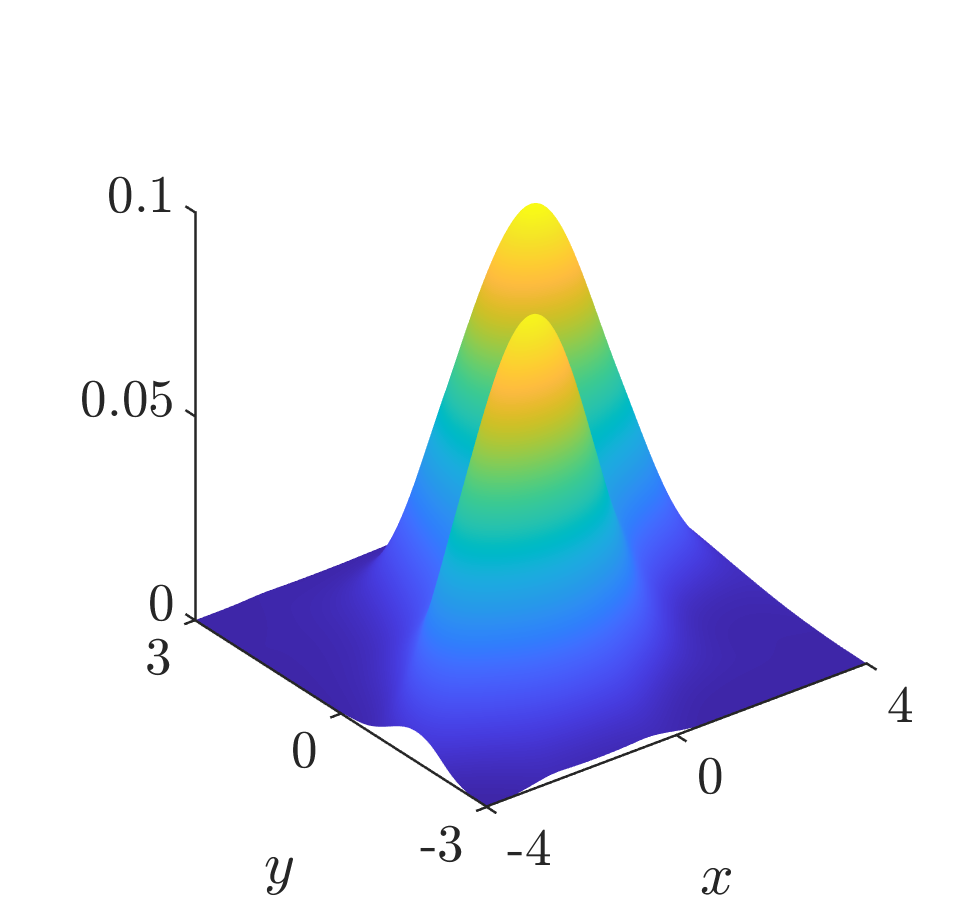}
    \includegraphics[width=0.32\textwidth]{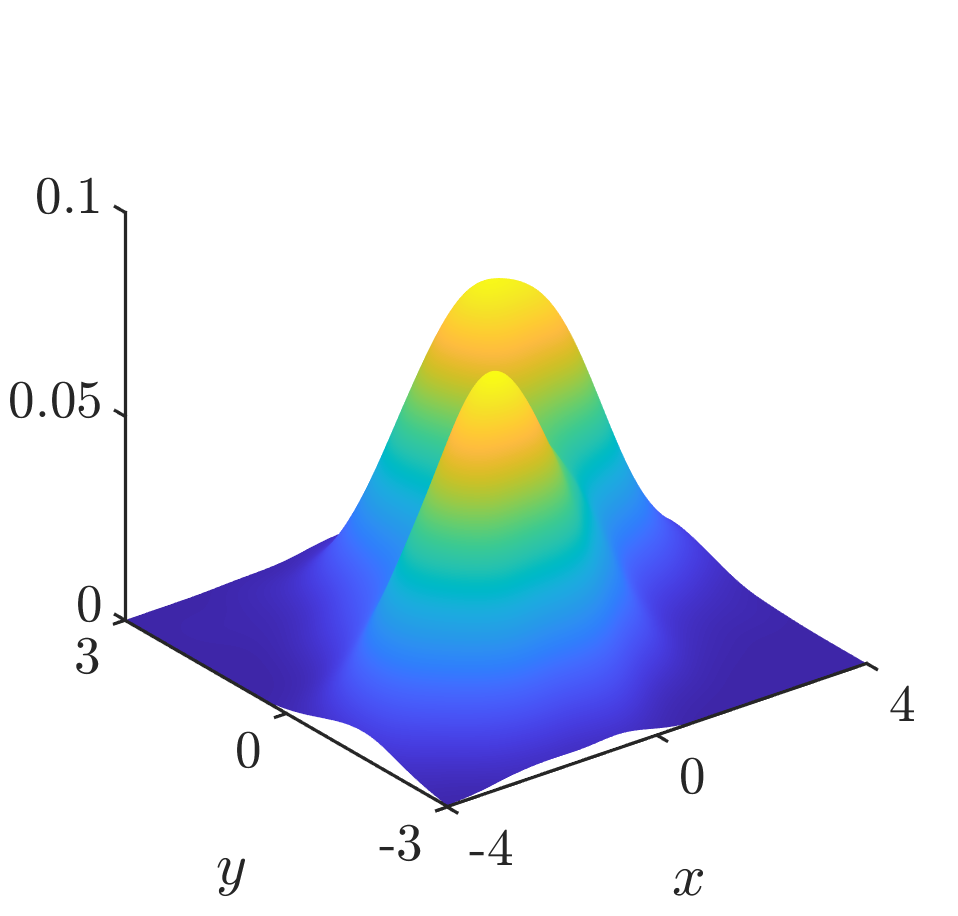}
    \caption{Penalized likelihood estimation of the density $f_2(x,y)$ constructed with $N=3000$ using $g=h=3$ and $\rho_x = 0.03$, $\rho_y=0.015$ in $L^2_0(\Omega)$ (left) and in $\mathcal{B}^2(\Omega)$ (middle), together with the kernel density estimate with $N=3000$ and bandwidth obtained with the Silverman's rule in $\mathcal{B}^2(\Omega)$.}
    \label{final_comp_2D_f2}
\end{figure}

\section{Supplementary materials for Application}
\label{app:application}
In this appendix we include supplementary outputs for Section \ref{Application}.

For univariate lead distribution, Table \ref{application_cv_d=2} reports cross-validated negative log-likelihood values with optimal values of penalization parameter $\rho$ given the considered number of inner knots $g$ and number of folds $K$ for second-order difference used in penalized maximum likelihood estimation. Table \ref{application_cv_kde_d=2} gives information about the cross-validated negative log-likelihood values of kernel smoothing with different bandwidth selection.

For bivariate lead and zinc distribution, Tables \ref{application_cv_2D_d2_K5} and \ref{application_cv_2D_d2_K10} give information about cross-validated negative log-likelihood values and optimal values of penalization parameters $(\rho_x,\rho_y)$ with second-order differences used in penalized maximum likelihood estimation and different number of inner knots $g$ and number of folds $K$.

\begin{table}[h]
\centering
\small
\setlength{\tabcolsep}{4pt}
\renewcommand{\arraystretch}{1.15}

\begin{tabular}{c|cccc|cccc}
\toprule
& \multicolumn{4}{c|}{$K=5$} 
& \multicolumn{4}{c}{$K=10$} \\

\cmidrule(lr){2-5}
\cmidrule(lr){6-9}

$g$
& Mean CV & Med CV & Mean $\rho$ & Med $\rho$
& Mean CV & Med CV & Mean $\rho$ & Med $\rho$ \\

\midrule
1  & 43.758 & 38.018 & 0.078 & 0.009 & 23.291 & 19.355 & 0.025 & 0.009 \\
2  & 39.524 & 36.285 & 0.019 & 0.007 & 19.758 & 17.671 & 0.028 & 0.007 \\
3  & 39.019 & 34.044 & 0.037 & 0.022 & 19.505 & 16.634 & 0.047 & 0.022 \\
4  & 38.922 & 34.203 & 0.082 & 0.045 & 19.457 & 16.865 & 0.087 & 0.057 \\
5  & 38.753 & 34.235 & 0.176 & 0.115 & 19.378 & 16.753 & 0.203 & 0.115 \\
6  & 38.603 & 33.669 & 0.3   & 0.233 & 19.291 & 16.522 & 0.363 & 0.233 \\
7  & 38.493 & 34.176 & 0.516 & 0.373 & 19.227 & 16.731 & 0.625 & 0.471 \\
8  & 38.523 & 34.182 & 0.81  & 0.596 & 19.251 & 16.75  & 0.95 & 0.596 \\
9  & 38.376 & \textbf{33.563} & 1.059 & 0.754 & 19.173 & \textbf{16.485} & 1.219 & 0.954 \\
10 & \textbf{38.339} & 33.966 & 1.474 & 0.954 & \textbf{19.158} & 16.674 & 1.732 & 1.207 \\

\bottomrule
\end{tabular}

\caption{
Cross-validated negative log-likelihood values and corresponding optimal penalization parameters for the proposed penalized likelihood estimator for $d=2$ with different numbers of inner knots $g$ and different numbers of folds $K$ with highlighted minimal mean and median CV values.
}

\label{application_cv_d=2}

\end{table}

\begin{table}[h]
\centering
\small
\setlength{\tabcolsep}{7pt}
\renewcommand{\arraystretch}{1.15}

\begin{tabular}{l|cc|cc}
\toprule
& \multicolumn{2}{c|}{$K=5$} 
& \multicolumn{2}{c}{$K=10$} \\

\cmidrule(lr){2-3}
\cmidrule(lr){4-5}
Method 
& Mean CV & Median CV
& Mean CV & Median CV \\
\midrule
Silverman & 55.727 & 41.721 & 26.307 & 19.892 \\
Scott     & \textbf{50.051} & \textbf{38.098} & 24.331 & \textbf{19.483} \\
Sheater-Jones        & 50.272 & 39.808 & \textbf{23.995} & 19.885 \\
\bottomrule
\end{tabular}
\caption{
Cross-validated negative log-likelihood values of kernel density estimators with different bandwidth selection rules with highlighter minimal mean and median CV values.
}
\label{application_cv_kde_d=2}
\end{table}

\begin{table}[h]
\centering
\small
\setlength{\tabcolsep}{4pt}
\renewcommand{\arraystretch}{1.15}

\begin{tabular}{c|cccccc}
\toprule

& \multicolumn{6}{c}{$\mathrm{d}_x=\mathrm{d}_y=2,\ K=5$} \\

\cmidrule(lr){2-7}

$g=h$
& Mean CV & Med CV
& Mean $\rho_x$ & Med $\rho_x$
& Mean $\rho_y$ & Med $\rho_y$ \\

\midrule

1 & 92.218 & 79.705 & 0.336 & 0.001 & 0.018 & 0.002 \\
2 & 76.013 & \textbf{66.902} & 0.014 & 0.002 & 0.025 & 0.007 \\
3 & 75.410 & 66.961 & 0.025 & 0.013 & 0.058 & 0.043 \\
4 & 74.810 & 68.189 & 0.052 & 0.023 & 0.114 & 0.078 \\
5 & 74.512 & 67.399 & 0.120 & 0.078 & 0.232 & 0.144 \\
6 & 74.384 & 67.764 & 0.234 & 0.144 & 0.437 & 0.264 \\
7 & 74.055 & 68.064 & 0.459 & 0.264 & 0.697 & 0.483 \\
8 & 74.196 & 69.294 & 0.642 & 0.483 & 1.140 & 0.886 \\
9 & 74.018 & 69.701 & 1.004 & 0.483 & 1.786 & 0.886 \\
10 & \textbf{73.943} & 70.360 & 1.721 & 0.886 & 2.444 & 1.624 \\

\bottomrule
\end{tabular}

\caption{
Cross-validated negative log-likelihood values together with the corresponding optimal penalization parameters $\rho_x$ and $\rho_y$ of the proposed penalized likelihood estimator for different numbers of inner knots $g=h$ using 5-fold cross-validation with $\mathrm{d}_x=\mathrm{d}_y=2$. The smallest mean and median cross-validation values are highlighted in bold.
}

\label{application_cv_2D_d2_K5}

\end{table}

\begin{table}[h]
\centering
\small
\setlength{\tabcolsep}{4pt}
\renewcommand{\arraystretch}{1.15}

\begin{tabular}{c|cccccc}
\toprule
& \multicolumn{6}{c}{$\mathrm{d}_x=\mathrm{d}_y=2,\ K=10$} \\
\cmidrule(lr){2-7}
$g=h$
& Mean CV & Med CV
& Mean $\rho_x$ & Med $\rho_x$
& Mean $\rho_y$ & Med $\rho_y$ \\
\midrule
1 & 43.714 & 37.986 & 0.267 & 0.001 & 0.150 & 0.001 \\
2 & 39.808 & 36.606 & 0.014 & 0.002 & 0.019 & 0.007 \\
3 & 39.460 & 36.883 & 0.021 & 0.007 & 0.052 & 0.023 \\
4 & 39.156 & 35.987 & 0.048 & 0.023 & 0.114 & 0.078 \\
5 & 39.100 & 36.056 & 0.125 & 0.078 & 0.241 & 0.144 \\
6 & 38.960 & 35.904 & 0.247 & 0.144 & 0.460 & 0.264 \\
7 & 38.724 & 34.991 & 0.398 & 0.264 & 0.837 & 0.483 \\
8 & 38.751 & 35.160 & 0.671 & 0.483 & 1.259 & 0.886 \\
9 & 38.637 & \textbf{34.941} & 1.029 & 0.483 & 1.745 & 1.624 \\
10 & \textbf{38.591} & 35.426 & 1.472 & 0.886 & 2.458 & 1.624 \\
\bottomrule
\end{tabular}
\caption{
Cross-validated negative log-likelihood values together with the corresponding optimal penalization parameters $\rho_x$ and $\rho_y$ of the proposed penalized likelihood estimator for different numbers of inner knots $g=h$ using 10-fold cross-validation with $\mathrm{d}_x=\mathrm{d}_y=2$. The smallest mean and median cross-validation values are highlighted in bold.
}
\label{application_cv_2D_d2_K10}

\end{table}

\end{appendices}

\bibliography{bibliography}

\end{document}